\documentclass[lettersize,journal]{IEEEtran}
\usepackage{amsmath,amssymb,amsfonts}
\usepackage{xr}
\externaldocument{Supplemental-Material}
\usepackage{pifont}
\usepackage{algorithmicx}
\usepackage[noend,ruled, linesnumbered, vlined]{algorithm2e}
\usepackage{graphicx}
\usepackage{epstopdf}
\usepackage{textcomp}
\usepackage{xcolor}
\usepackage{bm}
\usepackage{array}
\usepackage{ntheorem}
\usepackage{subfigure} 
\usepackage{cases}
\usepackage{stfloats}
\usepackage{url}
\usepackage{verbatim}
\usepackage{algpseudocode}  
\newenvironment{sequation}{\begin{equation}\small}{\end{equation}}

\newenvironment{proof}{{\noindent\it Proof. }}{\hfill $\blacksquare$\par}
\usepackage{enumitem}[topsep=3pt,itemsep=3pt]

\setlength{\abovedisplayskip}{3pt}
\setlength{\belowdisplayskip}{3pt}
\setlength{\abovedisplayshortskip}{3pt}
\setlength{\belowdisplayshortskip}{3pt}

\allowdisplaybreaks[4]
\def\BibTeX{{\rm B\kern-.05em{\sc i\kern-.025em b}\kern-.08em
    T\kern-.1667em\lower.7ex\hbox{E}\kern-.125emX}}
\setlength{\floatsep}{1pt}
\setlength{\textfloatsep}{1pt}

\definecolor{b}{rgb}{0.0, 0, 1}
\definecolor{r}{rgb}{1, 0, 0}
\ifCLASSOPTIONcompsoc
\usepackage[nocompress]{cite}
\else
\usepackage{cite}
\fi

\hyphenation{op-tical net-works semi-conduc-tor}

\theoremseparator{.}
\newskip\theorempreskipamount
\newskip\theorempostskipamount
\newtheorem{theorem}{Theorem}
\newtheorem{definition}{Definition}
\newtheorem{lemma}{Lemma}
\theorembodyfont{}

\begin{document}

\title{QoE Maximization for Multiple-UAV-Assisted Multi-Access Edge Computing via An Online Joint Optimization Approach}

\author{
    Long~He,
    Geng~Sun,~\IEEEmembership{Senior Member,~IEEE},
    Zemin Sun, 
    Qingqing~Wu,~\IEEEmembership{Senior Member,~IEEE}, 
    Jiawen~Kang,~\IEEEmembership{Senior Member,~IEEE},
    Dusit Niyato,~\IEEEmembership{Fellow,~IEEE}, 
    Zhu~Han,~\IEEEmembership{Fellow,~IEEE}, and \\
    Victor C. M. Leung,~\IEEEmembership{Life Fellow,~IEEE}
  \thanks{This study is supported in part by the National Natural Science Foundation of China (62272194, 62471200), in part by the Science and Technology Development Plan Project of Jilin Province (20250101027JJ), in part by the U.S. National Science Foundation (NSF) (ECCS-2302469, CMMI-2222810), in part by Toyota, in part by Amazon, and in part by Japan Science and Technology Agency (JST) Adopting Sustainable Partnerships for Innovative Research Ecosystem (ASPIRE) (JPMJAP2326). (\textit{Corresponding author: Geng Sun and Zemin Sun.)}}
	\IEEEcompsocitemizethanks{    
 \IEEEcompsocthanksitem Long He, and Zemin Sun are with the College of Computer Science and Technology, Jilin University, Changchun 130012, China (e-mail: helong23@mails.jlu.edu.cn, sunzemin@jlu.edu.cn).
 	\IEEEcompsocthanksitem Geng Sun is with the College of Computer Science and Technology, Key Laboratory of Symbolic Computation and Knowledge Engineering of Ministry of Education, Jilin University, Changchun 130012, China, and also with the College of Computing and Data Science, Nanyang Technological University, Singapore 639798 (e-mail: sungeng@jlu.edu.cn).
    \IEEEcompsocthanksitem Qingqing Wu is with the Department of Electronic Engineering, Shanghai Jiao Tong University, Shanghai, China (e-mail: qingqingwu@sjtu.edu.cn). 
    \IEEEcompsocthanksitem Jiawen Kang is with the School of Automation, Guangdong University of Technology, Guangzhou 510006, China (e-mail: kjwx886@163.com).
    \IEEEcompsocthanksitem Dusit Niyato is with the College of Computing and Data Science, Nanyang Technological University, Singapore 639798 (e-mail: dniyato@ntu.edu.sg).
    \IEEEcompsocthanksitem Zhu Han is with the Department of Electrical and Computer Engineering at the University of Houston, Houston TX 77004, USA, and also with the Department of Computer Science and Engineering, Kyung Hee University, Seoul 446701, South Korea (e-mail: hanzhu22@gmail.com).
    \IEEEcompsocthanksitem Victor C. M. Leung is with the Artificial Intelligence Research Institute, Shenzhen MSU-BIT University, Shenzhen 518115, China, with the College of Computer Science and Software Engineering, Shenzhen University, Shenzhen 518060, China, and also with the Department of Electrical and Computer Engineering, The University of British Columbia, Vancouver V6T 1Z4, Canada (e-mail: vleung@ieee.org).}
	\thanks{A small part of this paper was published by IEEE INFOCOM 2024~\cite{he2024}.}}

\IEEEtitleabstractindextext{

%
%

\begin{abstract}
\par In disaster scenarios, conventional terrestrial multi-access edge computing (MEC) paradigms, which rely on ground infrastructure, may become unavailable due to infrastructure damage. With high-probability line-of-sight (LoS) communication, flexible mobility, and low cost, unmanned aerial vehicle (UAV)-assisted MEC is emerging as a promising paradigm to provide edge computing services for ground user devices (UDs) in disaster-stricken areas. However, the limited battery capacity, computing resources, and spectrum resources also pose serious challenges for UAV-assisted MEC, which can potentially shorten the service time of UAVs and degrade the quality of experience (QoE) of UDs without an effective control approach. To this end, in this work, we first present a hierarchical architecture of multiple-UAV-assisted MEC networks that enables the coordinated provision of edge computing services by multiple UAVs. Then, we formulate a joint task offloading, resource allocation, and UAV trajectory control optimization problem (JTRTOP) to maximize the QoE of UDs while considering the energy and resource constraints of UAVs. Since the problem is proven to be a future-dependent and NP-hard problem, we propose a novel online joint task offloading, resource allocation, and UAV trajectory control approach (OJTRTA) to solve the problem. Specifically, the JTRTOP is first transformed into a per-slot real-time optimization problem (PROP) using the Lyapunov optimization framework. Then, a two-stage optimization method based on game theory and convex optimization is proposed to solve the PROP. Simulation results show that the proposed OJTRTA outperforms various benchmark approaches and achieves at least a 10\% improvement in the QoE of UDs compared to deep reinforcement learning (DRL)-based algorithms, thereby validating the superiority of the proposed approach.
\end{abstract}

\begin{IEEEkeywords}
	UAV-assisted MEC, task offloading, resource allocation, UAV trajectory control, game theory.
\end{IEEEkeywords}}

\maketitle
\IEEEdisplaynontitleabstractindextext
\IEEEpeerreviewmaketitle

%
%

\section{Introduction}
\label{sec:Introduction}

\par \IEEEPARstart{W}{ith} the development of wireless communication technologies and artificial intelligence, many intelligent applications with strict requirements on computing resources and latency have emerged rapidly, such as real-time video analysis, virtual and augmented realities, and interactive online games~\cite{shi2016edge}. However, the limited computing capabilities and battery capacity of user devices (UDs) make it difficult to maintain high-level quality of experience (QoE) for these intelligent applications~\cite{Hekmati}. To overcome this challenge, multi-access edge computing (MEC) has emerged as a promising paradigm to offer reliable edge computing services by deploying MEC servers at the network edge~\cite{Mao2017}. Specifically, UDs can offload computation-hungry and latency-sensitive tasks to nearby MEC servers for execution, which can effectively reduce task completion delay and energy consumption, thereby enhancing the QoE of UDs. However, conventional terrestrial MEC heavily relies on ground infrastructure, which can become unavailable in the event of a disaster where ground infrastructure may be damaged.

\par The limitations of conventional terrestrial MEC have prompted a paradigm shift toward unmanned aerial vehicle (UAV)-assisted MEC due to the high probability line-of-sight (LoS) communication, flexible mobility, and low cost~\cite{Li2023TMC}. Specifically, the high probability of LoS links for UAVs enhances the communication coverage, network capacity, and reliable connectivity. Moreover, the flexible mobility of UAVs enables rapid and on-demand deployment, which is crucial in disaster scenarios. Finally, the low cost makes the UAV-assisted MEC system feasible and scalable. Therefore, UAV-assisted MEC holds tremendous potential for providing edge computing services to UDs in disaster-stricken areas.

\par However, several fundamental challenges should be addressed to fully exploit the advantages of UAV-assisted MEC. \textbf{\textit{i) Resource Allocation.}} Various delay-sensitive and computation-intensive tasks of UDs impose stringent demands on both computing and communication resources, while also exhibiting heterogeneous and time-varying characteristics. However, the limited computing and spectrum resources of UAVs leads to the competition for resources \cite{Sun2025,10418937}. Consequently, it is challenging to fully utilize the limited resources of UAVs to meet the heterogeneous and stringent demands of UDs. \textbf{\textit{ii) Task Offloading.}} Due to resource competition, the offloading decision of each UD is influenced not only by its own offloading requirements but also by the offloading decisions of other UDs, resulting in coupled and complex offloading decision-making among UDs \cite{Gao2024,bai2022delay}. \textbf{\textit{iii) UAV Trajectory Control.}} While the mobility of UAVs augments the flexibility and elasticity of edge computing services, the limited battery capacity of UAVs inherently restricts the service time \cite{Nguyen2024,zhou2022two}. Therefore, the trajectory of UAVs needs to be effectively controlled to balance the service time of UAVs and the QoE of UDs, which presents an additional challenge.

\par The aforementioned challenges require an efficient joint optimization of resource allocation, task offloading, and UAV trajectory control. However, focusing on just one aspect of these components is insufficient due to the following reasons. On the one hand, due to the limited resources and energy of UAVs, these optimization variables are mutually coupled. For example, the positions of UAVs affect the transmission rate of tasks and the resource allocation influences the completion delay of tasks. Therefore, optimizing task offloading requires simultaneous consideration of resource allocation and the positions of UAVs. On the other hand, these optimization variables collectively determine the QoE of UDs. Therefore, these interconnected optimization variables should be jointly optimized to fully exploit the UAV-assisted MEC system performance, as it can effectively capture the intricate and coupling interactions and trade-offs among various optimization components. To this end, we present a novel online joint optimization approach for task offloading, resource allocation, and UAV trajectory control to maximize the QoE of UDs while adhering to the resource and energy constraints of UAVs. The key contributions of this work are outlined as follows:

\begin{itemize}
\item \textbf{\textit{System Architecture.}} We propose a hierarchical architecture for multiple-UAV-assisted MEC networks to efficiently coordinate multiple UAVs in providing aerial edge computing services for UDs. Furthermore, we take into account the dynamic mobility and time-varying computational demands of UDs to accurately capture the features of real-world edge computing scenarios within the proposed system.

\item \textbf{\textit{Problem Formulation.}} We formulate a joint task offloading, resource allocation, and UAV trajectory control optimization problem (JTRTOP), with the aim of maximizing the QoE of UDs under the energy and resource constraints of UAVs. Specifically, the QoE of UDs is theoretically quantified by synthesizing the UD energy consumption and task completion delay. Moreover, the optimization problem is proved to be future-dependent and NP-hard.
	
\item \textbf{\textit{Algorithm Design.}} To solve the JTRTOP, we propose a novel online joint task offloading, resource allocation, and UAV trajectory control approach (OJTRTA). Specifically, we first transform the JTRTOP into a per-slot real-time optimization problem (PROP) by using the Lyapunov optimization framework. Then, we propose a two-stage optimization method to jointly optimize task offloading, resource allocation, and UAV trajectory control within the PROP using convex optimization and game theory.

\item \textbf{\textit{Validation.}} The effectiveness of the proposed OJTRTA is validated through both theoretical analysis and simulation experiments. In particular, the theoretical analysis establishes that the proposed OJTRTA not only satisfies the energy consumption constraints of UAVs, but also has low computational complexity. Additionally, the simulation results affirm the superiority of OJTRTA.	
\end{itemize}

\par The subsequent sections of this work are structured as follows. In Section \ref{sec:Related Work}, we review and summarize the related work. Section \ref{sec:System Model and problem Formulation} details the relevant system models and problem formulation. Section \ref{sec:Lyapunov-Based problem Transformation} describes the Lyapunov-based problem transformation. Section \ref{sec:Two-Stage Optimization Algorithm} presents the two-stage optimization method and theoretical analysis. Section \ref{sec:Simulation Results and Analysis} presents and discusses the simulation results. Finally, this work is concluded in Section \ref{sec:Conclusion}.

%
%

\section{Related Work}
\label{sec:Related Work}

\par In this section, we provide a comprehensive review of the relevant studies pertaining to UAV-assisted MEC architecture, formulation of joint optimization problems, and online optimization approach. Moreover, we outline the distinctions between this work and the related works in Table I of the supplementary material.

\subsection{UAV-assisted MEC Architecture}
\label{subsec:UAV-enabled MEC Architecture}

\par UAV-assisted MEC is a promising paradigm to dynamically expand the computational capabilities of MEC networks and facilitating emergency scenarios. From the perspective of system architecture, existing studies can be broadly classified into single UAV-assisted MEC architecture and multiple-UAV-assisted MEC architecture.

\par For single UAV-assisted MEC networks, Lin et al.~\cite{lin2023pddqnlp} explored the maximization of UAV energy efficiency while simultaneously considering fairness in task offloading within a single UAV-assisted MEC network. Yu et al.~\cite{parameter_k} presented a single UAV-assisted MEC network, where a UAV is deployed to provide MEC services to Internet of Things (IoT) devices in areas that are inaccessible to the edge cloud due to ground signal blockage or shadowing. Pang et al.~\cite{Pang2021} proposed a novel mmWave-enabled non-orthogonal multiple access (NOMA) communication scheme, which significantly enhances the energy efficiency of UAV networks. However, a single UAV possesses insufficient coverage range and resources, making it challenging to provide effective services in large-scale areas or areas with a high density of users.

\par Multiple-UAV-assisted MEC networks, which harness the computational and communication capabilities of multiple UAVs, have been increasingly gaining attention. For example, Li et al.~\cite{li2023robust} investigated a multiple-UAV-assisted MEC system and devised a robust approach for optimizing computation offloading and trajectory control. Shidrokh et al.~\cite{goudarzi2023uav} conducted a study on resource allocation and task offloading in a multiple-UAV-assisted MEC network, and proposed a two-layer collaborative evolution model to reduce energy consumption and task computation time. Bai et al.~\cite{bai2022delay} proposed a multiple-UAV-enabled edge-cloud computing system that leverages the collaboration between UAVs and remote clouds to deliver exceptional computational capabilities. 

\par Although multiple-UAV architectures show more significant advantages compared to single-UAV architectures, effective control is required for the collaboration among multiple UAVs, which has not been extensively studied yet. In this paper, we propose a hierarchical multiple-UAV-assisted MEC architecture to address the shortcomings of existing research.

\subsection{Formulation of Joint Optimization Problems}
\label{subsec:Joint Task Offloading, Resource Allocation and Trajectory Planning}

\par The formulation of a joint optimization problem is crucial for enhancing the performance of UAV-assisted MEC networks due to the constraints imposed by multi-dimensional resources. Next, we will provide an overview of existing research from two perspectives, i.e., optimization objectives and optimization variables, while highlighting the novelty of this work.

\subsubsection{Optimization Objectives}
\label{subsubsec:Optimization Objectives}

\par Several studies focus on minimizing task completion latency. For instance, Luan et al.~\cite{LuanCZL22} presented a joint optimization problem for topology reconstruction and subtask scheduling to minimize the average completion time of subtasks in a UAV-assisted MEC system. Deng et al.~\cite{DengFW23} investigated the minimization of task service latency in air-ground integrated wireless networks, while considering constraints on learning accuracy and energy consumption. 

\par Several studies focus on minimizing the energy consumption. For example, in the context of a multiple-UAV-assisted two-stage MEC system, Ei et al.~\cite{EiATHH22} aimed to minimize the energy consumption of both mobile devices and UAVs under constraints of task tolerance latency and available resources. Liu et al.~\cite{LiuWZWH22} conducted a study on a multi-input single-output UAV-assisted MEC network and presented a system energy minimization problem. 

\par There are also the studies that consider the combination of task completion latency and energy consumption as the optimization objective. For example, Tong et al.~\cite{TongLMCZW23} introduced a UAV-enabled multi-hop cooperative paradigm and formulated an optimization problem to maximize the system utility. The system utility was constructed by jointly considering UAV energy consumption and task completion latency. Hao et al.~\cite{10381761} explored the problem of maximizing the long-term average system gain in a UAV-assisted MEC system, where the system gain was theoretically constructed by combining task completion latency and system energy consumption.

\par This work differs from the aforementioned studies in the following aspects. First, we considers a more comprehensive set of performance metrics, including task completion latency, UD energy consumption, and UAV energy consumption, which more accurately reflect the practical characteristics of UAV-assisted MEC networks. Furthermore, we combine task completion latency and UD energy consumption into a UD cost function as the optimization objective, while UAV energy consumption is treated as a constraint. This formulation can simultaneously enhance the QoE of UDs and ensure reliable service time for UAVs.

\subsubsection{Optimization Variables}
\label{subsubsec:Optimization Variables}

\par Researchers have extensively explored various aspects of UAV-assisted MEC systems, with particular emphasis on resource allocation, task offloading, and UAV trajectory control. For example, Chen et al.~\cite{ChenBHZLSZL22} conducted research on task offloading for IoT nodes and trajectory control for multiple UAVs to reduce UAV energy consumption. Song et al.~\cite{song2022evolutionary} focused on investigating task offloading and UAV trajectory control problems to improve the performance of UAV-assisted MEC systems. To minimize latency and energy consumption, Pervez et al.~\cite{pervez2023energy} jointly optimized task offloading, CPU frequency allocation, transmission power, and UAV trajectory. Seid et al.~\cite{seid2023multi} explored the joint optimization of task offloading, age of information, computation resource allocation, and communication resource allocation to enhance the performance of UAV-enabled Internet of Medical Things networks.

\par These previous studies have several limitations. First, resource allocation, task offloading, and UAV trajectory control are key factors in enhancing the performance of UAV-assisted MEC networks. However, the aforementioned studies did not comprehensively optimize these factors, which hinders the full exploitation of the advantages of UAV-assisted MEC. Furthermore, these studies consider resource allocation from either the communication or the computation aspects, which may cause performance degradation in practical UAV-assisted MEC systems where both communication and computing resources are insufficient. Motivated by these issues, we jointly optimize task offloading, communication and computing resource allocation, as well as UAV trajectory control.

\subsection{Online Optimization Approach}
\label{subsec:Online Optimization Approach}

\par To tackle the intricate joint optimization problems of task offloading, resource allocation, and trajectory control for UAV-assisted MEC systems, numerous works have employed offline approaches for system design. These offline approaches formulate solutions under the assumption that the user's location remains unchanged and the demands of users are either fixed or known in advance~\cite{Xu2021,UAV-H,Hu2019}. However, in various edge computing scenarios, such as online games and real-time video analysis, computation demands arrive in a stochastic manner, and user locations exhibit dynamic changes. Therefore, it is necessary to develop online approaches to make real-time decisions without knowing future information.

\par Several studies have also explored online approaches. For example, Zhou et al.~\cite{zhou2022two} developed a two time-scale online approach for caching and task offloading by leveraging the Lyapunov optimization framework. Considering the time-varying computing requirements of user equipment, Wang et al.~\cite{Wang2022} jointly optimized the user association, resource allocation and trajectory of UAVs with the aim of minimizing energy consumption of all user equipment. To minimize the average power consumption of the system with randomly arriving user tasks, Hoang et al.~\cite{10102429} developed a Lyapunov-guided deep reinforcement learning (DRL) framework. Miao et al.~\cite{10388042} proposed a
deep deterministic policy gradient (DDPG)-based algorithm to optimize the computational resources allocation and UAV flight trajectory for UAV-assisted MEC. Xu et al.~\cite{10418937} formulated a long-term optimization problem for the joint optimization of UAV trajectory and computation resource allocation and proposed a trajectory control algorithm based on proximal policy optimization (PPO).

\par The Lyapunov-based optimization framework and DRL represent two viable methodologies for developing online approaches. While DRL is a powerful technique for training agents to make real-time decisions, it necessitates a substantial amount of sample data to learn optimal strategies and suffers from a lack of interpretability. In contrast, the Lyapunov-based optimization framework does not depend on sample data and provides stable performance guarantees. Therefore, we employ the Lyapunov-based optimization framework to devise our online approach. In a departure from existing research, we propose a novel two-stage optimization method based on game theory and convex optimization within the Lyapunov-based optimization framework. This method demonstrates both low computational complexity and superior performance.

%
%
\section{System Model and Problem Formulation}
\label{sec:System Model and problem Formulation}

\begin{figure*}[!hbt] 
	\centering
	\setlength{\abovecaptionskip}{0pt}%
	\setlength{\belowcaptionskip}{0pt}%
	\includegraphics[width =6.5in]{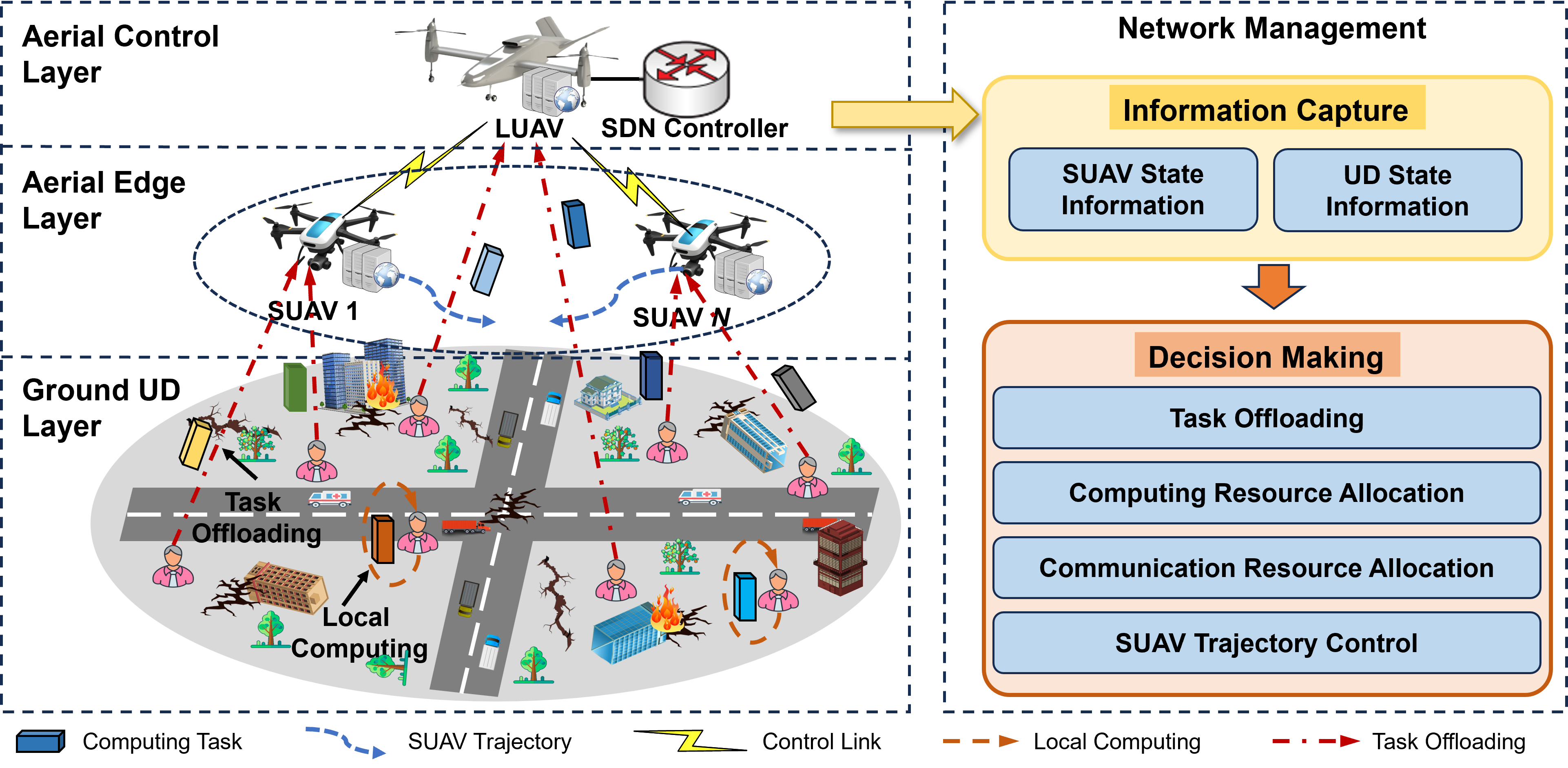}
	\caption{The hierarchical multiple-UAV-assisted MEC system consists of an aerial control layer, an aerial edge layer and a ground UD layer. Each ground UD can either execute computation tasks locally or offload them to an SUAV or the LUAV for processing. The SUAVs and the LUAV coordinate the allocation of communication and computation resources to receive and execute computation tasks from UDs. Moreover, the LUAV serves as the SDN controller, responsible for collecting system information and making corresponding decisions.}
	\label{fig_gameModel}
	\vspace{-1.5em}
\end{figure*}

\par As illustrated in Fig.~\ref{fig_gameModel}, we consider a hierarchical multiple-UAV-assisted MEC system, where multiple UAVs collaborate to provide aerial edge computing services to ground UDs in a disaster-stricken area. Specifically, in the spatial dimension, the hierarchical system comprises an aerial control layer, an aerial edge layer, and a ground UD layer.

\par \textit{At the aerial control layer}, a large rotary-wing UAV (LUAV) $u$ is deployed above the service area center to serve as a regional software-defined networking (SDN) controller. It performs the following essential functions: 1) offering wireless communication coverage for UAVs at the aerial edge layer and UDs at the ground UD layer; 2) provisioning reliable edge computing services for UDs; and 3) acting as a regional controller, on which our algorithm runs, to make real-time decisions based on the collected channel state information (CSI), as well as the state information of UDs and UAVs. \textit{At the aerial edge layer}, a set of small rotary-wing UAVs (SUAVs) $\mathcal{N}=\{1,\ldots,N\}$ is equipped with MEC computing capabilities to provide flexible edge computing services to UDs at the ground UD layer. \textit{At the ground UD layer}, a set of UDs $\mathcal{M}=\{1,\ldots,M\}$ moving within the considered area periodically generates computing tasks with heterogeneous and time-varying requirements.

\par In the temporal dimension, the system operates in a discrete time slot manner with $T$ equal time slots, i.e., $t\in \mathcal{T}=\{1,\ldots,T\}$, wherein each time slot has a duration of $\Delta t$. Here, $\Delta t$ is chosen to be sufficiently small such that each time slot can be considered as quasi-static. Furthermore, within each time slot, the state information of both UDs and SUAVs, as well as the CSI, are captured and updated. The task offloading, resource allocation and UAV trajectory control are determined by running our algorithm.

%
%
\subsection{Basic Models}
\label{subsec:Basic Models}

\par The basic models define the state information of various entities in the proposed system.

\par \textbf{\textit{UD Model.}} We consider that each UD generates one computing task per time slot~\cite{Wang2022}. For UD $m\in\mathcal{M}$, the  attributes of the UD at time slot $t$ can be characterized as $\mathbf{St}_m(t)=\left(f_m^{\text{UD}},\mathbf{\Phi}_m(t),\mathbf{q}_m(t)\right)$, where $f_m^{\text{UD}}$ denotes the local computing capability of UD $m$. The computing task generated by UD $m$ is characterized as $\mathbf{\Phi}_m(t)=\{D_m(t),\eta_m(t),T^{\text{max}}_m(t)\}$ at time slot $t$, wherein $D_m(t)$ represents the input data size (in bits), $\eta_m(t)$ denotes the computation intensity (in cycles/bit), and $T^{\text{max}}_m(t)$ is the maximum tolerable delay (in s). $\mathbf{q}_m(t)=\left[x_m(t),y_m(t)\right]$ represents the location coordinates of UD $m$ at time slot $t$. 

\par \textbf{\textit{UD Mobility Model.}} Similar to~\cite{Tabassum2019}, the mobility of UDs is modeled as a Gauss-Markov mobility model, which is widely employed in cellular communication networks. Specifically, the velocity of UD $m$ at time slot $t+1$ can be updated as
\begin{sequation}
    \mathbf{v}_m(t+1)=\alpha \mathbf{v}_m(t)+(1-\alpha)\overline{\mathbf{v}}_m+\sqrt{1-\alpha^2}\mathbf{w}_m(t),
    \label{eq.vehcle_velocity}
\end{sequation}

\noindent where $\mathbf{v}_m(t)=(v_m^x(t),v_m^y(t))$ denotes the velocity vector at time slot $t$. $\alpha$ represents the memory level, which reflects the temporal-dependent degree. $\overline{\mathbf{v}}_m$ is the asymptotic means of velocity. $\mathbf{w}_m(t)$ is the uncorrelated random Gaussian process $N(0,\sigma_m^2)$, where $\sigma_m$ denotes the asymptotic standard deviation of velocity. Therefore, the mobility of UD $m$ can be updated as follows:
\begin{sequation}
\mathbf{q}_m(t+1) = \mathbf{q}_m(t) + \mathbf{v}_m(t)\Delta t.
\end{sequation}

\par \textbf{\textit{SUAV Model.}} Each SUAV $n\in \mathcal{N}$ is characterized by $\mathbf{St}_{n}(t)=(\mathbf{q}_{n}(t),H,F_{n}^{\text{max}})$, wherein $\mathbf{q}_n(t) = [x_n(t),y_n(t)]$ and $H$ represent the horizontal coordinates and flight height of SUAV $n$ at time slot $t$, respectively. Moreover, $F_{n}^{\text{max}}$ denotes the total computing resources of SUAV $n$. 

\par \textbf{\textit{SUAV Mobility Model.}} Similar to~\cite{li2023robust}, we consider that each SUAV flies at a fixed altitude $H$ to avoid additional energy consumption caused by frequent ascent and descent. For each SUAV $n\in \mathcal{N}$, its trajectory can be expressed as a sequence of optimal positions for each time slot, i.e., $\mathcal{Q}_n=\{\mathbf{q}_n(t)\}_{t\in \mathcal{T}}$. In addition, the trajectory of each SUAV needs to satisfy the following several practical constraints:
\begin{subequations}
    \label{cons-traj}
    \begin{alignat}{3}
    &\mathbf{q}_n(1)=\mathbf{q}_n^{\text{ini}},\ \forall n\in \mathcal{N},\label{cons0}\\
    &\|\mathbf{q}_n(t+1)-\mathbf{q}_n(t)\|\leq v_n^{\text{max}}\Delta t,\ \forall n\in \mathcal{N},\ t\in\mathcal{T},\label{cons1}\\
    &\|\mathbf{q}_i(t)-\mathbf{q}_j(t)\|\geq d^{\text{min}},\ \forall i,j\in \mathcal{N},\ t\in\mathcal{T}, \label{cons2}
    \end{alignat}
\end{subequations}
\noindent where $\mathbf{q}_n^{\text{ini}}$ is the initial position of SUAV $n$, $v_n^{\text{max}}$ denotes the maximum flight speed of SUAV $n$, and $d^{\text{min}}$ denotes the minimum safe distance to avoid collision. Constraint~\eqref{cons0} defines the initial position of each SUAV. Constraint~\eqref{cons1} indicates that the flight distance of each time slot does not exceed the maximum allowable distance, and constraint~\eqref{cons2} ensures the safe flight of SUAVs to avoid collisions.

\par \textbf{\textit{LUAV Model.}} LUAV $u$ is characterized by $\mathbf{St}_{u}=(\mathbf{q}_{u},H_u,F_{u}^{\text{max}})$, wherein $\mathbf{q}_u = [x_u,y_u]$ and $H_u$ represent the horizontal coordinates and flight height, respectively. Moreover, $F_{u}^{\text{max}}$ denotes the total computing resources. 

%
%
\subsection{Communication Model}
\label{subsec:Communication Model}

\par The UDs can offload tasks to aerial servers\footnote{Note that an aerial server and a UAV will be used interchangeably.} via ground-to-air (G2A) links. To mitigate unreliable communication caused by interference, we assume that each aerial server utilizes different frequency band resources to provide edge computing services~\cite{10418937}. Furthermore, the widely used orthogonal frequency-division multiple access (OFDMA) technology can be employed to serve multiple UDs simultaneously~\cite{WeiCSNYZS21}. Therefore, there is no communication interference in the considered system. Assuming UD $m$ offloads tasks to server $s\in \mathcal{S}=\{u\}\cup \mathcal{N}$ at time slot $t$, the corresponding channel gain and transmission rate can be calculated as follows.

\subsubsection{Channel Gain Model} Considering that the G2A links may experience obstruction from environmental obstacles, the channel gain between UD $m$ and aerial server $s$ at time slot $t$ is calculated by incorporating the commonly used probabilistic LoS channel model~\cite{Zheng2024} with the large-scale and small-scale fadings as follows:
\begin{equation}
	g_{s,m}(t) = \rho_{s,m}^{\text{LoS}}(t) g_{s,m}^{\text{LoS}}(t) + (1-\rho_{s,m}^{\text{LoS}}(t)) g_{s,m}^{\text{NLoS}}(t),
\end{equation}

\noindent where $\rho_{s,m}^{\text{LoS}}(t)$ denotes the probability of LoS transmission. Moreover, $g_{s,m}^{x}(t)$ $(x\in \{\text{LoS}, \text{NLoS}\})$ denotes the channel gain under either LoS or non-LoS (NLoS) conditions, which is expressed as $g_{s,m}^x(t)=|h_{s,m}^x(t)|^2 L_{s,m}^x(t)^{-1}$, where $h_{s,m}^x(t)$ and $L_{s,m}^x(t)$ denotes the parameters of small-scale fading and large-scale fading, respectively~\cite{3GPP}.
\par Specifically, for the G2A links, the widely adopted LoS probability is calculated as~\cite{sun2023uav}
\begin{equation}
	\rho_{s,m}^{\text{LoS}}(t)=\frac{1}{1+c_1 \exp(-c_2(\frac{180}{\pi} \arcsin{\frac{H}{d_{s,m}(t)}} - c_1))},
\end{equation}

\noindent where $c_1$ and $c_2$ are the constants depending on the environment, and $d_{s,m}(t)$ means the straight-line distance between UD $m$ and server $s$. Moreover, the small-scale fading for G2A links at time slot $t$ is modeled as a parametric-scalable and good fitting generalized fading, i.e., Nakagami-$m$ fading~\cite{Khuwaja2018}, which is given as
\begin{equation}
	\begin{aligned}
		h_{s,m}^x(t) & \sim f^{\mathrm{Nak}}\left(h_{s,m}^{x}(t), \theta^x\right) \\
		& =\frac{2(\theta^x)^{\theta^x}}{\Gamma(\theta^x)(\bar{p})^{\theta^x}}(h_{s,m}^{x}(t))^{2 \theta^x-1} \exp\big(-\frac{\theta^x}{\bar{p}}(h_{s,m}^{x}(t))^2\big),
	\end{aligned}
\end{equation}

\noindent where $\bar{p}$ is the average received power, $\Gamma(\cdot)$ is the Gamma function, and $\theta^x$ $(x\in \{\text{LoS}, \text{NLoS}\})$ is the Nakagami-$m$ fading parameters for LoS/NLoS channel. Furthermore, the large-scale fading for the G2A links can be given as~\cite{Tian2023}
\begin{equation}
	L_{s,m}^x(t)=\frac{\left(4 \pi d_{s,m}(t) f_s\right)^2}{c^2}\eta^x,
\end{equation}

\noindent where $c$ denotes the speed of light, $f_s$ is the carrier frequency, and $\eta^x$ is the attenuation factor for LoS/NLoS channel.

\subsubsection{Transmission Rate Model} Based on the above analysis, the transmission rate between UD $m$ and server $s$ can be calculated using Shannon's formula as follows:
\begin{equation}
\label{eq.transratio1}
    R_{s,m}(t)=w_{s,m}B_{s}\log_2\left(1+\frac{p_mg_{s,m}(t)}{\varpi_0}\right),
\end{equation}

\noindent where $w_{s,m}\in (0,1]$ represents the resource allocation coefficient of UD $m$, $B_s$ denotes the bandwidth resources available to server $s$, $p_{m}$ is the transmission power of UD $m$, $g_{s,m}(t)$ represents the channel gain between UD $m$ and server $s$ at time slot $t$, and $\varpi_0$ is the noise power.

%
%
\subsection{Computation Model}
\label{subsec:Computation Model}

\par For task $\mathbf{\Phi}_m(t)$ generated by UD $m$, the task can be processed either locally on the UD or remotely on the aerial servers, which is determined by the offloading decision of the UD. To be more specific, we define a variable $a_m(t)\in \{0\}\cup\{u\}\cup \mathcal{N}$ to represent the task offloading decision of UD $m$ at time slot $t$, where $a_m(t)=0$ indicates that the task is executed locally, $a_m(t)=u$ represents that the task is offloaded to the LUAV for execution, and $a_m(t)=n\in\mathcal{N}$ means the task is offloaded to SUAV $n$ for execution.

%
%
\subsubsection{Local Computing Model}
\label{subsec:Local Computing Model}

\par If UD $m$ processes task $\mathbf{\Phi}_m(t)$ locally (i.e., $a_m(t)=0$), the task completion delay and UD energy consumption for local computing are given as follows.  

\par \textbf{Task Completion Delay.} The local completion delay of the task can be calculated as
\begin{sequation}
    \label{eq.loc-delay-T}
    T_m^{\text{loc}}(t)=\frac{\eta_m(t)D_m(t)}{f_m^{\text{UD}}},
\end{sequation}
\noindent where $f_m^{\text{UD}}$ represents the computing resources of UD $m$.

\par \textbf{UD Energy Consumption.} Accordingly, the energy consumption of UD $m$ to execute task $\mathbf{\Phi}_m(t)$ locally at time slot $t$ can be calculated as~\cite{UAV-H}
\begin{sequation}
    \label{eq.loc-delay-E}
    E_m^{\text{loc}}(t)=k(f_m^{\text{UD}})^{3}T_m^{\text{loc}}(t),
\end{sequation}

\noindent where $k$ represents the effective switched capacitance coefficient that depends on the hardware architecture~\cite{PanPWZW21}.

%
%

\subsubsection{Edge Computing Model}
\label{subsec:Edge Computing Model}

\par When task $\mathbf{\Phi}_m(t)$ is offloaded to an aerial server $s\in \mathcal{S}$ for processing (i.e., $a_m(t)=s$), the server allocates computing resources to perform the task and returns the processing results to the UD. Note that, similar to most existing studies~\cite{DengFW23,Sun2025a}, the latency of result feedback can be disregarded when considering the task completion delay. This is because for many intelligent applications (such as image recognition, real-time video analytics), the size of the processing results is typically significantly smaller than the size of the input data.

\par \textbf{Task Completion Delay.} For edge computing, the task completion delay mainly includes task transmission delay and edge execution delay, which can be calculated as
\begin{sequation}
    \label{eq.ec-delay}
    T_{s,m}^{\text{ec}}(t) = \frac{D_m(t)}{R_{s,m}t)}+\frac{\eta_m(t) D_m(t)}{F_{s,m}(t)},
\end{sequation}

\noindent where $F_{s,m}(t)$ denotes the computing resources allocated by server $s$ to UD $m$ at time slot $t$.

\par \textbf{UD Energy Consumption.} The task offloading of the UD incurs transmission energy consumption at time slot $t$, which can be calculated as
\begin{sequation}
    \label{eq.trans-energy}
    E_{s,m}^{\text{ec}}(t) = p_m\frac{D_m(t)}{R_{s,m}(t)},
\end{sequation}

\noindent where $p_m$ represents the transmission power of UD $m$.

\par \textbf{SUAV Energy Consumption.} If task $\mathbf{\Phi}_m(t)$ is offloaded to SUAV $n$, the SUAV incurs computation energy consumption to execute the task, which can be given as~\cite{JiangDXI23}
\begin{sequation}
    \label{eq.uav-comp-energy}
    E_{n,m}^{\text{c}}(t) = \varpi\eta_m(t)D_m(t),
\end{sequation}

\noindent where $\varpi$ represents the energy consumption per unit CPU cycle of SUAV $n$. Therefore, the total computational energy consumption of SUAV $n$ at time slot $t$ can be given as
\begin{sequation}
    \label{eq.comp-energy}
    E_n^{\text{c}}(t) = \sum_{m\in \mathcal{M}}I_{\{a_m(t)=n\}}E_{n,m}^{\text{c}}(t),
\end{sequation}

\noindent where $I_{\{X\}}$ is an indicator function defined such that $I_{\{X\}}=1$ when event $X$ is true, and $I_{\{X\}}=0$ otherwise. Furthermore, the trajectory control of SUAV $n$ incurs corresponding propulsion energy consumption at time slot $t$. Similar to~\cite{ pan2023joint}, the propulsion power consumption of SUAV $n$ at time slot $t$ can be expressed as
\begin{sequation}
\label{eq.prop-energy}
    \begin{aligned}
    P_n(v_n(t))=&\underbrace{C_1\left(1+\frac{3 v_n(t)^2}{U_{\text{p}}^2}\right)}_{\text {blade profile }}+\underbrace{C_4 v_n(t)^3}_{\text {parasite}}\\
    &+\underbrace{C_2 \sqrt{\sqrt{C_3+\frac{v_n(t)^4}{4}}-\frac{v_n(t)^2}{2}}}_{\text{induced }},
    \end{aligned}
\end{sequation}

\noindent where $v_n(t)$ represents the velocity of SUAV $n$ at time slot $t$, $U_{\text{p}}$ refers to the rotor's tip speed, and $C1$, $C2$, $C3$, and $C4$ are the constants described in~\cite{Yang2022}. Therefore, the energy consumption of SUAV $n$ at time slot $t$ can be given as
\begin{sequation}
    \label{eq.UAV-energy}
    E_n(t) = E_n^{\text{c}}(t) + E_n^{\text{p}}(t),
\end{sequation}

\noindent where $E_n^{\text{p}}(t)=P_n(v_n(t))\tau$ denotes the propulsion energy consumption at time slot $t$. To guarantee service time, we define the SUAV energy consumption constraint as follows:
\begin{sequation}
    \label{eq.eng-cons}
    \lim _{T \rightarrow+\infty} \frac{1}{T} \sum_{t=1}^{T} E_n(t) \leq \bar{E}_n,\ \forall n \in \mathcal{N},
\end{sequation}

\noindent where $\bar{E}_n$ is the energy budget of the SUAV $n$ per time slot.

\par Note that if task $\mathbf{\Phi}_m(t)$ is offloaded to server $u$, we ignore the energy consumption of server $u$. This is because server $u$ is a large UAV with a more sufficient energy supply, which can provide a longer service time compared to server $n\in \mathcal{N}$.

%
%
\subsection{QoE Evaluation Model}
\label{subsec:QoE Evaluation Model}

\par Given the limited battery capacity of UDs and the latency sensitivity of computing tasks, task completion delay and UD energy consumption are important performance indicators for evaluating the QoE of UDs. Similar to~\cite{Chen2022}, we combine task completion latency and UD energy consumption into a cost function to measure the QoE of UDs. Specifically, the completion delay of task $\mathbf{\Phi}_m(t)$ at time slot $t$ can be given as
\begin{sequation}
    \label{eq.delay}
    T_m(t)= I_{\{a_m(t)=0\}}T_m^{\text{loc}}(t)+\sum_{s\in \mathcal{S}}I_{\{a_m(t)=s\}}T_{m,s}^{\text{ec}}(t).
\end{sequation}

\noindent Then, the energy consumption of UD $m$ at time slot $t$ can be given as
\begin{sequation}
    \label{eq.energy}
    E_m(t)= I_{\{a_m(t)=0\}}E_m^{\text{loc}}(t)+\sum_{s\in \mathcal{S}}I_{\{a_m(t)=s\}}E_{m,s}^{\text{ec}}(t).
\end{sequation}

\noindent According to~\cite{Chen2022}, the cost of UD $m$ at time slot $t$ can be formulated as
\begin{sequation}
    \label{eq.UD-cost}
    C_m(t)=\gamma_m^{\text{T}}T_m(t) + \gamma_m^{\text{E}}E_m(t),
\end{sequation}

\noindent where $\gamma_m^{\text{T}}$ and $\gamma_m^{\text{E}}$ respectively represent the weight coefficients of task completion delay and energy consumption for UD $m$, which can be flexibly set based on the preference of the UD for delay and energy consumption. Clearly, minimizing the cost of UDs is equivalent to maximizing the QoE of UDs.

%
%
\subsection{Problem Formulation}
\label{subsec:problem Formulation}

\par The objective of this work is to minimize the average cost of all UDs over time (i.e., the time-averaged UD cost), by jointly optimizing the task offloading $\mathbf{A}=\{\mathcal{A}^t|\mathcal{A}^t=\{a_m(t)\}_{m\in\mathcal{M}}\}_{t\in\mathcal{T}}$, computing resource allocation $\mathbf{F}=\{\mathcal{F}^t|\mathcal{F}^t=\{F_{s,m}(t)\}_{s\in\mathcal{S},m\in\mathcal{M}}\}_{t\in\mathcal{T}}$, communication resource allocation $\mathbf{W}=\{\mathcal{W}^t|\mathcal{W}^t=\{w_{s,m}(t)\}_{s\in \mathcal{S},m\in\mathcal{M}}\}_{t\in \mathcal{T}}$, and trajectory control $\mathbf{Q}=\{\mathcal{Q}^t|\mathcal{Q}^t=\{\mathbf{q}_n(t)\}_{n\in\mathcal{N}}\}_{t\in \mathcal{T}}$. Therefore, the problem can be formulated as follows:

\vspace{-0.8em}
{\small
 \begin{align}
    \textbf{P}: \quad &\underset{\mathbf{A}, \mathbf{F}, \mathbf{W}, \mathbf{Q}}{\text{min}} \frac{1}{T}\sum_{t=1}^{T}\sum_{m=1}^{M}C_m(t) \label{P}\\
    \text{s.t.}\ \ 
    &\lim _{T \rightarrow+\infty} \frac{1}{T} \sum_{t=1}^{T} E_n(t) \leq \bar{E}_n, \forall n\in \mathcal{N}, \tag{\ref{P}{\text{a}}} \label{Pa}\\
    &a_m(t)\in \{0\}\cup\mathcal{S}, \forall m\in \mathcal{M}, t\in \mathcal{T}, \tag{\ref{P}{\text{b}}} \label{Pb}\\
    &I_{\{a_m(t)=s\}}T^{\text{ec}}_{s,m}(t)\leq T_m^{\text{max}}, \forall m\in \mathcal{M}, s\in \mathcal{S}, t\in \mathcal{T}, \tag{\ref{P}{\text{c}}} \label{Pc}\\
    &0\leq F_{s,m}(t) \leq F_s^{\text{max}}, \forall m\in \mathcal{M}, s\in \mathcal{S}, t\in \mathcal{T}, \tag{\ref{P}{\text{d}}} \label{Pd}\\
    &\sum_{m=1}^M I_{\{a_m(t)=s\}}F_{s,m}(t)\leq F_s^{\text{max}}, \forall s\in \mathcal{S}, t\in \mathcal{T}, \tag{\ref{P}{\text{e}}} \label{Pe}\\
    &0\leq w_{s,m}(t) \leq 1, \forall m\in \mathcal{M}, s\in \mathcal{S}, t\in \mathcal{T}, \tag{\ref{P}{\text{f}}} \label{Pf}\\
    &\sum_{m=1}^M I_{\{a_m(t)=s\}} w_{s,m}(t)\leq 1, \forall s\in \mathcal{S}, t\in \mathcal{T}, \tag{\ref{P}{\text{g}}} \label{Pg}\\
    &(\rm \ref{cons0})-(\rm \ref{cons2}). \tag{\ref{P}{\text{h}}} \label{Ph}
\end{align}}

\noindent Constraint (\ref{Pa}) ensures that the long-term energy consumption of each SUAV does not exceed its designated energy budget. Constraint (\ref{Pb}) defines the strategy space for task offloading decisions of each UD. Constraint (\ref{Pc}) guarantees that the task completion delay for edge computing does not exceed the maximum tolerable latency of the task. Moreover, constraints (\ref{Pd}) and (\ref{Pe}) specify that the amount of computing resources allocated by each aerial server must be positive and should not exceed its total available computing resources. In addition, constraints (\ref{Pf}) and (\ref{Pg}) require that the communication resource allocation coefficients of each aerial server be positive and remain within the available communication resources of the server. Finally, constraint (\ref{Ph}) limits the initial positions of SUAVs, the maximum flying distance per time slot, and the minimum safe distance between any pair of SUAVs.

\par \textbf{\textit{Challenges.}} There are two main challenges to obtain the optimal solution of problem $\textbf{P}$. \textit{i) Future-dependent.} Optimally solving problem $\textbf{P}$ requires complete future information, e.g., task computing demands and locations of all UDs across all time slots. However, obtaining the future information is very challenging in the considered time-varying scenario. \textit{ii) Non-convex and NP-hard.} Problem $\textbf{P}$ contains both continuous variables (i.e., resource allocation $\{\mathbf{F},\mathbf{W}\}$ and trajectory control $\mathbf{Q}$) and binary variables (i.e., task offloading decision $\mathbf{A}$) is a mixed-integer non-linear programming (MINLP) problem, which is non-convex and NP-hard~\cite{boyd2004convex,belotti2013mixed}. To this end, we proposed OJTRTA to solve this problem, which consists of a Lyapunov-based problem transformation process and a two-stage optimization procedure. Fig.~\ref{fig_algrithm_framework} shows the framework of the proposed OJTRTA.
\begin{figure*}[!hbt]
	\centering
	\setlength{\abovecaptionskip}{0pt}%
	\setlength{\belowcaptionskip}{0pt}%
	\includegraphics[width =6.8in]{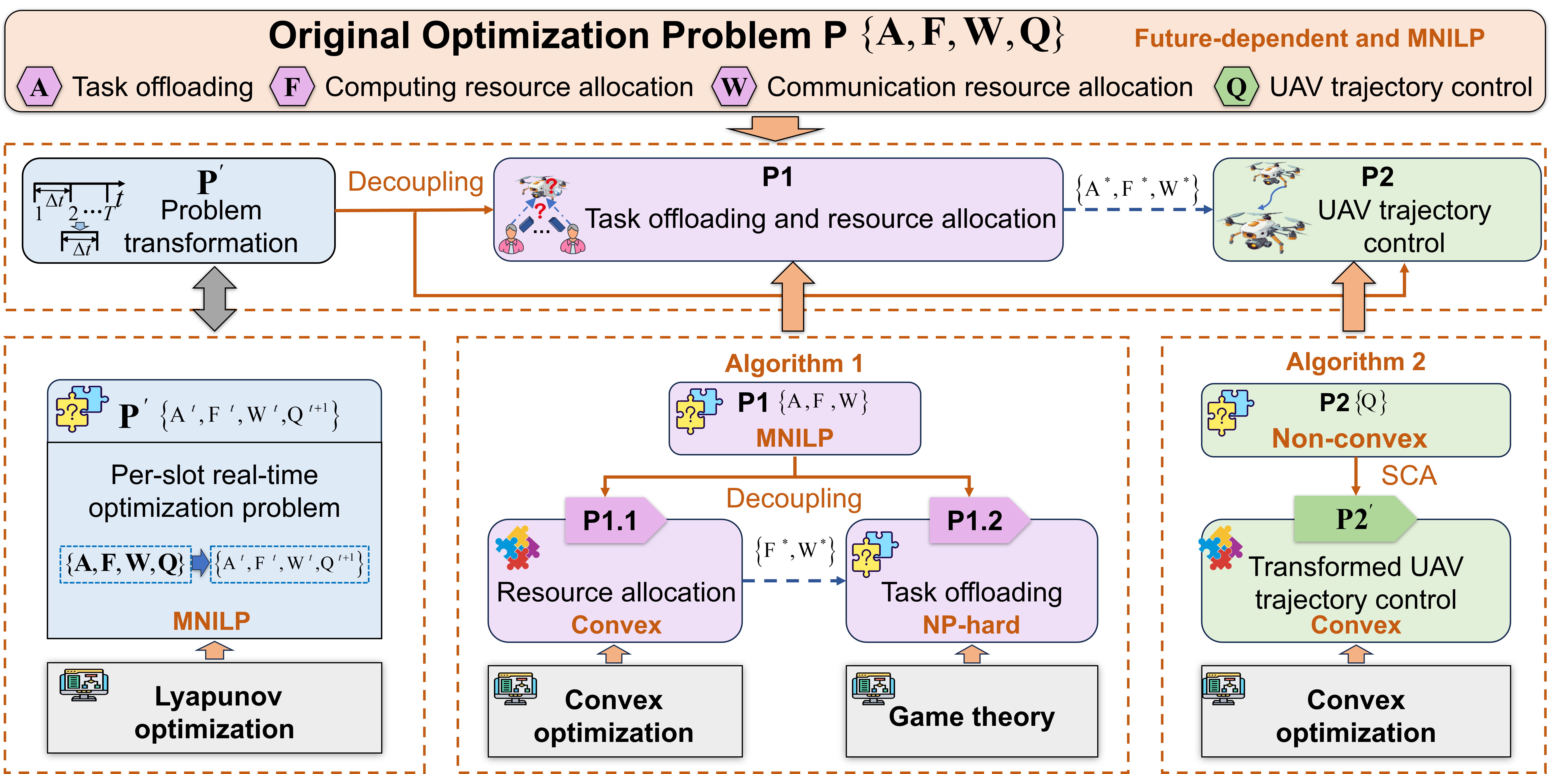}
	\caption{The framework of OJTRTA. The original problem $\textbf{P}$, which depends on future information, is first transformed into a per-slot real-time optimization problem $\textbf{P}^{\prime}$ using the Lyapunov optimization framework. Then, the problem $\textbf{P}^{\prime}$ is decomposed into two subproblems: the task offloading and resource allocation subproblem $\textbf{P1}$, and the UAV trajectory control subproblem $\textbf{P2}$. For $\textbf{P1}$, the optimal resource allocation strategy is obtained via convex optimization, while the task offloading decision is obtained using game-theoretic methods. For $\textbf{P2}$, the UAV trajectory control strategy is obtained by reformulating the problem into a convex optimization form via the SCA technique, which is subsequently solved by convex optimization.} 
	\label{fig_algrithm_framework}
	\vspace{-1.5em}
\end{figure*}

%
%
\section{Lyapunov-Based Problem Transformation}
\label{sec:Lyapunov-Based problem Transformation}

\par In this section, we first explain the motivation for designing an online approach using the Lyapunov optimization framework. Then, we provide a detailed explanation of the Lyapunov-based problem transformation.

%
%
\subsection{Motivation}
\label{subsec:Motivation}

\par Since problem $\textbf{P}$ is future-dependent, an online approach is necessary to make real-time decisions without foreseeing the future. The Lyapunov-based optimization framework and DRL are two practical tools for designing online approaches. However, DRL typically requires a substantial number of samples to acquire optimal policies and faces challenges in achieving convergence. Furthermore, DRL requires significant computational resources, making it costly in the resource-constrained MEC system. Compared to DRL, the Lyapunov-based optimization framework offers the following advantages:
\begin{itemize}
    \item \textbf{\textit{Simplicity and Interpretability:}} The Lyapunov-based optimization framework usually has a simpler structure and are easier to interpret compared to DRL. Furthermore, the Lyapunov-based optimization framework does not introduce additional complexity to the algorithm design.
    \item \textbf{\textit{Stability:}} The Lyapunov-based optimization framework provides stability guarantees, which ensures convergence and boundedness of system dynamics. 
    \item \textbf{\textit{Model-free Nature:}} The Lyapunov-based optimization framework does not require explicit knowledge of the system dynamics, making it suitable for the considered scenario where obtaining future information about system dynamics is challenging and impractical.
\end{itemize}

\par Therefore, we adopt the Lyapunov-based optimization framework for online approach design. Furthermore, in the simulation experiments, we compare the proposed Lyapunov-based online approach with DRL-based approaches to verify its suitability for the considered scenarios.

%
%
\subsection{Problem Transformation}
\label{subsec:Problem Transformation}
\par Firstly, to satisfy the SUAV energy consumption constraint (\ref{Pa}), we define two virtual energy queues $Q_n^{\text{c}}(t)$ and $Q_n^{\text{p}}(t)$ to represent the computing energy queue and the propulsion energy queue at time slot $t$ based on Lyapunov optimization technique, respectively. We assume that the queues are set as zero at the initial time slot, i.e., $Q_n^{\text{c}}(1)=0$ and $Q_n^{\text{p}}(1)=0$. Therefore, the virtual energy queues can be updated as
\begin{sequation}
    \label{eq.queue}
    \begin{cases}
        Q_n^{\text{c}}(t+1)=\max \left\{Q_n^{\text{c}}(t)+E_n^{\text{c}}(t)-\bar{E_n^{\text{c}}}, 0\right\}, \forall n \in \mathcal{N}, t \in \mathcal{T},\\
        Q_n^{\text{p}}(t+1)=\max \left\{Q_n^{\text{p}}(t)+E_n^{\text{p}}(t)-\bar{E_n^{\text{p}}}, 0\right\}, \forall n \in \mathcal{N}, t \in \mathcal{T},
    \end{cases}
\end{sequation}

\noindent where $\bar{E_n^{\text{c}}}$ and $\bar{E_n^{\text{p}}}$ denote the computation and propulsion energy budgets per slot, respectively, and $\bar{E_n^{\text{c}}}+\bar{E_n^{\text{p}}}=\bar{E_n}$. Secondly, we define the \textit{Lyapunov function} $L(\boldsymbol{\Theta}(t))$, which represents a scalar measure of the queue backlogs, i.e.,
\begin{sequation}
    \label{eq.ly-func}
    L(\boldsymbol{\Theta}(t)) = \frac{1}{2}\sum_{n=1}^{N}(Q_n^{\text{c}}(t))^2+\frac{1}{2}\sum_{n=1}^{N}(Q_n^{\text{p}}(t))^2,
\end{sequation}

\noindent where $\boldsymbol{\Theta}(t)=[\boldsymbol{Q^{\text{c}}(t)},\boldsymbol{Q^{\text{p}}(t)}]$ is the vector of current queue backlogs. Thirdly, we define the \textit{conditional Lyapunov drift} for time slot $t$ as
\begin{sequation}
    \label{eq.cond-ly-drift}
    \Delta L(\boldsymbol{\Theta}(t)) \triangleq \mathbb{E}\{L(\boldsymbol{\Theta}(t+1))-L(\boldsymbol{\Theta}(t)) \mid \boldsymbol{\Theta}(t)\}.
\end{sequation}

\noindent Finally, similar to~\cite{2010Neely,JiangDXI23,Yang2022}, the \textit{drift-plus-penalty} can be given as
\begin{sequation}
    \label{eq.drift-plus-penalty}
   D(\boldsymbol{\Theta}(t)) =\Delta L(\boldsymbol{\Theta}(t))+V \mathbb{E}\left\{C_s(t)\mid \mathbf{\Theta}(t)\right\},
\end{sequation}

\noindent where $C_s(t)=\sum_{m=1}^{M}C_m(t)$ is the total cost of all UDs at time slot $t$, and $V$ is a parameter to trade off the total cost and the queue stability. Next, we provide an upper bound on the \textit{drift-plus-penalty}, as stated in Theorem~\ref{the:drift-plus-penalty}.

%
%
\begin{theorem}
\label{the:drift-plus-penalty}
For all $t$ and all possible queue backlogs $\boldsymbol{\Theta}(t)$, the drift-plus-penalty is upper bounded as
\begin{sequation}
    \label{eq.theorem1}
    \begin{aligned}
    D(\mathbf{\Theta}(t)) \leq & W + \sum_{n=1}^{N}{Q}_n^{\mathrm{c}}(t)(E_n^{\mathrm{c}}(t)-\bar{E_n^{\mathrm{c}}})\\
    &+\sum_{n=1}^{N}{Q}_n^{\mathrm{p}}(t)(E_n^{\mathrm{p}}(t)-\bar{E_n^{\mathrm{p}}})+V\cdot C_s(t),
    \end{aligned}
\end{sequation}

\noindent where $W=\frac{1}{2} \sum_{n=1}^{N}\max \left\{\left(\bar{E_n^{\mathrm{c}}}\right)^2,\left(E_{\max }^{\mathrm{c}}-\bar{E_n^{\mathrm{c}}}\right)^2\right\}+\frac{1}{2} \sum_{n=1}^{N}\max \left\{\left(\bar{E_n^{\mathrm{p}}}\right)^2,\left(E_{\max }^{\mathrm{p}}-\bar{E_n^{\mathrm{p}}}\right)^2\right\}$ is a finite constant.
\end{theorem}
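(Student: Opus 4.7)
The plan is to bound the drift-plus-penalty by exploiting the standard Lyapunov trick for max-type queue recursions and then applying it separately to the computation queue $Q_n^{\mathrm{c}}(t)$ and the propulsion queue $Q_n^{\mathrm{p}}(t)$. First, I would start from the queue update in~\eqref{eq.queue}. For any nonnegative $Q$, $A$, and constant $\bar E$, the identity $(\max\{Q+A-\bar E,0\})^2 \leq (Q+A-\bar E)^2$ holds because squaring a quantity truncated at zero cannot increase its magnitude. Expanding the right-hand side gives
\begin{equation}
\tfrac{1}{2}(Q(t+1))^2 - \tfrac{1}{2}(Q(t))^2 \leq \tfrac{1}{2}(A(t)-\bar E)^2 + Q(t)\bigl(A(t)-\bar E\bigr),
\end{equation}
which is the fundamental per-queue bound I need.

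Next, I would apply this bound individually to each $Q_n^{\mathrm{c}}$ with $A=E_n^{\mathrm{c}}(t)$ and to each $Q_n^{\mathrm{p}}$ with $A=E_n^{\mathrm{p}}(t)$, and then sum over $n \in \mathcal{N}$. Using the definition~\eqref{eq.ly-func} of $L(\boldsymbol{\Theta}(t))$, this yields
\begin{equation}
L(\boldsymbol{\Theta}(t+1))-L(\boldsymbol{\Theta}(t)) \leq \tfrac{1}{2}\!\sum_{n=1}^{N}\!\bigl(E_n^{\mathrm{c}}(t)-\bar{E_n^{\mathrm{c}}}\bigr)^{2} + \tfrac{1}{2}\!\sum_{n=1}^{N}\!\bigl(E_n^{\mathrm{p}}(t)-\bar{E_n^{\mathrm{p}}}\bigr)^{2} + \sum_{n=1}^{N}\! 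Q_n^{\mathrm{c}}(t)\bigl(E_n^{\mathrm{c}}(t)-\bar{E_n^{\mathrm{c}}}\bigr) + \sum_{n=1}^{N}\! Q_n^{\mathrm{p}}(t)\bigl(E_n^{\mathrm{p}}(t)-\bar{E_n^{\mathrm{p}}}\bigr).
\end{equation}

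The main obstacle (and the only nontrivial step) is showing that the quadratic residual terms are bounded by the constant $W$ stated in the theorem. For this, I would observe that $0 \leq E_n^{\mathrm{c}}(t) \leq E_{\max}^{\mathrm{c}}$ and $0 \leq E_n^{\mathrm{p}}(t) \leq E_{\max}^{\mathrm{p}}$, where the upper bounds follow from the bounded task size $D_m(t)$ in~\eqref{eq.uav-comp-energy}--\eqref{eq.comp-energy} and from~\eqref{eq.prop-energy} evaluated at the maximum feasible speed $v_n^{\max}$. On the interval $[0,E_{\max}]$ the convex function $(x-\bar E)^{2}$ attains its maximum at one of the endpoints, so $(E_n^{\mathrm{c}}(t)-\bar{E_n^{\mathrm{c}}})^{2} \leq \max\{(\bar{E_n^{\mathrm{c}}})^{2},(E_{\max}^{\mathrm{c}}-\bar{E_n^{\mathrm{c}}})^{2}\}$, and analogously for the propulsion term. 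Summing over $n$ and halving gives exactly the constant $W$ in the theorem statement.

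Finally, I would take the conditional expectation $\mathbb{E}\{\cdot \mid \boldsymbol{\Theta}(t)\}$ of the combined inequality, noting that $W$ is deterministic, $Q_n^{\mathrm{c}}(t)$ and $Q_n^{\mathrm{p}}(t)$ are $\boldsymbol{\Theta}(t)$-measurable, and then add $V\,\mathbb{E}\{C_s(t)\mid\boldsymbol{\Theta}(t)\}$ to both sides. Using definition~\eqref{eq.cond-ly-drift} and~\eqref{eq.drift-plus-penalty}, the left-hand side becomes $D(\boldsymbol{\Theta}(t))$, and (suppressing the conditional expectations on the realized quantities, as in the theorem's notation) the right-hand side matches~\eqref{eq.theorem1} exactly, completing the proof.
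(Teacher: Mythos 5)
Your proposal is correct and follows essentially the same route as the paper's proof: the truncation inequality $(\max\{Q+A-\bar E,0\})^2 \le (Q+A-\bar E)^2$, the per-queue quadratic expansion summed over $n$, the endpoint bound yielding $W$, and finally the conditional expectation plus the penalty term. Your justification of the endpoint maximization of the convex residual is slightly more explicit than the paper's, but the argument is the same.
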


%
%
\begin{proof}
Refer to Section I of the supplementary material.
\end{proof}

\par According to the Lyapunov optimization framework, we minimize the right-hand side of inequality (\ref{eq.theorem1}). Therefore, problem $\textbf{P}$ relying on future information is transformed into the real-time optimization problem $\textbf{P}^{'}$ solvable with only current information, which is given as follows:

\vspace{-0.8em}
{\small \begin{align}
\textbf{P}^{\prime}:&\underset{\mathcal{A}^t,\mathcal{F}^t,\mathcal{W}^t,\mathcal{Q}^{t+1}}{\text{min}}\sum_{n=1}^{N}\left(Q_n^{\text{c}}(t)E_n^{\text{c}}(t)+Q_n^{\text{p}}(t)E_n^{\text{p}}(t)\right)+V\sum_{m=1}^{M} C_m(t) \label{P_temp} \\
\text{s.t.} \ &(\ref{Pb})-(\ref{Ph}). \notag
\end{align}}

\noindent However, problem $\textbf{P}^{\prime}$ is still a MINLP problem, with interdependencies among the decision variables. Note that for this complex problem $\textbf{P}^{\prime}$, obtaining the optimal solution may not be suitable for the considered scenario. First, although the optimal solution can ensure the maximization QoE for all UDs, it may result in unfairness among UDs. Second, obtaining the optimal solution may result in significant computational overhead, which is difficult to meet the needs of real-time decision making. To this end, we propose a two-stage optimization method that uses game theory to determine the offloading decisions of UDs to ensure fairness among UDs. Moreover, the proposed method can also achieve a high-performance solution within a low computational complexity. The proposed two-stage optimization method is described in detail in the following section. Furthermore, for the sake of convenience in subsequent discussions, we adopt a notation where the time index is omitted.

%
%
\section{Two-Stage Optimization method}
\label{sec:Two-Stage Optimization Algorithm}
\par In this section, a two-stage optimization method is proposed to solve the transformed problem $\mathbf{P}^\prime$. In the first stage, assuming a feasible $\mathcal{Q}$, we optimize the task offloading $\mathcal{A}$ and resource allocation $\{\mathcal{F},\mathcal{W}\}$. In the second stage, based on the obtained task offloading $\mathcal{A}^{*}$ and resource allocation $\{\mathcal{F}^{*},\mathcal{W}^{*}\}$, we optimize the positions $\mathcal{Q}^{*}$ of SUAVs. It should be noted that although we optimize the resource allocation, task offloading, and UAV trajectory control in a decoupled manner, the decision variables related to these aspects are still considered together throughout the decoupling process. This is because optimizing one decision variable is conducted under the assumption of given values for the other two variables.
%
%
\subsection{Stage 1: Task Offloading and Resource Allocation}
\label{subsec:Task Offloading Decision}

\par Assuming a feasible $\mathcal{Q}$ and removing irrelevant constant terms, $\textbf{P}^{'}$ is transformed into the subproblem $\textbf{P1}$ to decide task offloading and resource allocation, which is given as follows:

 {\small \begin{align}
    \textbf{P1}:\ &V\cdot \underset{\mathcal{A},\mathcal{F},\mathcal{W}}{\text{min}}\left(\sum_{n=1}^{N}\frac{Q_n^{\text{c}}}{V}E_n^{\text{c}}+\sum_{m=1}^{M}C_m\right) \label{P1}\\
    \text{s.t.}\ &(\rm \ref{Pb})-(\rm \ref{Pg}). \notag
\end{align}}

\noindent Problem $\textbf{P1}$ is still a MINLP problem, and task offloading and resource allocation are coupled with each other. Considering that the UAVs are service providers in the considered UAV-assisted MEC system, we prioritize resource allocation strategies for the UAVs. Then, based on the resource allocation strategies, we optimize the offloading decisions of UDs.

%
%
\subsubsection{Resource Allocation}
\label{subsubsec:Resource Allocation}

\par For any arbitrary task offloading decision profile $\mathcal{A}$ of the UDs, the servers decides resource allocation strategies to minimize problem $\textbf{P1}$. Define $z_{s,m}=\frac{F_{s,m}}{F_s^{\text{max}}}$ and remove irrelevant constant terms, the resource allocation subproblem can be formulated as

\vspace{-0.8em}
{\small
\begin{align}
    \textbf{P1.1}:\ &\underset{\mathcal{Z},\mathcal{W}}{\text{min}}\sum_{s\in\mathcal{S}}\sum_{m\in\mathbf{M}_s(\mathcal{A})}\left(\frac{\gamma_m^{\text{T}}\eta_mD_m}{z_{s,m}F_s^{\text{max}}}+\frac{\gamma_m^{\text{T}}D_m+\gamma_m^{\text{E}}p_mD_m}{w_{s,m}r_{s,m}}\right) \label{P1.1}\\
    \text{s.t.}\ \
    & z_{s,m}\geq 0, \forall s\in \mathcal{S},\ m\in \mathbf{M}_s(\mathcal{A}), \tag{\ref{P1.1}{a}} \label{P1.1a}\\
    & \sum_{m\in\mathbf{M}_s(\mathcal{A})} z_{s,m}\leq 1, \forall s\in \mathcal{S}, \tag{\ref{P1.1}{b}} \label{P1.1b}\\
    & w_{s,m}\geq 0, \forall s\in \mathcal{S},\ m\in \mathbf{M}_s(\mathcal{A}), \tag{\ref{P1.1}{c}} \label{P1.1c}\\
    & \sum_{m\in\mathbf{M}_s(\mathcal{A})} w_{s,m}\leq 1, \forall s\in \mathcal{S}, \tag{\ref{P1.1}{d}} \label{P1.1d}
\end{align}}

\noindent where $r_{s,m}=B_{s}\log_2\left(1+\frac{p_mg_{s,m}(t)}{\varpi_0}\right)$, $\mathcal{Z}=\{z_{s,m}\}_{s\in \mathcal{S}, m\in\mathbf{M}_s(\mathcal{A})}$, and $\mathbf{M}_s(\mathcal{A})$ represents the set of UDs that offload tasks to server $s$, which is determined by the offloading decisions $\mathcal{A}$.

\par To solve problem $\textbf{P1.1}$, we first prove that the problem is a convex optimization problem through Lemma~\ref{lem-convex}. Then, we can obtain the optimal resource allocation by Theorem~\ref{them:resource-allocation}.

%
%
\begin{lemma}
\label{lem-convex}
Problem $\textbf{P1.1}$ is a convex optimization problem. 
\end{lemma}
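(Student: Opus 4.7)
The plan is the standard textbook verification: show that (i) the feasible set is convex and (ii) the objective is convex on that set. I would begin by fixing the stage-1 regime in which $\mathcal{Q}$ and $\mathcal{A}$ are treated as constants; consequently every quantity appearing in the objective except for $z_{s,m}$ and $w_{s,m}$---in particular $F_s^{\mathrm{max}}$, $\gamma_m^{\mathrm{T}}$, $\gamma_m^{\mathrm{E}}$, $\eta_m$, $D_m$, $p_m$, and $r_{s,m}$---is a strictly positive constant, so each coefficient multiplying a reciprocal is positive.

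For the feasible set, constraints \eqref{P1.1a}--\eqref{P1.1d} are all affine (nonnegativity bounds together with a simplex-type upper bound for each server), so their intersection is a convex polytope. The objective is well defined on the open convex domain $\{z_{s,m}>0,\ w_{s,m}>0\}$, and since any optimizer must stay in this open set (the reciprocals blow up as a variable approaches zero), the effective feasible region may be treated as a convex subset of the positive orthant.

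For the objective, every summand has one of the two separable forms $\alpha/z_{s,m}$ or $\beta/w_{s,m}$ with $\alpha,\beta>0$. The second derivative of $1/x$ on $(0,\infty)$ equals $2/x^3>0$, so each such term is strictly convex in its single variable. Because different summands depend on different coordinates of the stacked vector $(\mathcal{Z},\mathcal{W})$, the Hessian of the whole objective is diagonal with strictly positive diagonal entries and is therefore positive definite; equivalently, the objective is a nonnegative sum of univariate convex functions of distinct coordinates. This settles convexity of both the objective and the feasible region and establishes the lemma.

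I do not anticipate any substantive obstacle here: the only subtlety is the distinction between the open domain on which the reciprocals are defined and the closed nonnegativity constraints \eqref{P1.1a} and \eqref{P1.1c}, but because any optimum is strictly interior this is a routine technicality rather than a real difficulty. The main thing to flag clearly in the write-up is the separation-of-variables structure, since that is what turns a potentially multivariate convexity check into a collection of trivial one-dimensional ones.
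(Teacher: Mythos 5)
Your proposal is correct and follows essentially the same route as the paper's own proof: both verify that the constraints \eqref{P1.1a}--\eqref{P1.1d} are affine and that the separable objective has a (diagonal) positive semidefinite Hessian via the second derivatives $2\alpha/x^3>0$ of the reciprocal terms. Your additional remark about the open positive orthant versus the closed nonnegativity constraints is a reasonable technical refinement that the paper glosses over, but it does not change the substance of the argument.
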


%
%
\begin{proof}
Refer to Section II of the supplementary material.
\end{proof}

%
%
\begin{theorem}
\label{them:resource-allocation}
The optimal resource allocation coefficient, i.e., the solution of problem $\textbf{P1.1}$, can be given as follows:
\begin{sequation}
\label{eq.solution}
    \begin{cases}
        z_{s,m}^{*} = \frac{\sqrt{\frac{\gamma_m^{\mathrm{T}}\eta_mD_m}{F_s^{\mathrm{max}}}}}{\sum_{i\in \mathbf{M}_s(\mathcal{A})}\sqrt{\frac{\gamma_i^{\mathrm{T}}\eta_iD_i}{F_s^{\mathrm{max}}}}},\\
        w_{s,m}^{*} = \frac{\sqrt{\frac{\gamma_m^{\mathrm{T}}D_m+\gamma_m^{\mathrm{E}}p_mD_m}{r_{s,m}}}}{\sum_{i\in \mathbf{M}_s(\mathcal{A})}\sqrt{\frac{\gamma_i^{\mathrm{T}}D_i+\gamma_i^{\mathrm{E}}p_iD_i}{r_{s,i}}}}.
    \end{cases}
\end{sequation}
\end{theorem}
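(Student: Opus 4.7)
The plan is to leverage the convexity established in Lemma~\ref{lem-convex} so that the KKT conditions are both necessary and sufficient for optimality, and then solve them in closed form. First I would observe that the objective in \eqref{P1.1} is fully separable across servers $s\in\mathcal{S}$, and within each server the variables $\{z_{s,m}\}_{m\in\mathbf{M}_s(\mathcal{A})}$ are decoupled from $\{w_{s,m}\}_{m\in\mathbf{M}_s(\mathcal{A})}$ in both the objective and the constraints \eqref{P1.1a}--\eqref{P1.1d}. Consequently the problem decomposes into $2|\mathcal{S}|$ independent convex subproblems of the generic form
\begin{equation*}
\min_{x_m \ge 0,\ \sum_m x_m \le 1}\ \sum_{m\in\mathbf{M}_s(\mathcal{A})} \frac{\alpha_m}{x_m},
\end{equation*}
where for the $z$-subproblem $\alpha_m = \gamma_m^{\mathrm{T}}\eta_m D_m / F_s^{\mathrm{max}}$ and for the $w$-subproblem $\alpha_m = (\gamma_m^{\mathrm{T}}D_m + \gamma_m^{\mathrm{E}}p_m D_m)/r_{s,m}$.

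Next I would dispose of the inequality constraints. Since $\alpha_m > 0$ and $\alpha_m/x_m \to +\infty$ as $x_m \downarrow 0$, the non-negativity constraint is strictly inactive at any optimum, so the Lagrange multiplier on $x_m \ge 0$ vanishes. Moreover, since $\alpha_m/x_m$ is strictly decreasing in $x_m$, uniformly enlarging $x_m$ strictly decreases the objective; hence the budget constraint $\sum_m x_m \le 1$ must bind at the optimum. Writing the Lagrangian $\mathcal{L} = \sum_m \alpha_m/x_m + \lambda(\sum_m x_m - 1)$ with $\lambda > 0$ and setting $\partial \mathcal{L}/\partial x_m = 0$ yields $x_m^{*} = \sqrt{\alpha_m/\lambda}$. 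Substituting into the active constraint $\sum_m x_m^{*} = 1$ gives $\sqrt{1/\lambda} = 1/\sum_{i}\sqrt{\alpha_i}$, and therefore
\begin{equation*}
x_m^{*} = \frac{\sqrt{\alpha_m}}{\sum_{i\in\mathbf{M}_s(\mathcal{A})}\sqrt{\alpha_i}}.
\end{equation*}

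Finally I would specialize this closed form to the two choices of $\alpha_m$ to recover exactly the expressions in \eqref{eq.solution}, and invoke Lemma~\ref{lem-convex} together with Slater's condition (e.g.\ $x_m = 1/(|\mathbf{M}_s(\mathcal{A})|+1)$ strictly feasible) to conclude that the stationary point produced by the KKT system is the unique global minimizer. The only subtlety, which I would handle explicitly, is verifying that the sum constraint is tight and the positivity constraints are strict; everything else is a routine stationarity calculation. I do not anticipate any real obstacle, since convexity plus separability reduces the problem to an elementary Cauchy--Schwarz-style allocation that admits a closed-form solution.
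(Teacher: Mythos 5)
Your proposal is correct and follows essentially the same route as the paper: both establish optimality via convexity (Lemma~\ref{lem-convex}) plus the KKT conditions of the Lagrangian of $\textbf{P1.1}$, and your explicit steps (separability across servers and between $\mathcal{Z}$ and $\mathcal{W}$, inactivity of the non-negativity constraints, tightness of the budget constraint, and the stationarity calculation $x_m^{*}=\sqrt{\alpha_m/\lambda}$) are exactly the omitted details behind the paper's phrase ``solving the above KKT conditions.''
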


%
%
\begin{proof}
Refer to Section III of the supplementary material.
\end{proof}

%
%
\subsubsection{Task Offloading}
\label{subsubsec:Task Offloading}

\par For UD $m$, let us define $U_m^{\text{loc}}$ as the utility of local computing, $U_m^{\text{SUAV}}$ as the utility of SUAV-assisted computing, and $U_m^{\text{LUAV}}$ as the utility of LUAV-assisted computing, which can be given as follows:

\vspace{-0.8em}
{\small\begin{align}
&U_m^{\text{loc}}=\gamma_m^{\text{T}}T_m^{\text{loc}}+\gamma_m^{\text{E}}E_m^{\text{loc}},\label{eq.u_loc}\\
&U_m^{\text{SUAV}}(\mathcal{A})=\frac{Q_n^{\text{c}}}{V}E_{n,m}^{\text{c}}+\gamma_m^{\text{T}}T_m^{\text{SUAV}}+\gamma_m^{\text{E}}E_m^{\text{SUAV}}.\label{eq.u_sec}\\
&U_m^{\text{LUAV}}(\mathcal{A})=\gamma_m^{\text{T}}T_m^{\text{LUAV}}+\gamma_m^{\text{E}}E_m^{\text{LUAV}}.\label{eq.u_lec}
\end{align}}

\noindent Therefore, we can design the utility function of UD $m$ as
\begin{equation}
   \label{eq.utility}
   U_m(\mathcal{A}) =
       \begin{cases}
           U_m^{\text{loc}},a_m=0,\\
           U_m^{\text{SUAV}}(\mathcal{A}),a_m = n,\forall n\in\mathcal{N},\\
           U_m^{\text{LUAV}}(\mathcal{A}),a_m=u.
       \end{cases} \\
\end{equation}

\noindent According to the optimal resource allocation policy $\{\mathcal{F}^{*},\mathcal{W}^{*}\}$ and removing irrelevant constant terms, problem $ \textbf{P1}$ can be reformulated into a task offloading subproblem $\textbf{P1.2}$ as follows:

\vspace{-0.8em}
 {\small \begin{align}
    \textbf{P1.2}:\quad &\underset{\mathcal{A}}{\text{min}}\sum_{m\in \mathcal{M}}U_m(\mathcal{A}) \label{P1.2}\\
    \text{s.t.}\ \
    &(\ref{Pb})\ \text{and}\ (\ref{Pc}). \notag
\end{align}}

\par However, as shown in Eq.~\eqref{eq.solution}, the task offloading decision for UD $m$ is influenced by not only its individual demand, but also the offloading decisions made by other UDs. Considering the inherent competitive dynamics of task offloading among UDs, we utilize game theory to address this subproblem.

\par \textbf{(1) Game Formulation.} We first model the task offloading subproblem as a multi-UDs task offloading game (MUD-TOG). Specifically, the MUD-TOG can be defined as a triplet $\Gamma=\{\mathcal{M},\mathbb{A}, (U_m(\mathcal{A}))_{m\in \mathcal{M}}\}$, which is detailed as follows:

\begin{itemize}
\item $\mathcal{M}=\{1,2,\dots,M\}$ denotes the set of players, i.e., all UDs.
\item $\mathbb{A}=\mathbf{A}_1\times\dots\times\mathbf{A}_M$ denotes the strategy space, wherein $\mathbf{A}_m=\{0,u\}\cup{\mathcal{N}}$ is the set of offloading strategies for player $m\ (m\in \mathcal{M})$, $a_m\in\mathbf{A}_m$ denotes the offloading decision of player $m$, and $\mathcal{A}=(a_1,\dots,a_M)\in \mathbb{A}$ denotes a strategy profile.
\item $(U_m(\mathcal{A}))_{m\in \mathcal{M}}$ is the utility function of player $m$ that assigns a real number to each strategy profile $\mathcal{A}$.
\end{itemize}

\noindent Every player strives to minimize its utility by selecting an appropriate strategy. Mathematically, the MUD-TOG can be described as the following distributed optimization problem
\begin{sequation}
    \label{eq.task-offloading}
    \underset{a_m}{\text{min}}\ U_m(a_m,a_{-m}),\ \forall m \in \mathcal{M},
\end{sequation}

\noindent where $a_{-m}=(a_1,\dots,a_{m-1},a_{m+1},\dots,a_M)$ denotes the offloading decisions of the other players except player $m$.

\par \textbf{(2) The Solution of MUD-TOG.} We begin by introducing the concept of Nash equilibrium, which is commonly used to describe the solutions of MUD-TOG. A Nash equilibrium signifies a strategic profile wherein no individual player possesses the motivation to unilaterally change their current strategy, which can be formally defined as Definition~\ref{def:def1}.
\begin{definition}
\label{def:def1}
If and only if a strategy profile $\mathcal{A}^*=(a_1^*,\dots,a_M^*)$ satisfies the following condition, it can be considered as a Nash equilibrium of game $\Gamma$
    \begin{sequation}
    U_m(a_m^*,a_{-m}^*)\leq U_m(a_m^{\prime},a_{-m}^*) \quad  \forall a_m^{\prime}\in \mathbf{A}_m, m \in \mathcal{M}.
    \end{sequation}
\end{definition}

\par Next, we introduce an important framework called the exact potential game~\cite{potential} through Definitions~\ref{def:def2},~\ref{def:def_FIP}, and~\ref{def:FIP}, to analyze whether there is a Nash equilibrium for MUD-TOG and how to obtain a Nash equilibrium.
\begin{definition}
    \label{def:def2}
    If the game $\Gamma$ has a potential function $F(\mathcal{A}): \mathbb{A} \mapsto \mathbb{R}$ that satisfies the following condition, it can be regarded as an exact potential game.  
    \begin{sequation}
    \label{PG-def}
    \begin{split}
        &U_m(a_m,a_{-m})-U_m(a_m^{\prime},a_{-m})=F(a_m,a_{-m})-F(a_m^{\prime},a_{-m}), \\ &\forall (a_m,a_{-m}),(a_m^{\prime},a_{-m})\in \mathbb{A}.   
    \end{split}
    \end{sequation}
\end{definition}

\begin{definition}
	A Nash equilibrium and the finite improvement path (FIP) property always exist for an exact potential game with finite strategy sets
~\cite{potential,2016Potential}. 
    \label{def:def_FIP}
\end{definition}

\begin{definition}
	\label{def:FIP}
	A sequence of strategy profiles $\rho = (\mathcal{A}^0, \mathcal{A}^1, \mathcal{A}^2, \ldots)$ is called an improvement path if for every index $i\geq 0$, $\mathcal{A}^{i+1}$ is obtained from $\mathcal{A}^{i}$ by allowing a player $m(i)$ (the single deviator in step $i$) to change their strategy and improve their utility, i.e., $U_{m(i)}(\mathcal{A}^{i+1})<U_{m(i)}(\mathcal{A}^{i})$. Furthermore, a game is said to have the FIP property if every improvement path in the game is of finite length.
\end{definition}

\par The FIP property implies that any best-response correspondence can reach a Nash equilibrium in a finite number of iterations~\cite{potential}. Definition~\ref{def:def_bestresponse} provides the formal definition of the best-response correspondence.

\begin{definition}
	\label{def:best-response}
    For each player $m\in\mathcal{M}$, their best-response correspondence corresponds to a set-valued mapping $\mathbf{B}_m(a_{-m})$: $\mathbf{A}_{-m}\longmapsto \mathbf{A}_{m}$ such that
    \begin{sequation}
    \label{eq.bestresponse}
        \mathbf{B}_m(a_{-m})=\big\{a_m^* \mid a_m^* \in \underset{a_m \in \mathbf{A}_m}{\arg \min } U_m\left(a_m, a_{-m}\right)\big\} .
    \end{sequation}
    \label{def:def_bestresponse}
\end{definition}

\par Therefore, by demonstrating that the MUD-TOG is an exact potential game, we can obtain a Nash equilibrium solution for it. The proof for this is provided in Theorem~\ref{theorem-PG}.

%
%

\begin{theorem}
\par The MUD-TOG is an exact potential game with the following potential function
\begin{sequation}
\label{eq.PF}
\begin{aligned}
  &F(\mathcal{A})=\sum_{i\in \mathcal{M}}\sum_{n\in \mathcal{N}}I_{\{a_i=n\}}\left(\frac{Q_n^\mathrm{c}}{V}E_{n,i}^{\mathrm{c}}+\beta_{n,i}\sum_{j\leq i}I_{\{a_j=n\}}\beta_{n,j}+\right. \\
  &\left.\phi_{n,i}\sum_{j\leq i}I_{\{a_j=n\}}\phi_{n,j}\right)+\sum_{i\in \mathcal{M}}I_{\{a_i=u\}}\left(\beta_{u,i}\sum_{j\leq i}I_{\{a_j=u\}}\beta_{u,j}\right.\\
  &\left.+\phi_{u,i}\sum_{j\leq i}I_{\{a_j=u\}}\phi_{u,j}\right)+\sum_{i\in \mathcal{M}}I_{\{a_i=0\}}U_i^{\mathrm{loc}},\ \forall j\in \mathcal{M},
\end{aligned}
\end{sequation}

\noindent where $\beta_{s,i}=\sqrt{\frac{\gamma_i\eta_iD_i}{F_s^{\mathrm{max}}}}$, $\phi_{s,i}=\sqrt{\frac{\gamma_iD_i+(1-\gamma_i)p_iD_i}{r_{s,i}}}$ and $s\in \mathcal{S}$.
\label{theorem-PG}
\end{theorem}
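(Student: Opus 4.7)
The plan is to verify the defining equation of an exact potential game, namely
\[
U_m(a_m,a_{-m})-U_m(a_m^{\prime},a_{-m}) \;=\; F(a_m,a_{-m})-F(a_m^{\prime},a_{-m})
\]
for every player $m$ and every unilateral deviation $a_m\to a_m^{\prime}$. To make the comparison tractable, I would first substitute the optimal resource allocation from Theorem~\ref{them:resource-allocation} back into the utility definitions \eqref{eq.u_loc}--\eqref{eq.u_lec}. For SUAV offloading, $\frac{\gamma_m^{\text{T}}\eta_m D_m}{z_{s,m}^{*} F_s^{\text{max}}}$ collapses to $\beta_{s,m}\sum_{i\in\mathbf{M}_s(\mathcal{A})}\beta_{s,i}$, and the transmission term reduces to $\phi_{s,m}\sum_{i\in\mathbf{M}_s(\mathcal{A})}\phi_{s,i}$. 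Thus
\[
U_m^{\text{SUAV}}(\mathcal{A})=\frac{Q_n^{\text{c}}}{V}E_{n,m}^{\text{c}}+\beta_{n,m}\!\!\sum_{i\in\mathbf{M}_n(\mathcal{A})}\!\!\beta_{n,i}+\phi_{n,m}\!\!\sum_{i\in\mathbf{M}_n(\mathcal{A})}\!\!\phi_{n,i},
\]
and an analogous identity holds for $U_m^{\text{LUAV}}(\mathcal{A})$ with the $Q_u^{\text{c}}$ term absent because LUAV energy is ignored. These are exactly the congestion-type summands that appear in the stated $F(\mathcal{A})$.

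Next, I would rewrite $F(\mathcal{A})$ in a symmetric form to expose its structure. Because the inner summation in \eqref{eq.PF} runs over $j\le i$, one has the classical identity
\[
\sum_{i\in\mathbf{M}_s(\mathcal{A})}\beta_{s,i}\!\sum_{j\le i,\,a_j=s}\!\beta_{s,j}=\tfrac{1}{2}\Bigl[\bigl(\!\!\sum_{i\in\mathbf{M}_s(\mathcal{A})}\!\!\beta_{s,i}\bigr)^{\!2}+\!\!\sum_{i\in\mathbf{M}_s(\mathcal{A})}\!\!\beta_{s,i}^{\,2}\Bigr],
\]
and similarly for $\phi_{s,\cdot}$. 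This is the Rosenthal-style potential for congestion games. Using this rewriting, adding or removing player $m$ from the set $\mathbf{M}_s(\mathcal{A})$ changes the corresponding quadratic block by exactly $\beta_{s,m}\sum_{i\in\mathbf{M}_s(\mathcal{A}')}\beta_{s,i}+\phi_{s,m}\sum_{i\in\mathbf{M}_s(\mathcal{A}')}\phi_{s,i}$, where $\mathcal{A}'$ denotes the profile that includes $m$ at $s$. The $\frac{Q_n^{\text{c}}}{V}E_{n,i}^{\text{c}}$ and $U_i^{\text{loc}}$ contributions are additively separable in $i$, so they contribute only through the $i=m$ term.

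The verification then proceeds by case analysis on the deviation $a_m\to a_m^{\prime}$: (i) local$\leftrightarrow$SUAV$n$, (ii) local$\leftrightarrow$LUAV, (iii) SUAV$n\leftrightarrow$SUAV$n'$, (iv) SUAV$\leftrightarrow$LUAV. In each case, $\Delta F$ is computed by subtracting the block contribution at the vacated server and adding it at the new one, and the result matches $\Delta U_m$ term-for-term thanks to the identification above. Once \eqref{PG-def} is established, Definition~\ref{def:def_FIP} immediately certifies MUD-TOG as an exact potential game, which is what Theorem~\ref{theorem-PG} asserts.

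The main obstacle is purely bookkeeping: the $j\le i$ truncation in \eqref{eq.PF} means one must carefully separate the potential change into (a) the single $i=m$ term that flips between the SUAV, LUAV, and local branches and (b) the contribution from players $i>m$ at the same server whose inner sums acquire or lose the summand $j=m$. Once the quadratic-form rewriting is in place this double-counting is resolved cleanly, and the only remaining work is algebraic symmetry.
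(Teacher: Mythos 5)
Your proposal is correct and follows the same overall architecture as the paper's proof: substitute the closed-form optimal resource allocation from Theorem~\ref{them:resource-allocation} into the utilities so that $U_m^{\text{SUAV}}$ and $U_m^{\text{LUAV}}$ take the congestion form $\beta_{s,m}\sum_{i\in\mathbf{M}_s(\mathcal{A})}\beta_{s,i}+\phi_{s,m}\sum_{i\in\mathbf{M}_s(\mathcal{A})}\phi_{s,i}$ (plus the separable $\tfrac{Q_n^{\mathrm{c}}}{V}E_{n,m}^{\mathrm{c}}$ term for SUAVs), and then verify the exact-potential identity \eqref{PG-def} by the same four-case analysis of unilateral deviations (local/LUAV, local/SUAV, LUAV/SUAV, SUAV/SUAV$'$). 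The one place you genuinely depart from the paper is in how $\Delta F$ is computed: the paper keeps the ordered double sum $\sum_{j\le i}$ as is and introduces the truncated partial sums $\beta_s^{\le m}(\mathcal{A})$ and $\beta_s^{>m}(\mathcal{A})$, then checks in each case that the ``$i=m$ term'' plus the lost/gained $j=m$ summands in the terms with $i>m$ reassemble into the full server sum. You instead symmetrize first, rewriting each server block as the Rosenthal-type quadratic form $\tfrac{1}{2}\bigl[\bigl(\sum_{i\in\mathbf{M}_s}\beta_{s,i}\bigr)^{2}+\sum_{i\in\mathbf{M}_s}\beta_{s,i}^{2}\bigr]$, after which adding or removing player $m$ at server $s$ changes the block by exactly $\beta_{s,m}\sum_{i\in\mathbf{M}_s(\mathcal{A}')}\beta_{s,i}$ in a one-line computation. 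Your identity is correct (the $j\le i$ sum counts each off-diagonal pair once and each diagonal term once), and this route buys a cleaner difference calculation and makes explicit that MUD-TOG is a weighted congestion game with a standard quadratic potential; the paper's route avoids the rewriting step but pays for it with heavier bookkeeping inside each case. Both arguments establish the same conclusion.
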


%
%
\begin{proof}
Refer to Section IV of the supplementary material.
\end{proof}

\par Finally, let us consider the effect of constraint (\ref{Pc}) on the game. We can infer that imposing the constraint may render some strategy profiles infeasible. Suppose $\mathbb{A}^{\prime}$ is the feasible strategy space, this leads to a new game $\Gamma^{\prime}=\{\mathcal{M},\mathbb{A}^{\prime}, (U_m)_{m\in \mathcal{M}}\}$. Theorem~\ref{theo:theo_cons} demonstrates that the game $\Gamma^{\prime}$ is also an exact potential game.

%
%

\begin{theorem}
\label{theo:theo_cons}
$\Gamma^{\prime}$ possesses the same potential function as $\Gamma$, which also is an exact potential game.
\end{theorem}

%
%
\begin{proof}
Refer to Section V of the supplementary material.
\end{proof}

\par The main idea for solving the MUD-TOG is to iteratively update the offloading strategies of the players using the best-response correspondence until a Nash equilibrium is reached. This process is outlined in Algorithm \ref{Algorithm 1}, and the main steps are described as follows. \textbf{i)} All UDs choose local computing for the initial setting (Line 1). \textbf{ii)} During each iteration, the process is partitioned into $N$ decision slots (Lines 4-15). In each slot, one UD is chosen to update its offloading decision based on the best-response correspondence, while the offloading decisions of the remaining UDs remain unchanged. \textbf{iii)} If a lower utility is achieved and constraint (\ref{Pc}) is satisfied, the offloading decision of the UD is updated. Otherwise, the offloading decision remains unchanged. \textbf{iv)} The MUD-TOG achieves a Nash equilibrium when no UD modifies its offloading decision. 

\begin{algorithm}	
	\label{Algorithm 1}
	\SetAlgoLined
	\KwIn{The UD information $\{\mathbf{St}_m^{\mathrm{UD}}(t)\}_{m\in \mathcal{M}}$ and the current location $\{\mathbf{q}_n(t)\}_{n\in\mathcal{N}}$ of SUAVs.}
	\KwOut{The task offloading and resource allocation decisions $\{\mathcal{A}^{*},\mathcal{F}^{*},\mathcal{W}^{*}\}$.}
	\textbf{ Initialization:} 
	The iteration number $l=1$, $\mathcal{A}^0=\emptyset$ and $\mathcal{A}^1=\{0,\dots,0\}$\;
	\Repeat{$\mathcal{A}^{l-1} = \mathcal{A}^l$}
	{
		$\mathcal{A}^{l-1}=\mathcal{A}^{l}$\;
		\For{\text{UD} $m\in \mathcal{M}$}
		{
			$\mathbf{A}^{\prime}_m=\emptyset$\;
			\For{$a_m\in \mathbf{A}_m$}
			{
				Obtain $F_m^{*}$ and $w_m^{*}$ based on \eqref{eq.solution}\;
				Calculate $T_m$ based on (\ref{eq.delay})\;
				Calculate $U_m$ based on (\ref{eq.utility})\;
				\If{$T_m\leq T_m^{\mathrm{max}}$}
				{
					$\mathbf{A}^{\prime}_m=\mathbf{A}^{\prime}_m\cup\{a_m\}$\;
				}
				Calculate $\mathbf{B}_m(a_{-m})$ from $\mathbf{A}^{\prime}_m$ based on Eq. \eqref{eq.bestresponse}\;
				Randomly select a $a_m^{*}$ from $\mathbf{B}_m(a_{-m})$\;
			}
			$\mathcal{A}^{l}(m) = a_m^{*}$\;
		}
		Update $l=l+1$\;
	}
	$\mathcal{A}^{*}=\mathcal{A}^{l}$\;
	Obtain $\{\mathcal{F}^{*},\mathcal{W}^{*}\}$ based on Eq. \eqref{eq.solution}\;
	\Return{$\{\mathcal{A}^{*},\mathcal{F}^{*},\mathcal{W}^{*}\}$.}
	\caption{The First Stage Algorithm}
\end{algorithm}

\par \textbf{(3) Analysis of Nash Equilibrium.} Based on the above analysis, a Nash equilibrium can be obtained by the proposed algorithm. Next, we first analyze the computational complexity, which is presented in Theorem \ref{theo:Nash-complexity}.

%
%

\begin{theorem}
	\label{theo:Nash-complexity}
	\par The number of iterations $I_c$ required for the proposed algorithm to converge to a Nash equilibrium is upper bounded as follows:
	\begin{equation}
		I_c\leq \frac{M^2(\beta_{\text{max}}^2+\phi_{\text{max}}^2)+M(Q_{\text{max}}+U_{\text{max}}^{\text{loc}})}{\epsilon_{\text{min}}}.
	\end{equation}
\end{theorem}

\begin{proof}
	Refer to Section VI of the supplementary material.
\end{proof}

\par To evaluate the performance of the equilibrium solution, the Price of Anarchy (PoA) is introduced to quantify the gap between the worst-case equilibrium and the centralized optimal solutions, which can provide a bound on the sub-optimality of our proposed algorithm. Mathematically, PoA can be defined as
\begin{equation}
	\label{eq.def_poa}
	\mathrm{PoA}=\frac{\max _{\mathcal{A} \in \mathbf{A}^*} \sum_{m\in\mathcal{M}}U_m(\mathcal{A})}{\min _{\mathcal{A} \in \mathbb{A}} \sum_{m\in\mathcal{M}}U_m(\mathcal{A})},
\end{equation}

\noindent where $\mathbf{A}^*$ denotes the set of Nash equilibrium for the MUD-TOG. Clearly, a smaller PoA indicates better performance of the obtained equilibrium solution. In the following, we illustrate the bound of the PoA through Theorem \ref{the.PoA}.

%
%

\begin{theorem}
\par Let $a_m=m$ indicate the local computing, $U(\mathcal{A})$ denote the objective function of problem $\textbf{P1.2}$, i.e., $U(\mathcal{A})=\sum_{m\in\mathcal{M}}U_m(\mathcal{A})$, $\mathcal{A}^*$ indicates a Nash equilibrium, and $\hat{\mathcal{A}}$ denotes the centralized optimal strategy. For the formulated MUD-TOG, the PoA defined in Eq.~\eqref{eq.def_poa} satisfies
\label{the.PoA}
\begin{equation}
	1\leq \text{PoA} \leq 3 - \frac{\sum_{i\in\mathcal{I}}\big(G_i(\mathcal{A}^*)+G_i(\hat{\mathcal{A})}\big)}{U(\hat{\mathcal{A}})},
\end{equation}

\noindent where $\mathcal{I}=\mathcal{M}\cup\mathcal{N}\cup\{u\}$ represents the task offloading strategy space of all UDs, $G_i(\mathcal{A})=\sum_{m\in\mathbf{M}_i(\mathcal{A})}G_{i,m}$, and $\mathbf{M}_i(\mathcal{A})=\{m|a_m=i,\forall m\in\mathcal{M}\}$ denotes the set of UDs whose offloading decision satisfies $a_m=i$. Furthermore, $G_{i,m}=Q_i^{\text{c}}E_{i,m}^{\text{c}}/V$ if $i\in\mathcal{N}$; otherwise, $G_{i,m}=0$.
\end{theorem}

%
%

\begin{proof}
	Refer to Section VII of the supplementary material.
\end{proof}

%
%

\subsection{Stage 2: SUAV Trajectory Control}
\label{subsec:UAV Movement}

\par Given the task offloading decisions $\mathcal{A}^{*}$ and resource allocation $\{\mathcal{F}^{*},\mathcal{W}^{*}\}$, while removing irrelevant constant terms, problem $\textbf{P}^{\prime}$ can be converted into subproblem $\mathbf{P2}$ to decide trajectory control for SUAVs, which is given as follows:

\vspace{-0.8em}
{\small \begin{align}
    &\mathbf{P2}:\underset{\mathcal{Q}}{\text{min}}\ \sum_{n\in \mathcal{N}}\sum_{m\in \mathbf{M}_n(\mathcal{A})}V\frac{\gamma_m^{\text{T}}D_m+\gamma_m^{\text{E}}p_mD_m}{w_{n,m}^{*}B_n\log_2(1+\frac{\phi_{n,m}}{\|\mathbf{q}_{n^\prime}-\mathbf{q}_m\|^2+H^2})}+\notag \\
    &\sum_{n\in \mathcal{N}}Q_n^{\text{p}}\left(C_1\big(1+\frac{3 v_n^2}{U_{\text {p}}^2}\big)+C_2\sqrt{\sqrt{C_3+\frac{v_n^4}{4}}-\frac{v_n^2}{2}}+C_4v_n^3\right)\tau \label{P2}\\
    &\text{s.t.}\ (\ref{Ph}), \notag
\end{align}}

\noindent where $\mathbf{q}_{n^\prime}=\mathbf{q}_{n}(t+1)$, $\mathbf{q}_n=\mathbf{q}_{n}(t)$, $v_n=\frac{\|\mathbf{q}_{n^\prime}-\mathbf{q}_{n}\|}{\Delta t}$, and $\phi_{n,m} = \frac{p_m c^2[\rho_{n,m}^{\text{LoS}}|h_{n,m}^{\text{LoS}}|^2\eta^{\text{NLoS}}+(1-\rho_{n,m}^{\text{LoS}})|h_{n,m}^{\text{NLoS}}|^2\eta^{\text{LoS}}]}{(4\pi f_n)^2\varpi_0\eta^{\text{LoS}}\eta^{\text{NLoS}}}$. Clearly, the objective function in (\ref{P2}) is non-convex with respect to $\mathbf{q}_{n^\prime}$ due to the following non-convex terms
\begin{sequation}
\label{eq.non-convex-terms}
    \begin{cases}
    TMR_n=\sqrt{\sqrt{C_3+\frac{v_n^4}{4}}-\frac{v_n^2}{2}},\ \forall n\in\mathcal{N},\\
    TML_{n,m}=\frac{1}{\log_2\big(1+\frac{\phi_{n,m}}{\|\mathbf{q}_{n^\prime}-\mathbf{q}_m\|^2+H^2}\big)},\ \forall n\in\mathcal{N},m\in\mathbf{M}_n(\mathcal{A}). 
    \end{cases}
\end{sequation}

\noindent Therefore, solving problem $\textbf{P2}$ directly is difficult. Next, we transform the objective function into a convex function by introducing slack variables.

\par For the non-convex term $TMR_n$, we introduce the slack variable $\xi_n$ such that $\xi_n=TMR_n$ and add the following constraint:
\begin{sequation}
\label{eq.slack1}
    \xi_n \geq \sqrt{\sqrt{C_3+\frac{v_n^4}{4}}-\frac{v_n^2}{2}} \Longrightarrow \frac{C_3}{\xi_n^2} \leq \xi_n^2+v_n^2,\ \forall n\in\mathcal{N}.
\end{sequation}

\par For the non-convex term $TML_{n,m}$, we introduce the slack variable $\zeta_{n,m}$ such that $1/\zeta_{n,m}=TML_{n,m}$ and add the following constraint:
\begin{sequation}
\label{eq.slack2}
    \zeta_{n,m} \leq \log _2\big(1+\frac{\phi_{n,m}}{H^2+\left\|\mathbf{q}_{n^{\prime}}-\mathbf{q}_m\right\|^2}\big),\forall n\in\mathcal{N},m\in\mathbf{M}_n(\mathcal{A}).
\end{sequation}

\par According to the above-mentioned relaxation transformation, problem $\textbf{P2}$ can be equivalently transformed as follows:
\begin{align}
    \mathbf{P2}^{\prime}:&\ \underset{\mathcal{Q},\mathbf{\zeta},\mathbf{\xi}}{\text{min}}\ V \sum_{n\in\mathcal{N}}\sum_{m\in \mathbf{M}_n(\mathcal{A})}\frac{\gamma_m^{\text{T}}D_m+\gamma_m^{\text{E}}p_mD_m}{w_{n,m}^{*}B_n\zeta_{n,m}}\notag\\
    &+\sum_{n\in\mathcal{N}}Q_n^{\text{P}}\big(C_1\big(1+\frac{3 v_n^2}{U_{\text {p }}^2}\big)+C_2\xi_n+C_4v_n^3\big)\tau\label{P2_1}\\
    \text{s.t.} \ &(\ref{Ph}),(\ref{eq.slack1})\ and\ (\ref{eq.slack2}), \notag
\end{align}

\noindent where $\mathcal{Q}=\{\mathbf{q}_{n^\prime}\}_{n\in\mathcal{N}}$, $\mathbf{\zeta}=\{\zeta_{n,m}\}_{n\in\mathcal{N},m\in\mathbf{M}_n(\mathcal{A})}$ and $\mathbf{\xi}=\{\xi_n\}_{n\in\mathcal{N}}$.

\par Note that problem $\mathbf{P2}^{\prime}$ is equivalent to problem $\mathbf{P2}$, which can be demonstrated by Theorem \ref{the:the5}.

%
%

\begin{theorem}
\label{the:the5}
Problem $\mathbf{P2}^{\prime}$ is equivalent to problem $\mathbf{P2}$.
\end{theorem}

%
%

\begin{proof}
\label{pro:equ}
  Refer to Section VIII of the supplementary material.
\end{proof}

\par For problem $\mathbf{P2}^{\prime}$, the optimization objective (\ref{P2_1}) is convex but constraints (\ref{eq.slack1}), (\ref{eq.slack2}) and (\ref{cons2}) are still non-convex. Similar to~\cite{Yang2022,UAV-H}, the successive convex approximation (SCA) method is adopted to solve the non-convexity of above constraints, as demonstrated in the Theorems~\ref{pro:pro5-2-1}, ~\ref{pro:pro5-2-2}, and~\ref{pro:pro5-2-3}.

%
%

\begin{theorem}
\label{pro:pro5-2-1}
For constraint (\ref{eq.slack1}), let $f_n(\mathbf{q}_{n^\prime},\xi_n)=\xi_n^2+v_n^2$ and given a local point $\mathbf{q}^{(l)}_{n^\prime}$ at the $l$-th iteration, a global concave lower bound of $f_n(\mathbf{q}_{n^\prime},\xi_n)$ can be obtained as

\vspace{-0.8em}
{\small \begin{align}
    \label{eq.pro5-2-1}
    f_n^{(l)}(\mathbf{q}_{n^\prime},\xi_n) \triangleq&\left(\xi_n^{(l)}\right)^2+2 \xi_n^{(l)}\left(\xi_n-\xi_n^{(l)}\right)+\frac{\|\mathbf{q}_{n^{\prime}}^{(l)}-\mathbf{q}_n\|^2}{\tau^2}\notag\\
    &+\frac{2}{\tau^2}(\mathbf{q}_{n^{\prime}}^{(l)}-\mathbf{q}_n)^T\left(\mathbf{q}_{n^{\prime}}-\mathbf{q}_n\right),
\end{align}}

\noindent where $\xi_m^{(l)}$ is defined as
\begin{sequation}
    \label{eq.y-l}
    \xi_n^{(l)}=\sqrt{\sqrt{C_3+\frac{\|\mathbf{q}_{n^{\prime}}^{(l)}-\mathbf{q}_n\|^4}{4 \tau^4}}-\frac{\|\mathbf{q}_{n^{\prime}}^{(l)}-\mathbf{q}_n\|^2}{2 \tau^2}} .
\end{sequation}
\end{theorem}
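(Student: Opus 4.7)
The plan is to exploit the convexity of $f_n(\mathbf{q}_{n'},\xi_n) = \xi_n^2 + v_n^2 = \xi_n^2 + \|\mathbf{q}_{n'}-\mathbf{q}_n\|^2/\tau^2$ in both of its arguments and bound it below by its first-order Taylor expansion about the current iterate $(\mathbf{q}_{n'}^{(l)},\xi_n^{(l)})$. This is the standard successive convex approximation recipe for converting a lower-bound constraint on a convex quantity into a linear (hence both convex and concave) surrogate.

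First I would decompose $f_n$ into two manifestly convex pieces: the scalar quadratic $g_1(\xi_n)=\xi_n^2$, and the squared Euclidean distance $g_2(\mathbf{q}_{n'}) = \|\mathbf{q}_{n'}-\mathbf{q}_n\|^2/\tau^2$, which is convex in $\mathbf{q}_{n'}$ because the squared norm is convex and composition with the affine shift by $\mathbf{q}_n$ preserves convexity. Then I would invoke the first-order characterization of convex differentiable functions, namely $g(x)\ge g(x_0)+\nabla g(x_0)^{\top}(x-x_0)$ for all $x$, at $x_0 = \xi_n^{(l)}$ and $x_0 = \mathbf{q}_{n'}^{(l)}$ respectively. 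For $g_1$ this yields the tangent $(\xi_n^{(l)})^2 + 2\xi_n^{(l)}(\xi_n-\xi_n^{(l)})$; for $g_2$, using $\nabla g_2(\mathbf{q}_{n'}^{(l)}) = (2/\tau^2)(\mathbf{q}_{n'}^{(l)}-\mathbf{q}_n)$, it yields the affine function of $\mathbf{q}_{n'}$ appearing in~\eqref{eq.pro5-2-1}. Summing these two global affine minorants produces $f_n^{(l)}(\mathbf{q}_{n'},\xi_n)$; being affine in $(\mathbf{q}_{n'},\xi_n)$, it is simultaneously concave and convex, and by construction $f_n\ge f_n^{(l)}$ holds globally.

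Finally, I would justify the particular choice of $\xi_n^{(l)}$ in~\eqref{eq.y-l}. Evaluating the right-hand side of~\eqref{eq.slack1} at the local point $\mathbf{q}_{n'}^{(l)}$ and setting $\xi_n^{(l)}$ to satisfy the original inequality with equality there ensures that the surrogate constraint $C_3/\xi_n^2 \le f_n^{(l)}(\mathbf{q}_{n'},\xi_n)$ is tight at $(\mathbf{q}_{n'}^{(l)},\xi_n^{(l)})$. This tightness is what guarantees (i) that the feasible set of the SCA subproblem is contained in that of~\eqref{eq.slack1}, so any iterate remains feasible for the original problem, and (ii) monotonic non-increase of the objective across SCA iterations, which is the standard convergence mechanism. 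The conceptual content of the theorem is entirely contained in the tangent-inequality for convex functions; I expect the main obstacle to be purely the algebraic bookkeeping needed to gather the tangent of $\|\mathbf{q}_{n'}-\mathbf{q}_n\|^2/\tau^2$ into the compact form given in~\eqref{eq.pro5-2-1}, together with verifying that $\xi_n^{(l)}$ as defined in~\eqref{eq.y-l} is indeed the positive root required to make the constraint tight.
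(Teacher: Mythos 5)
Your proof takes exactly the same route as the paper's one-line argument: $f_n$ is jointly convex (a convex quadratic) in $(\mathbf{q}_{n^\prime},\xi_n)$, so its first-order Taylor expansion at the local point is a global affine, hence concave, lower bound; you merely spell out the gradient computation term by term and add the (correct, if strictly extraneous to the statement) remark about why the tightness choice of $\xi_n^{(l)}$ in \eqref{eq.y-l} matters for SCA feasibility and monotone descent. One caveat: the exact tangent of $\|\mathbf{q}_{n^{\prime}}-\mathbf{q}_n\|^2/\tau^2$ at $\mathbf{q}_{n^{\prime}}^{(l)}$ is $-\frac{\|\mathbf{q}_{n^{\prime}}^{(l)}-\mathbf{q}_n\|^2}{\tau^2}+\frac{2}{\tau^2}(\mathbf{q}_{n^{\prime}}^{(l)}-\mathbf{q}_n)^T(\mathbf{q}_{n^{\prime}}-\mathbf{q}_n)$, so the plus sign on the third term of \eqref{eq.pro5-2-1} is a sign typo in the printed statement rather than what your (correct) tangent computation actually yields --- as written, $f_n^{(l)}$ exceeds $f_n$ at the local point and is not a lower bound.
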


%
%

\begin{proof}
Refer to Section IX of the supplementary material.
\end{proof}

%
%

\begin{theorem}
\label{pro:pro5-2-2}

\par For constraint (\ref{eq.slack2}), let $g_{n,m}(\mathbf{q}_{n^\prime})=\log _2\big(1+\frac{\phi_{n,m}}{H^2+\left\|\mathbf{q}_{n^{\prime}}-\mathbf{q}_m\right\|^2}\big)$ and given a local point $\mathbf{q}^{(l)}_{n^\prime}$ at the $l$-th iteration, a global concave lower bound of $g_{n,m}(\mathbf{q}_{n^\prime})$ can be obtained as follows:

\vspace{-0.8em}
{\small \begin{align}
    \label{eq.taylor2}
    &g_{n,m}^{(l)}(\mathbf{q}_{n^\prime}) \triangleq \log _2\big(1+\frac{\phi_{n,m}}{H^2+\|\mathbf{q}_{n^{\prime}}^{(l)}-\mathbf{q}_m\|^2}\big)\notag \\
    & -\frac{\phi_{n,m} (\log_2 e)(\|\mathbf{q}_{n^{\prime}}-\mathbf{q}_m\|^2-\|\mathbf{q}_{n^{\prime}}^{(l)}-\mathbf{q}_m\|^2)}{[\phi_{n,m}+(H^2+\|\mathbf{q}_{n^{\prime}}^{(l)}-\mathbf{q}_m\|^2)](H^2+\|\mathbf{q}_{n^{\prime}}^{(l)}-\mathbf{q}_m\|^2)}.
\end{align}}
\end{theorem}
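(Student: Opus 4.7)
My plan is to reduce the multivariate function $g_{n,m}(\mathbf{q}_{n^\prime})$ to a scalar function of the squared distance, establish its convexity in that scalar, and then invoke the standard fact that the first-order Taylor expansion of a convex function is a global affine underestimator. Concretely, I would introduce the scalar $u = \|\mathbf{q}_{n^\prime} - \mathbf{q}_m\|^2$ and define $h(u) \triangleq \log_2(1 + \phi_{n,m}/(H^2 + u)) = \log_2(H^2 + u + \phi_{n,m}) - \log_2(H^2 + u)$, so that $g_{n,m}(\mathbf{q}_{n^\prime}) = h(\|\mathbf{q}_{n^\prime} - \mathbf{q}_m\|^2)$.

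First I would verify convexity of $h$ by direct differentiation, yielding $h'(u) = -\phi_{n,m}\log_2 e \,/\, [(H^2+u)(H^2+u+\phi_{n,m})] < 0$ and $h''(u) = \log_2 e \cdot [1/(H^2+u)^2 - 1/(H^2+u+\phi_{n,m})^2] > 0$, so $h$ is strictly convex and strictly decreasing on the admissible domain. Next, I would apply the standard first-order convex-underestimator inequality at the local point $u^{(l)} = \|\mathbf{q}_{n^\prime}^{(l)} - \mathbf{q}_m\|^2$, namely $h(u) \geq h(u^{(l)}) + h'(u^{(l)})(u - u^{(l)})$, and substitute the explicit expressions for $h(u^{(l)})$ and $h'(u^{(l)})$ back in. Plugging $u = \|\mathbf{q}_{n^\prime} - \mathbf{q}_m\|^2$ into the right-hand side reproduces exactly the formula for $g_{n,m}^{(l)}(\mathbf{q}_{n^\prime})$ in (\ref{eq.taylor2}), thereby establishing the claimed lower bound globally in $\mathbf{q}_{n^\prime}$.

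Finally, I would confirm that this surrogate is concave in $\mathbf{q}_{n^\prime}$ and not merely affine in $u$. Writing $g_{n,m}^{(l)}(\mathbf{q}_{n^\prime})$ as a constant plus $h'(u^{(l)}) \cdot \|\mathbf{q}_{n^\prime} - \mathbf{q}_m\|^2$, the coefficient $h'(u^{(l)})$ is strictly negative while $\|\mathbf{q}_{n^\prime} - \mathbf{q}_m\|^2$ is a convex quadratic in $\mathbf{q}_{n^\prime}$; multiplying a convex function by a negative scalar yields a concave function, so $g_{n,m}^{(l)}$ is concave as claimed. The main obstacle I expect is purely bookkeeping: tracking the sign of $h'(u^{(l)})$ carefully through the substitution so that the difference $\|\mathbf{q}_{n^\prime} - \mathbf{q}_m\|^2 - \|\mathbf{q}_{n^\prime}^{(l)} - \mathbf{q}_m\|^2$ appears with the positive coefficient displayed in (\ref{eq.taylor2}). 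Beyond this sign accounting and the short second-derivative computation that certifies convexity of $h$, the result is a textbook application of the convex-function tangent inequality combined with the observation that the convex squared-distance map composes with a negative scalar to yield concavity in $\mathbf{q}_{n^\prime}$.
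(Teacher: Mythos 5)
Your proposal is correct and follows essentially the same route as the paper's proof: reduce to the scalar function of the squared distance, certify convexity via the second derivative, and apply the first-order Taylor (tangent) inequality of a convex function before substituting back. Your explicit check that the surrogate is concave in $\mathbf{q}_{n^\prime}$ (negative coefficient times a convex quadratic) is a small addition the paper leaves implicit, but it does not change the argument.
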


%
%

\begin{proof}
	Refer to Section X of the supplementary material.
\end{proof}

%
%

\begin{theorem}
\label{pro:pro5-2-3}

\par For constraint (\ref{cons2}), let $h_{i,j}(\mathbf{q}_{i^\prime},\mathbf{q}_{j^\prime})=\|\mathbf{q}_{i^\prime}-\mathbf{q}_{j^\prime}\|^2$ and given a local point $(\mathbf{q}_{i^\prime}^{(l)},\mathbf{q}_{j^\prime}^{(l)})$ at the $l$-th iteration, a global concave lower bound of $h_{i,j}(\mathbf{q}_{i^\prime},\mathbf{q}_{j^\prime})$ can be obtained as follows:
\begin{align}
    \label{eq.pro5-2-3}
    h_{i,j}^{(l)}(\mathbf{q}_{i^\prime},\mathbf{q}_{j^\prime})\triangleq&2(\mathbf{q}_{i^\prime}^{(l)}-\mathbf{q}_{j^\prime}^{(l)})^{T}\notag \left[(\mathbf{q}_{i^\prime}-\mathbf{q}_{i^\prime}^{(l)})-(\mathbf{q}_{j^\prime}-\mathbf{q}_{j^\prime}^{(l)})\right]\\
    &+\|\mathbf{q}_{i^\prime}^{(l)}-\mathbf{q}_{j^\prime}^{(l)}\|^2.
\end{align}
\end{theorem}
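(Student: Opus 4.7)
The plan is to mirror the template used in Theorem~\ref{pro:pro5-2-1}: show that $h_{i,j}$ is convex jointly in $(\mathbf{q}_{i^\prime},\mathbf{q}_{j^\prime})$, then invoke the elementary fact that a convex function always lies above any of its first-order Taylor expansions, so the expansion itself is a global affine (hence concave) lower bound.

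First I would establish convexity. Stacking the arguments into $\mathbf{z}=[\mathbf{q}_{i^\prime};\mathbf{q}_{j^\prime}]$, one can write $h_{i,j}(\mathbf{z})=\mathbf{z}^{T}M\mathbf{z}$ with the block matrix $M=\bigl[\begin{smallmatrix}I & -I\\ -I & I\end{smallmatrix}\bigr]$, which is positive semidefinite (its kernel consists exactly of vectors with $\mathbf{q}_{i^\prime}=\mathbf{q}_{j^\prime}$). Equivalently, $h_{i,j}$ is the squared Euclidean norm of the linear map $(\mathbf{q}_{i^\prime},\mathbf{q}_{j^\prime})\mapsto \mathbf{q}_{i^\prime}-\mathbf{q}_{j^\prime}$, and the composition of a convex function with an affine map is convex. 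Either route confirms $h_{i,j}$ is a convex quadratic.

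Next I would compute the partial gradients $\nabla_{\mathbf{q}_{i^\prime}} h_{i,j}=2(\mathbf{q}_{i^\prime}-\mathbf{q}_{j^\prime})$ and $\nabla_{\mathbf{q}_{j^\prime}} h_{i,j}=-2(\mathbf{q}_{i^\prime}-\mathbf{q}_{j^\prime})$, evaluate them at the local point $(\mathbf{q}_{i^\prime}^{(l)},\mathbf{q}_{j^\prime}^{(l)})$, and apply the first-order characterization of convexity to obtain
\begin{equation*}
h_{i,j}(\mathbf{q}_{i^\prime},\mathbf{q}_{j^\prime}) \;\geq\; \|\mathbf{q}_{i^\prime}^{(l)}-\mathbf{q}_{j^\prime}^{(l)}\|^{2} + 2(\mathbf{q}_{i^\prime}^{(l)}-\mathbf{q}_{j^\prime}^{(l)})^{T}\bigl[(\mathbf{q}_{i^\prime}-\mathbf{q}_{i^\prime}^{(l)})-(\mathbf{q}_{j^\prime}-\mathbf{q}_{j^\prime}^{(l)})\bigr],
\end{equation*}
which is exactly $h_{i,j}^{(l)}(\mathbf{q}_{i^\prime},\mathbf{q}_{j^\prime})$ as given in (\ref{eq.pro5-2-3}). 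The right-hand side is affine in $(\mathbf{q}_{i^\prime},\mathbf{q}_{j^\prime})$ and is therefore both convex and concave, so the concavity claim is automatic.

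There is essentially no hard step here; the proof is the same tangent-hyperplane argument already invoked in Theorems~\ref{pro:pro5-2-1} and~\ref{pro:pro5-2-2}, merely applied to a different convex quadratic. The only point requiring care is the sign bookkeeping of the two gradient contributions, which I would verify by explicitly expanding the affine expression and checking that it agrees with $\|\mathbf{q}_{i^\prime}^{(l)}-\mathbf{q}_{j^\prime}^{(l)}\|^2$ plus the two inner-product terms before concluding.
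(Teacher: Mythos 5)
Your proof is correct and follows essentially the same route as the paper's: establish that $h_{i,j}$ is a convex quadratic (positive semidefinite Hessian) and then invoke the fact that the first-order Taylor expansion of a convex function at any point is a global affine, hence concave, lower bound. The extra detail you supply on the block structure of the Hessian and the explicit partial gradients is merely a fleshed-out version of the paper's one-line argument, so there is nothing further to reconcile.
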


%
%

\begin{proof}
	Refer to Section XI of the supplementary material. 
\end{proof}

\par According to Theorems \ref{pro:pro5-2-1}, \ref{pro:pro5-2-2}, and \ref{pro:pro5-2-3}, at the $l$-th iteration, constraints (\ref{eq.slack1}), (\ref{eq.slack2}), and (\ref{cons2}) can be approximated as

{\small \begin{align}
    \label{eq.cons1}
    &\frac{C_3}{\xi_n^2}\leq f_n^{(l)}(\mathbf{q}_{n^\prime},\xi_n),\\
    &\zeta_{n,m}\leq g_{n,m}^{(l)}(\mathbf{q}_{n^\prime}),\\
    &(d^{\text{min}})^2\leq h_{i,j}^{(l)}(\mathbf{q}_{i^\prime},\mathbf{q}_{j^\prime}),
\end{align}}

\noindent which are convex. Therefore, problem $\mathbf{P2}^{\prime}$ is converted into a convex optimization problem, which can be efficiently resolved by off-the-shelf optimization tools like CVX~\cite{cvx}. We summarize the second stage algorithm in Algorithm \ref{Algorithm 2}. 

\vspace{-0.5em}
\begin{algorithm}
	\label{Algorithm 2}
	\SetAlgoLined
	\KwIn{The optimal task offloading and resource allocation decisions $\{\mathcal{A}^{*},\mathcal{F}^{*},\mathcal{W}^{*}\}$.}
	\KwOut{The next location $\{\mathbf{q}_{n^\prime}\}_{n\in\mathcal{N}}$.}
	\textbf{ Initialization:}
	The accuracy threshold $\varepsilon = 0.01$, the iterative number $l=1$, the local point $\mathbf{q}^{(0)}_{n^\prime}=\mathbf{q}_{n}$ and the objective function value $G^{(0)}=0$\;
	\Repeat{$|G^{(l)}-G^{(l-1)}|<\varepsilon$}{
		Calculate $\{\xi_n^{(l)}\}_{n\in\mathcal{N}}$ based on Eq.~\eqref{eq.y-l}\;
		Obtain the optimal position $\{\mathbf{q}^{*}_{n^\prime}\}_{n\in\mathcal{N}}$ and the objective value $G^{(l)}$ by solving problem $\mathbf{P2}^{\prime}$\;
		Update the local point $\mathbf{q}^{(l)}_{n^\prime}=\mathbf{q}^{*}_{n^\prime}$\;
		Update $l=l+1$\;}
	\Return{$\{\mathbf{q}^{*}_{n^\prime}\}_{n\in\mathcal{N}}$.}
	\vspace{-1pt}
	\caption{The Second Stage Algorithm}
\end{algorithm}
\vspace{-0.5em}

%
%

\subsection{Main Steps of OJTRTA and Performance Analysis}
\label{sec:Performance Analysis}

\par In this section, the main steps of OJTRTA are described in Algorithm \ref{Algorithm 3}, and the corresponding analyses are provided in Theorems~\ref{the:performance}, \ref{theorem-energy}, and \ref{the:complexity}. 

\par Specifically, at the beginning of each time slot, UDs first upload their task request information (line 3). Based on the collected UD information, the proposed approach derives the strategies of resource allocation and task offloading by invoking Algorithm 1 (line 4). Given these strategies, the proposed approach then determines the trajectory control for the SUAVs by invoking Algorithm 2 (line 5). Subsequently, the UDs execute their computational tasks according to the established offloading strategies, while the SUAVs allocate computing and communication resources to provide edge computing services and move to the next location (lines 6 and 7). Finally, the system cost and energy consumption queues are calculated and updated (lines 8 to 10).

\vspace{-0.5em}
\begin{algorithm}
    \label{Algorithm 3}
    \SetAlgoLined
    \KwIn{The energy queue $Q_n^{\text{c}}(1)=0$, $Q_n^{\text{p}}(1)=0$ and the control parameter $V$.}
    \KwOut{Time-averaged UD cost $TAC$.}
    \textbf{ Initialization:} 
    Initialize $TAC = 0$ and the initial position of SUAVs $\mathbf{q}_n(1)=\mathbf{q}_n^{\mathrm{ini}}$\;
    \For{$t=1$ to $t=T$}
    {
        Acquire the UD information $\{\mathbf{St}_m^{\text{UD}}(t)\}_{m\in \mathcal{M}}$\; 
        With fixed $\{\mathbf{q}_n(t)\}_{n\in\mathcal{N}}$, call Algorithm \ref{Algorithm 1} to obtain $\{\mathcal{A}^{*},\mathcal{F}^{*},\mathcal{W}^{*}\}$\;
        With fixed $\{\mathcal{A}^{*},\mathcal{F}^{*},\mathcal{W}^{*}\}$, call Algorithm \ref{Algorithm 2} to obtain $\{\mathbf{q}_{n^\prime}^{*}\}_{n\in\mathcal{N}}$\;
        All UDs perform their tasks based on $\mathcal{A}^{*}$ and obtain corresponding cost $C_m^{*}(t)$\;
        The SUAVs provides MEC service to the UDs and flies towards position $\{\mathbf{q}^{*}_{n^\prime}\}_{n\in\mathcal{N}}$\;
        System cost $C_s(t)=\sum_{m=1}^MC_m^{*}(t)$\;
        $TAC=TAC+C_s(t)$\;
        Update the energy queue $Q_n^{\mathrm{c}}(t+1)$ and $Q_n^{\mathrm{p}}(t+1)$ according to Eq.~\eqref{eq.queue}\;
        Update $t=t+1$\;
    }
    $TAC=TAC/T$\;
    \Return{$TAC$.}
    \caption{OJTRTA}
\end{algorithm}
\vspace{-1em}

%
%

\begin{theorem}
\label{the:performance}

\par Assume that the proposed algorithm results in an optimality gap $C\geq 0$ in solving $\mathbf{P}^{\prime}$ and $C_s^{\mathrm{opt}}$ denotes the optimal time-averaged UD cost that problem $\mathbf{P}$ can achieve over all policies given full knowledge of the future computing demands and locations for all UDs, the time-averaged UD cost achieved by the proposed algorithm is bounded by
\begin{sequation}
    \label{eq.performance}
    \frac{1}{T} \sum_{t=1}^T\sum_{m=1}^M C_m(t) \leq C_s^{\mathrm{opt}}+\frac{WT+C }{V},
\end{sequation}

\noindent where $W$ is defined in Theorem~\ref{the:drift-plus-penalty}.
\end{theorem}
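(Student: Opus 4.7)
The plan is to derive the bound via the standard drift-plus-penalty mechanism anchored on Theorem~\ref{the:drift-plus-penalty}, comparing OJTRTA against a stationary randomized benchmark policy. First, I would exploit the fact that OJTRTA, by construction, minimizes the per-slot surrogate that appears on the right-hand side of~\eqref{eq.theorem1}, up to the assumed optimality gap $C$ stemming from the two-stage (game-theoretic plus SCA) decomposition of $\mathbf{P}^{\prime}$. Consequently, for any alternative feasible per-slot decision $\pi$, the surrogate value attained by OJTRTA is no larger than that attained by $\pi$ plus $C$.

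Next, I would invoke a benchmark stationary randomized policy $\pi^{\star}$ whose actions are independent of the virtual-queue state $\boldsymbol{\Theta}(t)$ and which---by the existence theorem for constrained stochastic optimization underlying the Lyapunov framework cited in the paper---achieves $\mathbb{E}[C_s(t)] = C_s^{\mathrm{opt}}$ while respecting the expected energy budgets $\mathbb{E}[E_n^{\mathrm{c}}(t)] \leq \bar{E}_n^{\mathrm{c}}$ and $\mathbb{E}[E_n^{\mathrm{p}}(t)] \leq \bar{E}_n^{\mathrm{p}}$. Substituting $\pi^{\star}$ into the surrogate and using independence from $\boldsymbol{\Theta}(t)$ makes the queue-weighted drift terms non-positive in conditional expectation. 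Combined with the previous observation, this yields
\begin{equation*}
D(\boldsymbol{\Theta}(t)) \leq W + V\,C_s^{\mathrm{opt}} + C .
\end{equation*}

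Finally, I would unfold the definition $D(\boldsymbol{\Theta}(t)) = \mathbb{E}[L(\boldsymbol{\Theta}(t+1))-L(\boldsymbol{\Theta}(t))\mid\boldsymbol{\Theta}(t)] + V\,\mathbb{E}[C_s(t)\mid\boldsymbol{\Theta}(t)]$, take total expectations via the tower property, and telescopically sum the conditional drifts over $t=1,\dots,T$. Since $L\ge 0$ and $L(\boldsymbol{\Theta}(1))=0$ by the queue initialization, the boundary term $\mathbb{E}[L(\boldsymbol{\Theta}(T+1))]\ge 0$ can be discarded, leaving $V\sum_{t=1}^{T}\mathbb{E}[C_s(t)]\le WT+VT\,C_s^{\mathrm{opt}}+CT$, which rearranges into the stated bound. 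The main obstacle I anticipate is the construction of the benchmark policy $\pi^{\star}$: problem $\mathbf{P}$ couples the long-term constraint~\eqref{Pa} with stringent per-slot feasibility constraints (the delay cap~\eqref{Pc}, resource capacity constraints~\eqref{Pd}--\eqref{Pg}, and trajectory constraints~\eqref{Ph}), so verifying the existence of a stationary randomized policy that simultaneously attains $C_s^{\mathrm{opt}}$ and respects all of these constraints is the delicate non-trivial step; once that benchmark is available, the remaining drift-plus-penalty reduction is mechanical.
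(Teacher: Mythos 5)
Your proposal is correct and follows essentially the same route as the paper, which simply invokes Lemma 4.11 of~\cite{2010Neely} to obtain the $T$-slot drift-plus-penalty bound and then divides by $VT$ using $L(\boldsymbol{\Theta}(T))\ge 0$ and $L(\boldsymbol{\Theta}(1))=0$; your argument just unpacks that lemma explicitly (comparison against a queue-independent stationary randomized policy, per-slot surrogate bound with the gap $C$, and telescoping). Incidentally, your telescoped inequality $V\sum_{t=1}^{T}\mathbb{E}[C_s(t)]\le WT+VT\,C_s^{\mathrm{opt}}+CT$ yields the tighter constant $(W+C)/V$, which of course implies the stated bound $(WT+C)/V$.
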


%
%

\begin{proof}
	Refer to Section XII of the supplementary material.
\end{proof}

%
%

\begin{theorem}
\label{theorem-energy}

\par The proposed algorithm can satisfy the SUAV energy consumption constraint defined in (\ref{eq.eng-cons}).
\end{theorem}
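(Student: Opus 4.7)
The plan is to establish that both virtual energy queues $Q_n^{\mathrm{c}}(t)$ and $Q_n^{\mathrm{p}}(t)$ are mean rate stable, i.e., $\lim_{T\to\infty}\mathbb{E}[Q_n^{\mathrm{c}}(T)]/T = 0$ and $\lim_{T\to\infty}\mathbb{E}[Q_n^{\mathrm{p}}(T)]/T = 0$, and then translate mean rate stability of the queues into satisfaction of the long-term energy budget via the queue update recursion.

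First, I would unfold the queue dynamics. From \eqref{eq.queue}, dropping the $\max$ yields the telescoping lower bound $Q_n^{\mathrm{c}}(t+1) \geq Q_n^{\mathrm{c}}(t) + E_n^{\mathrm{c}}(t) - \bar{E_n^{\mathrm{c}}}$. Summing from $t=1$ to $T$ and using $Q_n^{\mathrm{c}}(1)=0$, I would obtain
\begin{equation}
\frac{1}{T}\sum_{t=1}^{T} E_n^{\mathrm{c}}(t) \;\leq\; \bar{E_n^{\mathrm{c}}} + \frac{Q_n^{\mathrm{c}}(T+1)}{T},
\end{equation}
and the analogous inequality holds for $Q_n^{\mathrm{p}}(t)$ with budget $\bar{E_n^{\mathrm{p}}}$. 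Adding these two inequalities and using $\bar{E_n^{\mathrm{c}}}+\bar{E_n^{\mathrm{p}}}=\bar{E}_n$ along with $E_n(t)=E_n^{\mathrm{c}}(t)+E_n^{\mathrm{p}}(t)$, the problem reduces to showing that the right-hand queue remainder vanishes as $T\to\infty$.

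The central step is to invoke Theorem~\ref{the:performance}. From the $T$-slot drift-plus-penalty bound established in its proof, namely $L(\boldsymbol{\Theta}(T+1))-L(\boldsymbol{\Theta}(1)) + V\sum_{t=1}^{T}C_s(t) \leq WT + CT + VT\,C_s^{\mathrm{opt}}$, together with $L(\boldsymbol{\Theta}(1))=0$ and $C_s(t)\geq 0$, I would derive
\begin{equation}
L(\boldsymbol{\Theta}(T+1)) \;\leq\; (W+C+V\,C_s^{\mathrm{opt}})\,T.
\end{equation}
Since $L(\boldsymbol{\Theta}(T+1)) = \tfrac{1}{2}\sum_{n}\bigl[(Q_n^{\mathrm{c}}(T+1))^2+(Q_n^{\mathrm{p}}(T+1))^2\bigr]$, this implies $(Q_n^{\mathrm{c}}(T+1))^2 \leq 2(W+C+V\,C_s^{\mathrm{opt}})\,T$ and the same for $Q_n^{\mathrm{p}}(T+1)$. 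Dividing by $T^2$ and taking the limit gives $\lim_{T\to\infty} Q_n^{\mathrm{c}}(T+1)/T = 0$, and likewise for the propulsion queue, which is precisely mean rate stability.

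Combining the two ingredients, I would take the limit superior of the summed inequality obtained in the first step, apply mean rate stability to kill the $Q_n^{\mathrm{c}}(T+1)/T$ and $Q_n^{\mathrm{p}}(T+1)/T$ terms, and conclude $\limsup_{T\to\infty}\frac{1}{T}\sum_{t=1}^{T}E_n(t)\leq \bar{E}_n$ for every $n\in\mathcal{N}$, which is exactly \eqref{eq.eng-cons}. The only subtlety I anticipate is justifying that the constant $C_s^{\mathrm{opt}}$ and the optimality gap $C$ are finite and independent of $T$; this is implicit in the bounded per-slot cost (task sizes, tolerable delays, and energy per slot are all bounded by assumption), so the argument goes through without further complication. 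In expectation, the same chain of inequalities holds verbatim after taking expectations on both sides.
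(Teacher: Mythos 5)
Your proof is correct and follows essentially the same route as the paper: establish rate stability of the two virtual energy queues and combine it with the telescoped queue recursion (the sample-path property) to recover the long-term constraint \eqref{eq.eng-cons}. The only difference is that you derive both ingredients explicitly---the sample-path inequality by telescoping \eqref{eq.queue}, and rate stability from the $T$-slot drift-plus-penalty bound $L(\boldsymbol{\Theta}(T))=O(T)$, which gives $Q_n^{\mathrm{c}}(T),Q_n^{\mathrm{p}}(T)=O(\sqrt{T})$---whereas the paper simply cites Theorem 4.13 and Lemma 2.1 of Neely's monograph for these two facts.
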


%
%

\begin{proof}
Refer to Section XIII of the supplementary material.
\end{proof}

%
%

\begin{theorem}
\label{the:complexity}

\par The proposed OJTRTA has a polynomial worst-case complexity in each time slot, i.e., $\mathcal{O}(I_cMN+(N+M)^{2.5}(N^2+M)\log_2(\frac{1}{\varepsilon}))$, where $M$ represents the number of UDs, $I_c$ denotes the number of iterations necessary for Algorithm \ref{Algorithm 1} to reach convergence at the Nash equilibrium, and $\varepsilon$ is the accuracy of SCA for solving problem $\mathbf{P2}^{\prime}$.

\end{theorem}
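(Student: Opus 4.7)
The plan is to bound the two stages of OJTRTA (Algorithm~\ref{Algorithm 3}) separately inside one time slot and then add the bounds. Within one slot Algorithm~\ref{Algorithm 1} is executed once, followed by Algorithm~\ref{Algorithm 2}, so the per-slot cost is simply the sum of the Stage-1 and Stage-2 costs.

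For Stage~1, I would count elementary operations inside a single outer iteration of Algorithm~\ref{Algorithm 1}. Each UD $m\in\mathcal{M}$ updates its decision by scanning its strategy set $\mathbf{A}_m=\{0\}\cup\{u\}\cup\mathcal{N}$ of cardinality $N+2$. For every candidate strategy, the closed-form resource allocation from Theorem~\ref{them:resource-allocation} together with the utility (\ref{eq.utility}) can be evaluated in $O(1)$ time, so a single outer iteration costs $O(M(N+2))=O(MN)$. Algorithm~\ref{Algorithm 1} terminates after $I_c$ such outer iterations, where the finiteness of $I_c$ is guaranteed by the FIP of an exact potential game with finite strategy space established in Theorems~\ref{theorem-PG} and~\ref{theo:theo_cons}. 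Hence Stage~1 contributes $O(I_c MN)$.

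For Stage~2, I would analyze one SCA outer iteration of Algorithm~\ref{Algorithm 2}. At each outer iteration the convex program $\mathbf{P2}^{\prime}$ is solved; it involves $n=O(N+M)$ decision variables (the $N$ planar waypoints $\mathbf{q}_{n^{\prime}}$, the $N$ slacks $\xi_n$, and at most $M$ slacks $\zeta_{n,m}$) and $m=O(N^{2}+M)$ convex constraints (the $O(N^{2})$ pairwise collision constraints linearized via Theorem~\ref{pro:pro5-2-3}, plus $O(N+M)$ speed, boundary and slack constraints coming from Theorems~\ref{pro:pro5-2-1} and~\ref{pro:pro5-2-2}). A standard primal-dual interior-point method applied to such a convex program has per Newton-step cost $O(n^{2}(n+m))$ and requires $O(\sqrt{n}\,\log_{2}(1/\varepsilon))$ steps to drive the duality gap below $\varepsilon$; since $m$ dominates $n$, this yields $O((N+M)^{2.5}(N^{2}+M)\log_{2}(1/\varepsilon))$ per SCA iteration. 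Because the SCA accuracy threshold and the overall $\log_{2}(1/\varepsilon)$ factor absorb the SCA outer loop, Stage~2 contributes $O((N+M)^{2.5}(N^{2}+M)\log_{2}(1/\varepsilon))$. Adding the two contributions gives the claimed bound.

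The main obstacle will be matching the generic interior-point complexity template to the specific structure of $\mathbf{P2}^{\prime}$ so that the exponent $2.5$ on $(N+M)$ and the factor $(N^{2}+M)$ line up exactly, rather than a looser $(N+M)^{3.5}$ bound; this hinges on observing that the pairwise collision constraints dominate the constraint count $m$, while the variable count $n$ stays linear in $N+M$. A secondary subtlety is that $I_{c}$ is kept symbolic in the statement: the FIP only guarantees finite termination of the best-response dynamics, and no tight polynomial bound on potential-game convergence is available in general, so the Stage-1 contribution is stated parametrically in $I_{c}$ rather than absorbed into a closed-form polynomial in $M$ and $N$.
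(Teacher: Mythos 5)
Your proposal is correct and follows essentially the same route as the paper's proof: decompose the per-slot cost into the two stages, bound Stage 1 by $\mathcal{O}(I_cMN)$ via the strategy-set scan over $I_c$ best-response iterations, and bound Stage 2 by counting the $\mathcal{O}(N+M)$ variables and $\mathcal{O}(N^2+M)$ constraints of $\mathbf{P2}^{\prime}$ and applying the standard interior-point complexity template (which the paper obtains by citation rather than derivation). The only cosmetic difference is that you make the per-Newton-step cost and the absorption of the SCA outer loop explicit, whereas the paper leaves both implicit.
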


%
%

\begin{proof}
Refer to Section XIV of the supplementary material.
\end{proof}

\par Note that the proposed approach strikes a trade-off between computational complexity and performance to ensure feasibility, which is reasonable. Specifically, for the complex JTRTOP, obtaining the optimal solution usually incurs a large computational overhead or is even impossible, making it unsuitable for the real-time decision-making scenario considered. In contrast, the proposed approach reduces computational complexity at the expense of some performance trade-offs, which is realistic since the complexity reduction often comes with performance degradation~\cite{Wu2020}. Despite these trade-offs, the proposed approach enables real-time decision-making and achieves a feasible solution that satisfies the requirements of UDs while meeting the constraints of the system. Moreover, the simulation results will demonstrate the superior performance of the proposed approach.

%
%
\section{Simulation Results}
\label{sec:Simulation Results and Analysis}

\par In this section, we conduct simulation experiments to validate the effectiveness of the proposed OJTRTA.

%
%

\subsection{Simulation Setup}
\label{subsec:Simulation setups}

\begin{table}
	\setlength{\abovecaptionskip}{-1em}%
	\setlength{\belowcaptionskip}{0pt}%
	\caption{Simulation Parameters}
	\label{parameters}
	\renewcommand*{\arraystretch}{1.15}
	\begin{center}
		\begin{tabular}{p{.06\textwidth}<{\centering}|p{.21\textwidth}|p{.13\textwidth}}
			\hline
			\hline
			\textbf{Symbol}&\textbf{Meaning}&\textbf{Value (Unit)}\\
			\hline
			$D_m$ &Task size &$[0.2,1]$ Mb\\
			\hline
			$\eta_m$ &Computation intensity of tasks &$[500,1500]$ cycles/bit \\
			\hline
			$T_m^{\text{max}}$ &Maximum tolerable delay of tasks &$1$ s\\
			\hline
			$\alpha$ &Memory level of velocity &$0.9$ \\
			\hline
			$\overline{\mathbf{v}}_m$ &the velocity of UD $m$ &$1$ m/s~\cite{Yang2022} \\
			\hline
			$\sigma_m$ & The asymptotic standard deviation of velocity &$2$~\cite{Yang2022} \\
			\hline
			$F_n^{\text{max}}$ &Computation resources of SUAVs &$20$ GHz\\
			\hline
			$v_n^{\text{max}}$ &Maximum flight speed of SUAVs &$25$ m/s~\cite{Yang2022}\\   
			\hline
			$d^{\text{min}}$ &Minimum safety distance &$10$ m\\
			\hline
			$F_u^{\text{max}}$ &Computation resources of LUAV &$30$ GHz\\
			\hline
			$B_s$ &Bandwidth of MEC server $s$ &$10$ MHz $(s=u)$, \newline $5$ MHz $(s\in \mathcal{N})$ \\
			\hline
			$p_m$ &Transmission power of UD $m$  &$20$ dBm \\
			\hline
			$\varpi_0$ &Noise power &$-98$ dBm\\
			\hline
			$c_1$, $c_2$ &Parameters for LoS probability &10, 0.6~\cite{HouraniSL14}\\
			\hline
			$\eta^{\text{LoS}},\eta^{\text{nLoS}}$ &Additional losses for LoS and NLoS links &1.0 dB, 20 dB~\cite{HouraniSL14}\\
			\hline
			$\kappa$ &CPU parameters &$10^{-28}$\\
			\hline
			$\varpi$  &Energy consumption per unit CPU cycle of SUAVs&$8.2\times10^{-9}$ J~\cite{JiangDXI23} \\
			\hline
			$U_{\text{p}}$ &Tip speed of the rotor&120 m/s \\
			\hline
			$\gamma_m^{\text{T}}$, $\gamma_m^{\text{E}}$ &The weight coefficients of task completion delay and energy consumption for UD $m$ & 0.7, 0.3\\
			\hline
		\end{tabular}
	\end{center}
\end{table}


\subsubsection{Scenario}
\label{subsubsec:Scenario}

\par We consider a multiple-UAV-assisted MEC system, where one LUAV and four SUAVs are deployed to collaboratively provide edge computing services for $60$ ground UDs in a $1,000\times1,000\ \text{m}^2$ rectangular service area. The system timeline is segmented into $100$ equal time slots, and the duration of each time slot is set to $\Delta t=1\ \text{s}$.

\subsubsection{Parameters}
\label{subsubsec:Parameters}

\par For the LUAV, the fixed horizontal position and altitude are set as $\mathbf{q}_u=[500, 500]$ and $H_u=300\ \text{m}$, respectively. For the SUAVs, the fixed altitude is set as $H=100\ \text{m}$, and the initial positions are set as $\mathbf{q}_1^{\text{ini}}=[100,100]$, $\mathbf{q}_2^{\text{ini}}=[100,900]$, $\mathbf{q}_3^{\text{ini}}=[900,900]$, and $\mathbf{q}_4^{\text{ini}}=[900,100]$, respectively. For the UDs, the computing capacity of each UD is randomly taken from $\{1,1.5,2\}\  \text{GHz}$. The default values for the remaining parameters can be found in Table~\ref{parameters}.

\subsubsection{Comparative Approaches}
\label{subsubsec:Comparative Approaches}

\par To validate the effectiveness of the proposed OJTRTA, this work compares OJTRTA with the following approaches.
\begin{itemize}
	
    \item \textit{Entire offloading (EO)}: All UDs offload their tasks to aerial servers for execution. This approach does not consider local computing.
    
    \item \textit{Equal resource allocation (ERA)}~\cite{Josilo}: The communication and computing resources of each edge server are equally allocated to the requested UDs.
    
    \item \textit{Fixed location deployment (FLP)}~\cite{Wang2022}: The SUAV are deployed at fixed positions within the service area to provide edge computing services.
    
    \item \textit{Only consider QoE (OCQ)}~\cite{JiangDXI23}: Ignoring the SUAV energy consumption constraint, all decisions are made only to minimize the time-averaged UD cost.
    
    \item \textit{DDPG-based joint optimization (DDPG-JO)}~\cite{10388042}: The decisions of resource allocation, task offloading, and trajectory control are decided by the DDPG algorithm.
    
    \item \textit{PPO-based joint optimization (PPO-JO)}~\cite{10418937}: The decisions of resource allocation, task offloading, and trajectory control are determined by the PPO algorithm.
    
\end{itemize}

\subsubsection{Performance Metrics}
\label{subsubsec:Performance Metrics}

\begin{figure*}[!hbt] 
	\centering
	\setlength{\abovecaptionskip}{2pt}%
	\setlength{\belowcaptionskip}{2pt}%
	\subfigure[]
	{
		\begin{minipage}[t]{0.23\linewidth}
			\centering
			\includegraphics[scale=0.31]{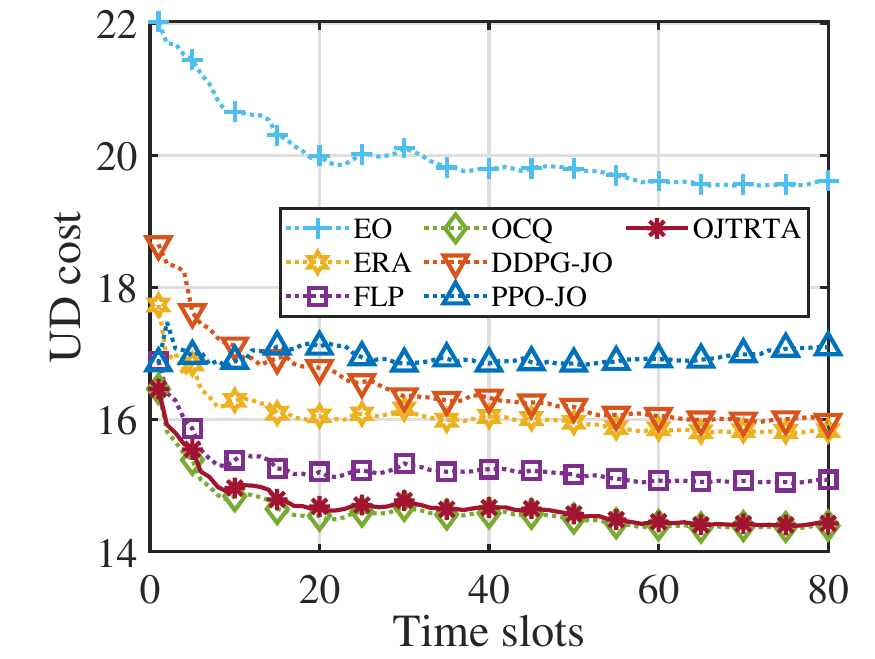}
		\end{minipage}
	}
	\subfigure[]
	{
		\begin{minipage}[t]{0.23\linewidth}
			\centering
			\includegraphics[scale=0.31]{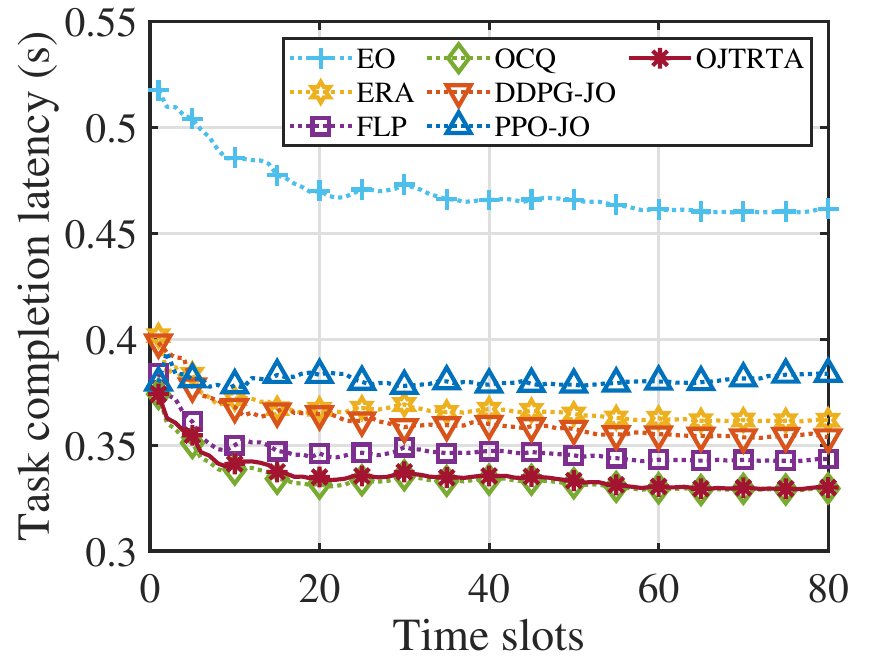}	
		\end{minipage}
	}
	\subfigure[]
	{
		\begin{minipage}[t]{0.23\linewidth}
			\centering
			\includegraphics[scale=0.31]{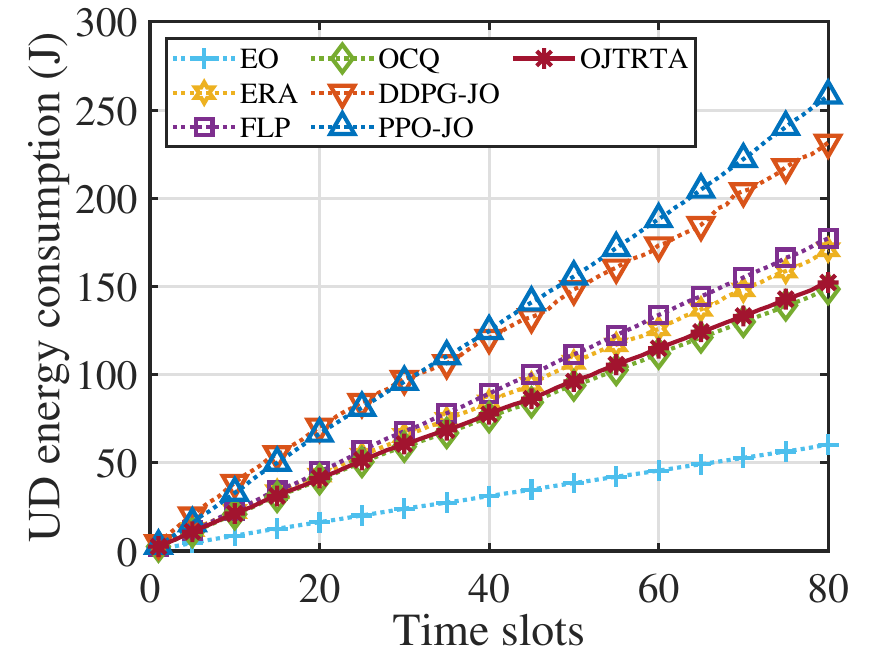}	
		\end{minipage}
	}
	\subfigure[]
	{
		\begin{minipage}[t]{0.23\linewidth}
			\centering
			\includegraphics[scale=0.31]{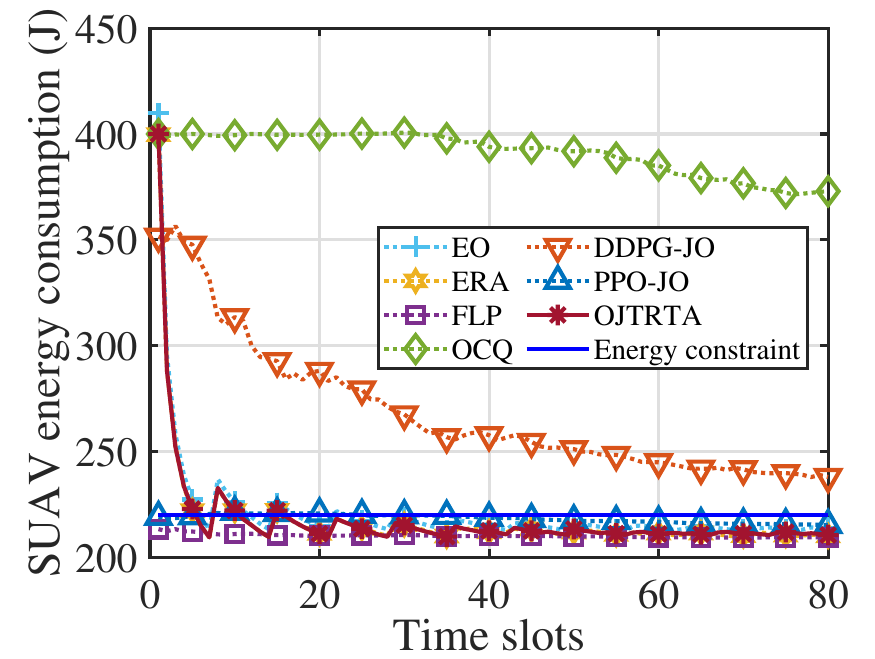}	
		\end{minipage}
	}
	\centering
	\caption{The impact of time slots on system performance. (a) Time-averaged UD cost. (b) Average task completion latency. (c) Cumulative UD energy consumption. (d) Time-averaged SUAV energy consumption.}
	\label{fig_time}
	\vspace{-1em}
\end{figure*}

\begin{figure*}[!hbt] 
	\centering
	\setlength{\abovecaptionskip}{2pt}%
	\setlength{\belowcaptionskip}{2pt}%
	\subfigure[]
	{
		\begin{minipage}[t]{0.23\linewidth}
			\centering
			\includegraphics[scale=0.31]{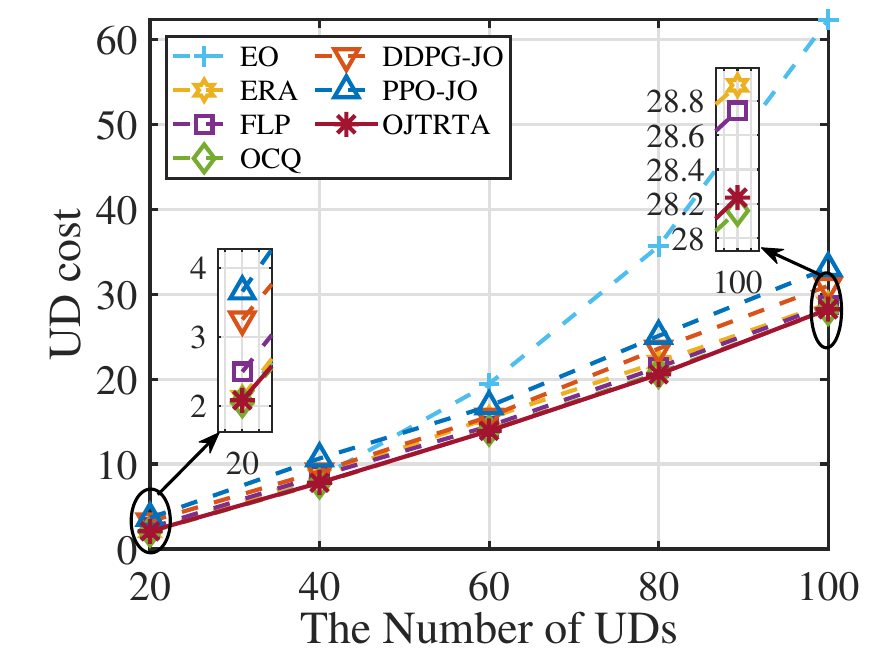}
		\end{minipage}
	}
	\subfigure[]
	{
		\begin{minipage}[t]{0.23\linewidth}
			\centering
			\includegraphics[scale=0.31]{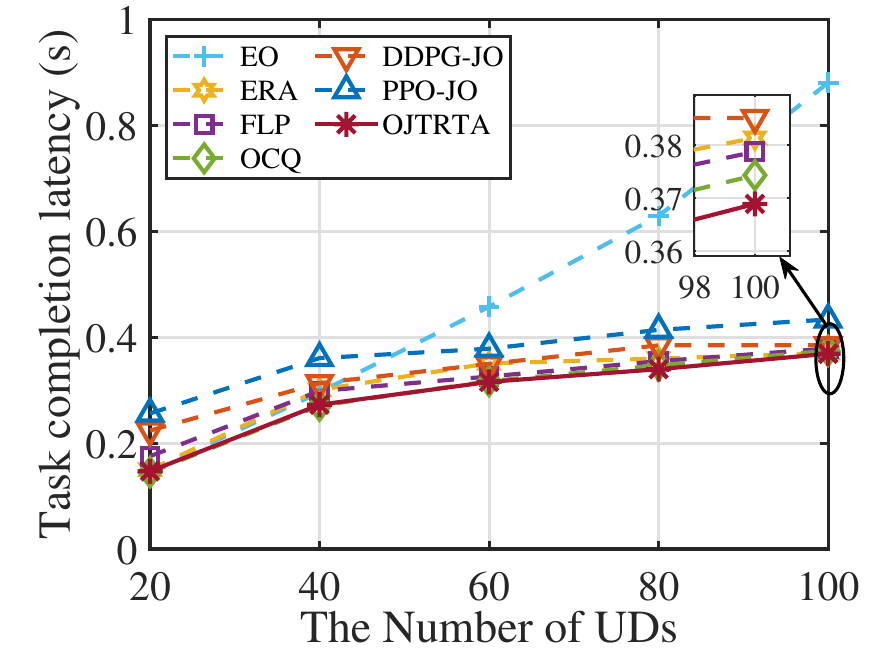}	
		\end{minipage}
	}
	\subfigure[]
	{
		\begin{minipage}[t]{0.23\linewidth}
			\centering
			\includegraphics[scale=0.31]{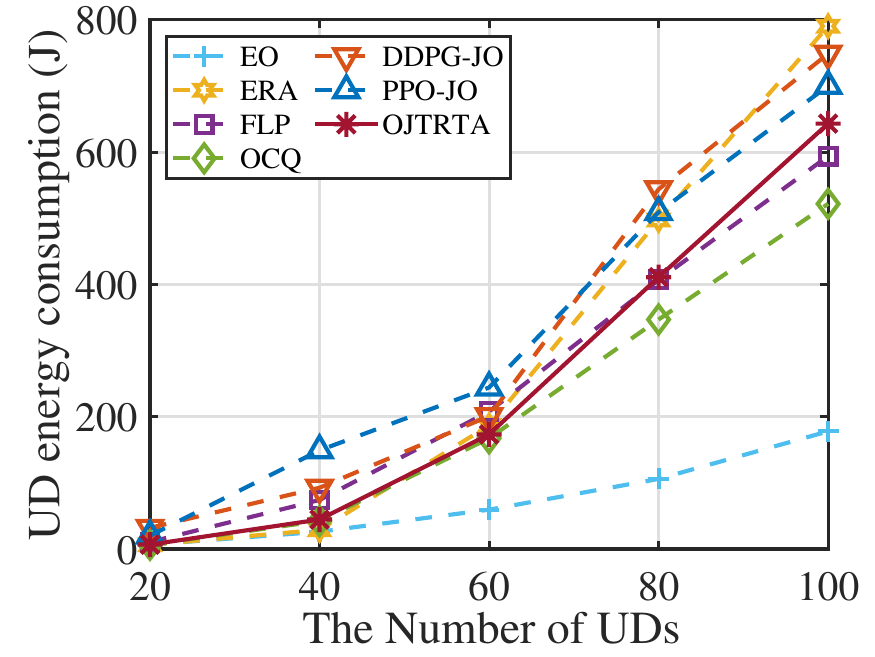}	
		\end{minipage}
	}
	\subfigure[]
	{
		\begin{minipage}[t]{0.23\linewidth}
			\centering
			\includegraphics[scale=0.31]{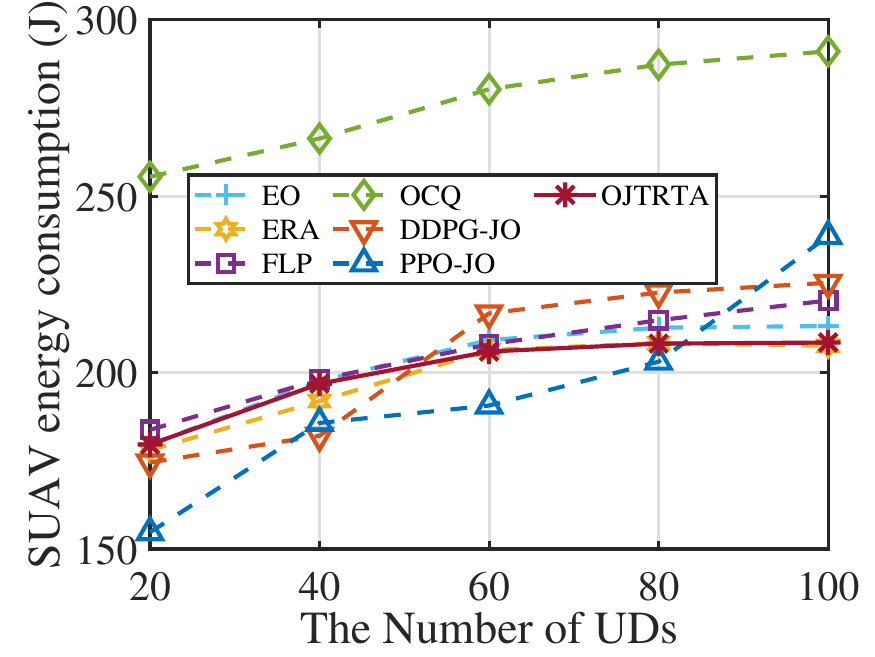}	
		\end{minipage}
	}
	\centering
	\caption{The impact of the numbers of UDs on system performance. (a) Time-averaged UD cost. (b) Average task completion latency. (c) Cumulative UD energy consumption. (d) Time-averaged SUAV energy consumption.}
	\label{fig_UDnum}
	\vspace{-1em}
\end{figure*}

\par We evaluate the overall performance of the proposed approach based on the following performance metrics. \textit{1) Time-averaged UD cost} $\frac{1}{T}\sum_{t=1}^{T}\sum_{m=1}^{M}C_m(t)$, which represents the average cumulative cost of all UDs per unit time. \textit{2) Average task completion latency} $\frac{1}{T}\sum_{t=1}^{T}\frac{1}{M}\sum_{m=1}^{M}T_m(t)$, which indicates the average latency for completing a task. \textit{3) Cumulative UD energy consumption} $\sum_{t=1}^{T}\sum_{m=1}^{M}E_m(t)$, which signifies the cumulative energy consumption of UDs over the system timeline. \textit{4) Time-averaged SUAV energy consumption} $\frac{1}{T}\sum_{t=1}^{T}\frac{1}{N}\sum_{n=1}^{N}E_n(t)$, which means the average energy consumption of each SUAV per unit time.
%
%
\subsection{Evaluation Results}
\label{subsec:Evaluation results}

\par In this section, we first evaluate the performance of the proposed OJTRTA over time with default parameters. Then, we compare the impacts of different parameters on the performance of OJTRTA in comparison with benchmark approaches.

\subsubsection{Online Offloading Performance Evaluation}
\label{subsubsec:Online Offloading Performance Evaluation}

\par Figs.~\ref{fig_time}(a), \ref{fig_time}(b), \ref{fig_time}(c), and \ref{fig_time}(d) show the dynamics of time-averaged UD cost, average task completion latency, cumulative UD energy consumption, and time-averaged SUAV energy consumption, respectively among the seven approaches. It can be observed that the time-averaged UD cost exhibits some fluctuations over time. This is primarily attributed to the time-varying computational demands of UDs. Additionally, OCQ slightly outperforms the proposed OJTRTA in terms of time-averaged UD cost, average task completion latency, and overall UD energy consumption. This is because OCQ does not consider the energy constraints of SUAVs and thus consumes more SUAV energy. Furthermore, the proposed OJTRTA outperforms EO, ERA, and FLP in terms of time-averaged UD cost and average task completion latency. This is due to the combined local and edge offloading strategy, optimal resource allocation strategy, and trajectory control strategy of OJTRTA. Moreover, compared to DRL-based approaches, i.e., DDPG-JO and PPO-JO, the proposed OJTRTA exhibits significant advantages in terms of time-averaged UD cost, average task completion delay, and cumulative UD energy consumption. This further exemplifies the effectiveness of the proposed approach. Finally, as shown in Fig. \ref{fig_time}(d), the proposed OJTRTA ensures that the long-term energy constraints of SUAVs are satisfied, which is consistent with the analysis in Theorem~\ref{theorem-energy}. In conclusion, the set of simulation results demonstrates the effectiveness of the proposed approach in achieving better overall performance while satisfying the long-term energy consumption constraints of SUAVs.

\begin{figure*}[!hbt] 
	\centering
	\setlength{\abovecaptionskip}{2pt}%
	\setlength{\belowcaptionskip}{2pt}%
	\subfigure[]
	{
		\begin{minipage}[t]{0.23\linewidth}
			\centering
			\includegraphics[scale=0.31]{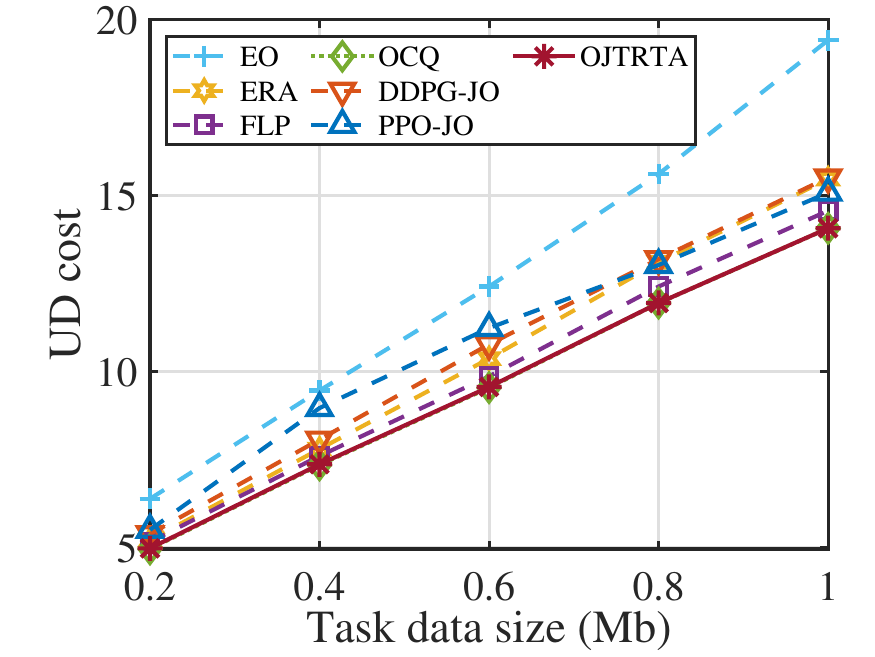}
		\end{minipage}
	}
	\subfigure[]
	{
		\begin{minipage}[t]{0.23\linewidth}
			\centering
			\includegraphics[scale=0.31]{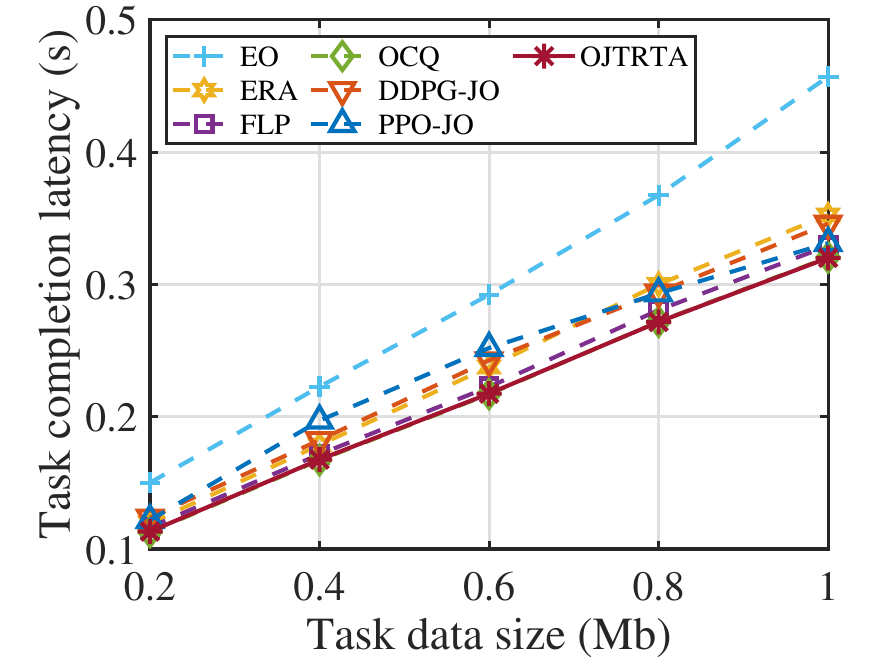}	
		\end{minipage}
	}
	\subfigure[]
	{
		\begin{minipage}[t]{0.23\linewidth}
			\centering
			\includegraphics[scale=0.31]{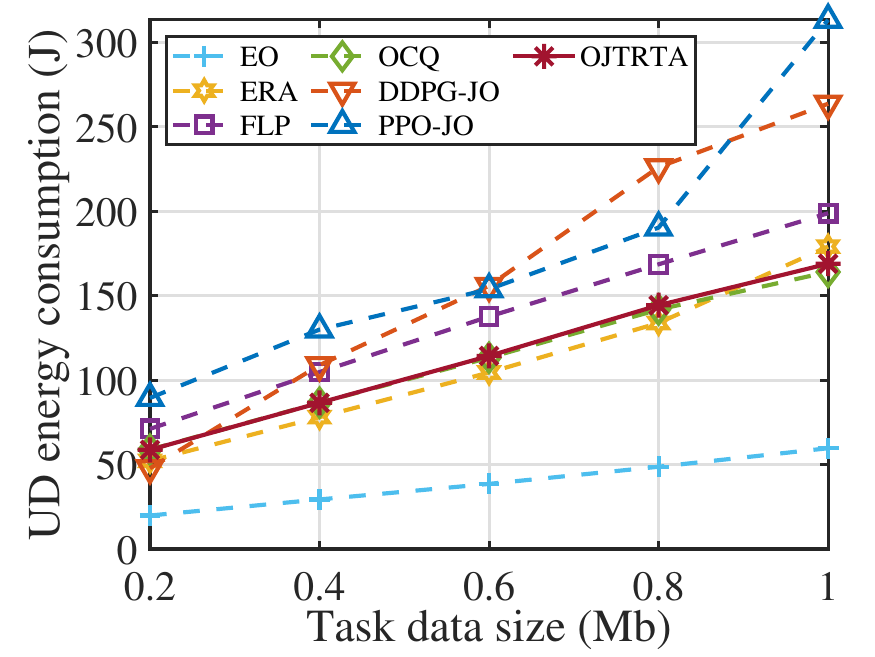}	
		\end{minipage}
	}
	\subfigure[]
	{
		\begin{minipage}[t]{0.23\linewidth}
			\centering
			\includegraphics[scale=0.31]{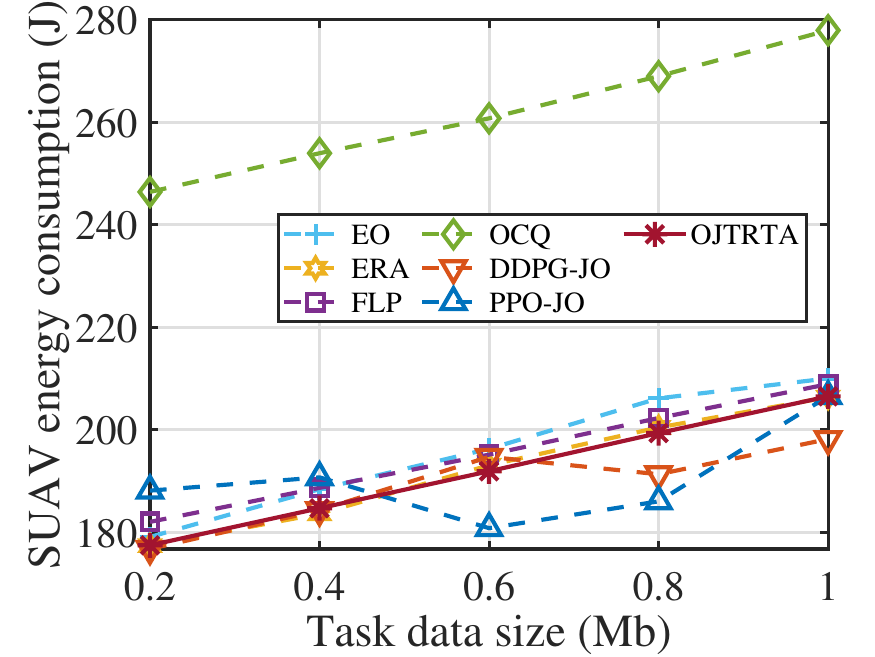}	
		\end{minipage}
	}
	\centering
	\caption{The impact of task data sizes on system performance. (a) Time-averaged UD cost. (b) Average task completion latency. (c) Cumulative UD energy consumption. (d) Time-averaged SUAV energy consumption.}
	\label{fig_tasksize}
	\vspace{-1em}
\end{figure*}

\begin{figure*}[!hbt] 
	\centering
	\setlength{\abovecaptionskip}{2pt}%
	\setlength{\belowcaptionskip}{2pt}%
	\subfigure[]
	{
		\begin{minipage}[t]{0.23\linewidth}
			\centering
			\includegraphics[scale=0.31]{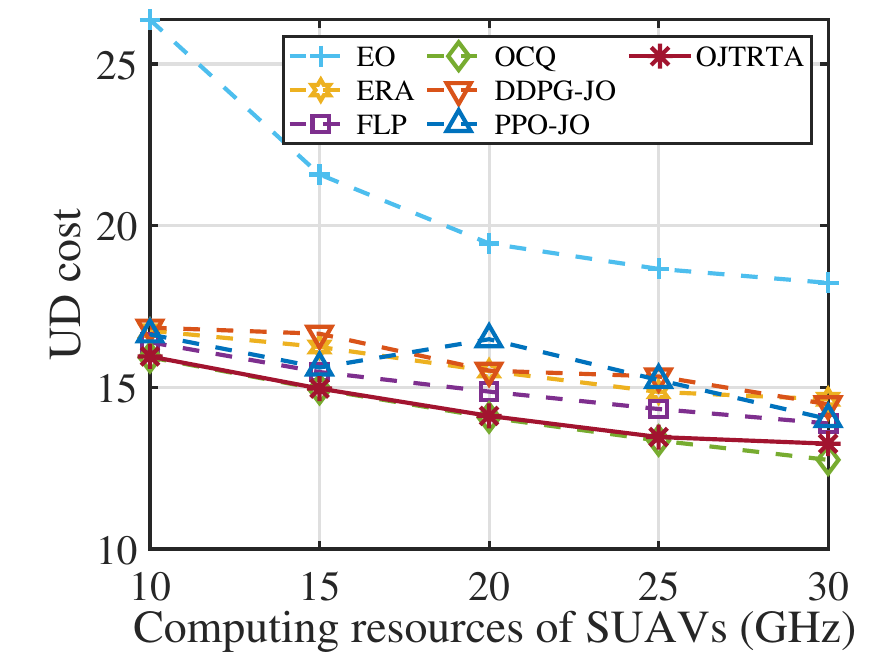}
		\end{minipage}
	}
	\subfigure[]
	{
		\begin{minipage}[t]{0.23\linewidth}
			\centering
			\includegraphics[scale=0.31]{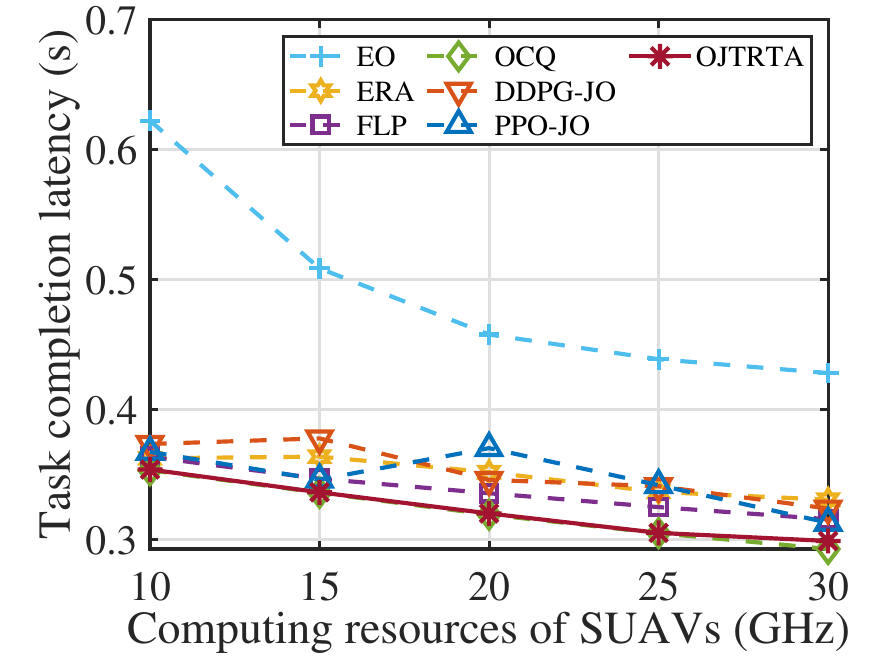}	
		\end{minipage}
	}
	\subfigure[]
	{
		\begin{minipage}[t]{0.23\linewidth}
			\centering
			\includegraphics[scale=0.31]{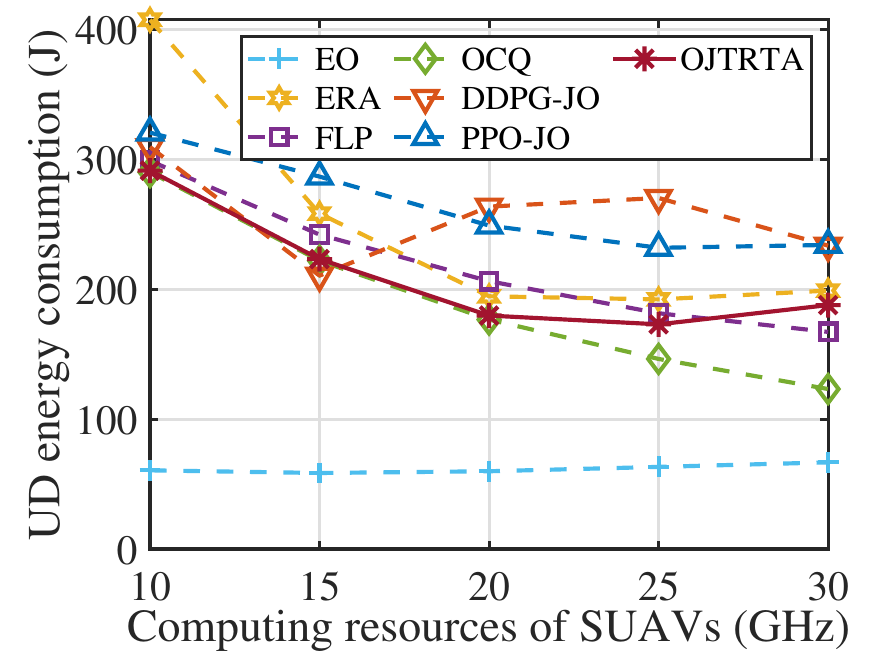}	
		\end{minipage}
	}
	\subfigure[]
	{
		\begin{minipage}[t]{0.23\linewidth}
			\centering
			\includegraphics[scale=0.31]{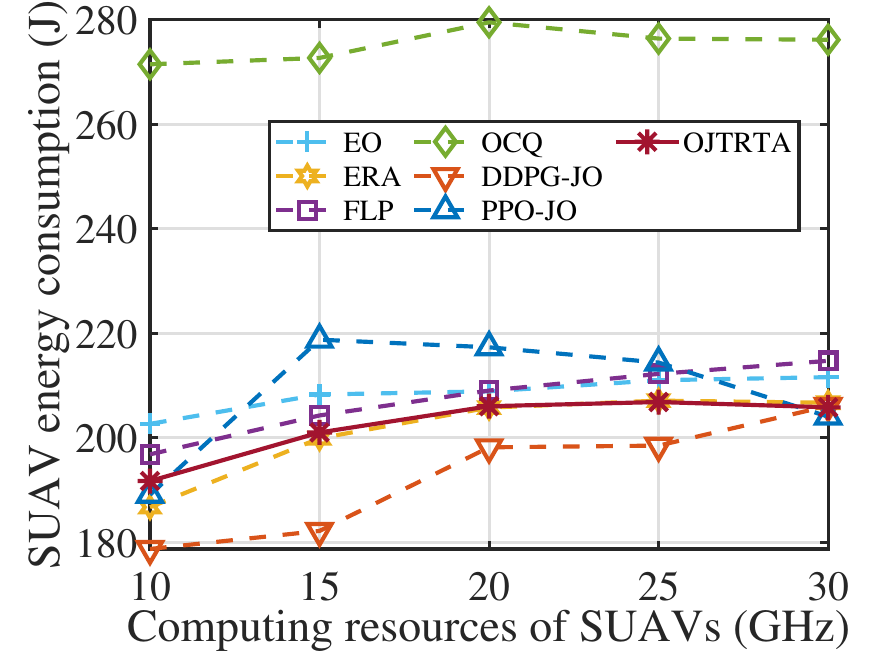}	
		\end{minipage}
	}
	\centering
	\caption{The impact of SUAV computing resources on system performance. (a) Time-averaged UD cost. (b) Average task completion latency. (c) Cumulative UD energy consumption. (d) Time-averaged SUAV energy consumption.}
	\label{fig_MEC}
	\vspace{-1em}
\end{figure*}

\subsubsection{Impact of UD Numbers}
\label{subsubsec:Impact of Parameters}

\par Figs. \ref{fig_UDnum}(a), \ref{fig_UDnum}(b), \ref{fig_UDnum}(c), and \ref{fig_UDnum}(d) illustrate the impact of varying numbers of UDs on the time-averaged UD cost, average task completion delay, cumulative UD energy consumption, and time-averaged SUAV energy consumption, respectively. It can be observed that the time-averaged UD cost, cumulative UD energy consumption and time-averaged SUAV energy consumption of all approaches show an upward trend as the number of UDs increases, since more computation tasks need to be processed. Furthermore, as the number of UDs increases, EO exhibits the poorest performance in terms of average task completion latency, while demonstrating optimal performance in cumulative UD energy consumption. This is primarily due to EO offloading all tasks to edge servers, resulting in lower UD energy consumption but higher server loads. Additionally, compared to the proposed approach, FLP shows inferior performance in terms of average task completion latency and cumulative UD energy consumption. This highlights the importance of optimizing the trajectory of UAVs. Besides, when the number of UDs reaches 100, OCQ and PPO-JO slightly outperform the proposed approach in terms of time-averaged UD cost. This can be attributed to the fact that OCQ and PPO-JO do not consider SUAV energy consumption constraints. As shown in Fig. \ref{fig_UDnum}(d), OCQ and PPO-JO exhibit higher time-averaged SUAV energy consumption.

\par Lastly, the proposed OJTRTA approach demonstrates superior performance in terms of time-averaged UD cost and average task completion latency compared to other approaches, while also exhibiting favorable performance in terms of cumulative UD energy consumption and time-averaged SUAV energy consumption. Specifically, compared with EO, ERA, FLP, DDPG-JO, and PPO-JO, the proposed approach can respectively reduce the time-averaged UD cost by 55.6\%, 3.2\%, 2.1\%, 9.4\%, 14.6\% and can respectively reduce the average task completion latency 58.7\%, 1.2\%, 3.1\%, 4.5\%, 15.4\% in the relative dense scenario ($M = 100$). In conclusion, the set of simulation results indicates that the proposed OJTRTA has better scalability with an increasing number of UDs.

\subsubsection{Impact of Task Data Size}

\par Figs. \ref{fig_tasksize}(a), \ref{fig_tasksize}(b), \ref{fig_tasksize}(c), and \ref{fig_tasksize}(d) show the impact of task data sizes on the time-averaged UD cost, average task completion latency, cumulative UD energy consumption, and time-averaged SUAV energy consumption. It can be seen that with the increase in task data sizes, there is an upward trend in terms of the time-averaged UD cost, average task completion latency, cumulative UD energy consumption, and time-averaged SUAV energy consumption. This is expected as the larger task data size leads to higher overheads on computing, communication, and energy consumption for UDs and UAVs. Moreover, EO exhibits a significant growth trend in terms of the time-averaged UD cost and average task completion latency compared to other approaches. This is due to the increased competition for the limited computational and communication resources of aerial servers caused by the entire offloading of EO as the task data size increases.

\par Finally, compared with EO, ERA, FLP, DDPG-JO and PPO-JO, OJTRTA achieves performance improvements of approximately 28.4\%, 9.7\%, 3.8\%, 9.8\%, and 7.2\% in terms of time-averaged UD cost, as well as 30.4\%, 9.3\%, 3.2\%, 8.5\%, and 3.9\% in terms of average task completion latency when the task data size reaches 1 Mb. The set of simulation results indicates that the proposed OJTRTA is able to adapt to the heavy-loaded scenarios with overall superior performances. 

\subsubsection{Impact of Computing Resources of SUAVs}

\par Figs. \ref{fig_MEC}(a), \ref{fig_MEC}(b), \ref{fig_MEC}(c), and \ref{fig_MEC}(d) depict the impact of different SUAV computing resources on the time-averaged UD cost, average task completion delay, cumulative UD energy consumption, and time-averaged SUAV energy consumption. It can be observed that with the increase of SUAV computing resources, all approaches show a decreasing trend in terms of the time-averaged UD cost and average task completion latency, and EO and OJTRTA demonstrate gradually decreasing performance improvements. The reasons can be explained as follows. The increase of SUAV computational resources provides more computing resource allocation for task execution, reducing the task execution latency. However, as SUAV computing resources further increase, communication resources and the energy constraints of SUAV become bottlenecks that limit the improvement of system performance. Furthermore, EO maintains nearly constant performance in terms of the cumulative UD energy consumption regardless of the variations in the computing resources of SUAVs. This is mainly because the entire offloading of EO primarily incurs transmission energy consumption for UDs, which is independent of the computing resources of SUAVs. 

\par Finally, OJTRTA outperforms EO, ERA, FLP, DDPG-JO and PPO-JO in terms of the time-averaged UD cost and average task completion delay, which illustrates the proposed approach enables sustainable utilization of computing resources and prevents resource over-utilization.

\subsubsection{UAV Trajectory Control}
\label{subsubsec:UAV Trajectory Planning}

\par Fig.~\ref{fig.trajectory} shows the mobility of UDs and the flight trajectories of SUAVs controlled by the proposed OJTRTA approach. It can be observed that the flight trajectories of SUAVs are in accordance with intuition. Specifically, SUAVs tend to fly towards areas with dense UDs to serve a greater number of UDs and provide improved communication conditions, thereby enhancing the QoE of the serviced UDs. After reaching the dense UD area, SUAVs adjust their positions to accommodate the dynamic computing demands of the UDs. Besides, it is worth noting that the proposed approach incorporates collision avoidance, effectively ensuring the safe flight of SUAVs in dense UD areas. In conclusion, the simulation results demonstrate that the trajectory control of the proposed OJTRTA approach can effectively regulate the trajectories of SUAVs according to the dynamic computing requirements to improve the QoE of UDs.

\begin{figure}[!hbt]
	\setlength{\abovecaptionskip}{0pt}%
	\setlength{\belowcaptionskip}{0pt}%
	\centering
	\includegraphics[width =3.5in]{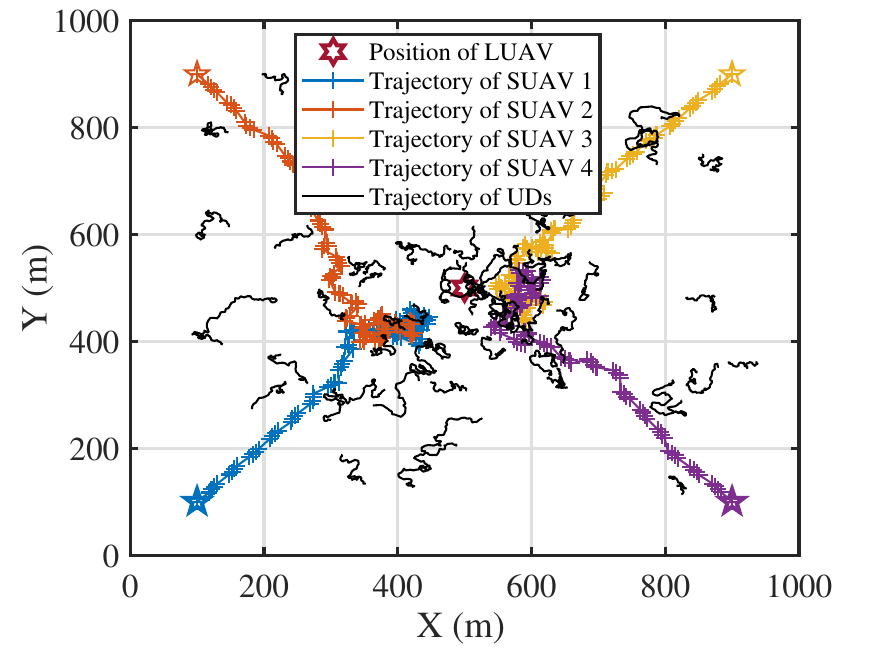}
	\caption{Trajectories of UDs and SUAVs.}
	\label{fig.trajectory}
\end{figure}

\section{Conclusion}
\label{sec:Conclusion}

\par In this work, we investigated the task offloading, resource allocation, and UAV trajectory control within a multi-UAV-assisted MEC system. A JTRTOP was formulated to maximize the QoE of all UDs while satisfying the energy and resource constraints of UAVs. Since the JTRTOP is future-dependent and NP-hard, we proposed the OJTRTA to solve the problem. Specifically, the future-dependent JTRTOP was first transformed into the PROP by using Lyapunov optimization methods. Furthermore, a two-stage optimization method was proposed to solve the PROP. Simulation results show that OJTRTA outperforms the comparative approaches in terms of time-averaged UD cost while meeting the SUAV energy consumption constraint. Furthermore, JTRTOP exhibits superior adaptability in heavy-loaded scenarios and demonstrates good scalability as the number of MDs increases. In the future, our efforts will focus on extending current research to space-air-ground integrated MEC networks. Specifically, the proposed multi-UAV-assisted MEC system heavily relies on the limited computational resources and energy supply of UAVs, which remains a critical bottleneck restricting the improvement of system performance. With the rapid deployment of low Earth orbit satellites, space-air-ground integrated networks demonstrate significant potential to substantially expand available network resources and the service coverage of MEC by aggregating the capabilities of heterogeneous network.

\bibliographystyle{IEEEtran}
\bibliography{ref}

\newpage

\begin{IEEEbiography}[{\includegraphics[width=1in,height=1.25in,clip,keepaspectratio]{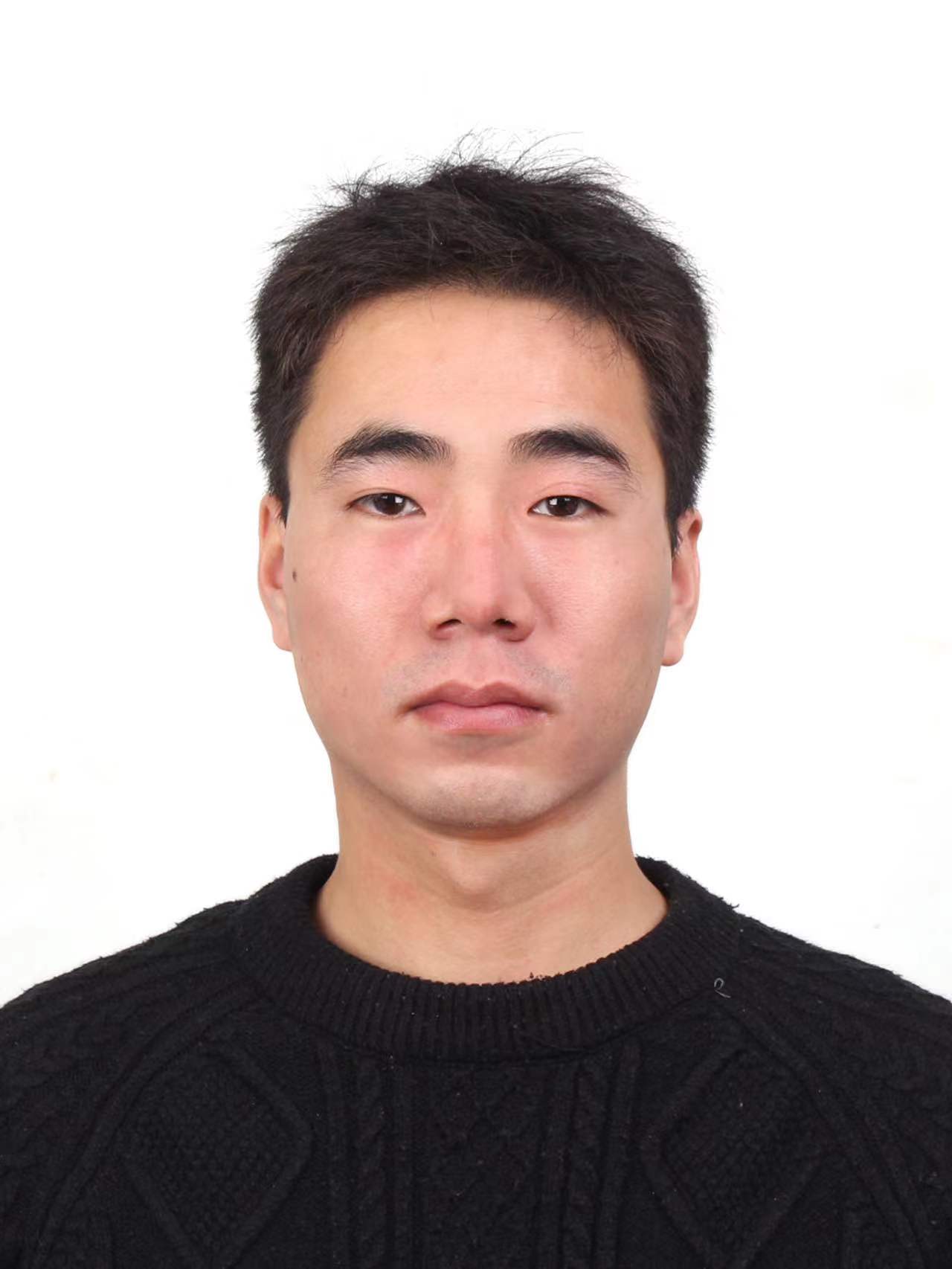}}]{Long He} received a BS degree in Computer Science and Technology from Chengdu University of Technology, Sichuan, China, in 2019. He is currently working toward the PhD degree in Computer Science and Technology at Jilin University, Changchun, China. His research interests include vehicular networks and edge computing.
\end{IEEEbiography}

\begin{IEEEbiography}[{\includegraphics[width=1in,height=1.25in,clip,keepaspectratio]{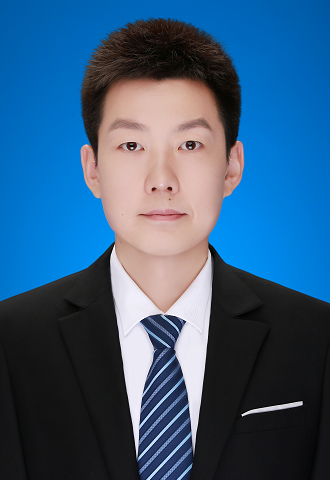}}]{Geng Sun} (Senior Member, IEEE) received the B.S. degree in communication engineering from Dalian Polytechnic University, and the Ph.D. degree in computer science and technology from Jilin University, in 2011 and 2018, respectively. He was a Visiting Researcher with the School of Electrical and Computer Engineering, Georgia Institute of Technology, USA. He is an Associate Professor in College of Computer Science and Technology at Jilin University, and his research interests include wireless networks, UAV communications, collaborative beamforming and optimizations.
\end{IEEEbiography}

\begin{IEEEbiography}[{\includegraphics[width=1in,height=1.25in,clip,keepaspectratio]{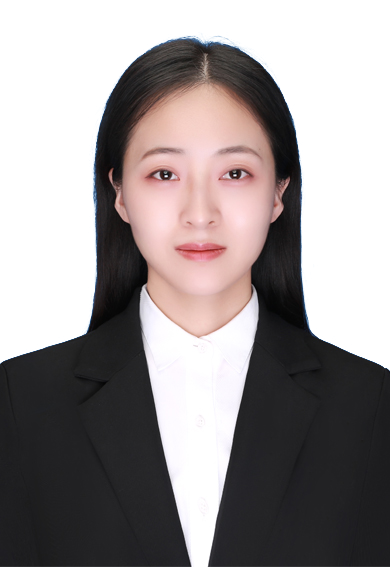}}]{Zemin Sun} received a BS degree in Software Engineering, an MS degree and a Ph.D degree in Computer Science and Technology from Jilin University, Changchun, China, in 2015, 2018, and 2022, respectively. Her research interests include vehicular networks, edge computing, and game theory. 
\end{IEEEbiography}

\begin{IEEEbiography}[{\includegraphics[width=1in,height=1.25in,clip,keepaspectratio]{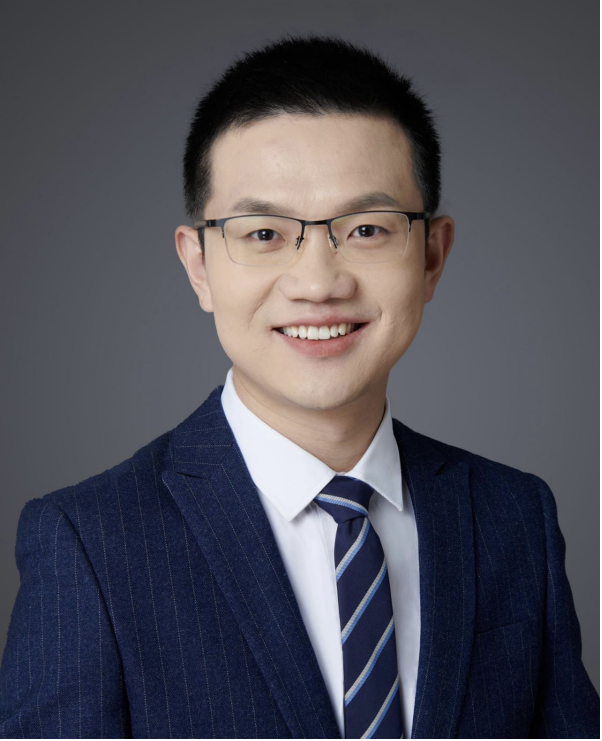}}]{Qingqing Wu} (Senior Member, IEEE) received the B.Eng. and the Ph.D. degrees in Electronic Engineering from South China University of Technology and Shanghai Jiao Tong University (SJTU) in 2012 and 2016, respectively. From 2016 to 2020, he was a Research Fellow in the Department of Electrical and Computer Engineering at National University of Singapore. He is currently an Associate Professor with Shanghai Jiao Tong University. His current research interest includes intelligent reflecting surface (IRS), unmanned aerial vehicle (UAV) communications, and MIMO transceiver design. He has coauthored more than 100 IEEE journal papers with 26 ESI highly cited papers and 8 ESI hot papers, which have received more than 18,000 Google citations. He was listed as the Clarivate ESI Highly Cited Researcher in 2022 and 2021, the Most Influential Scholar Award in AI-2000 by Aminer in 2021 and World’s Top 2\% Scientist by Stanford University in 2020 and 2021.
\end{IEEEbiography}

\begin{IEEEbiography}[{\includegraphics[width=1in,height=1.25in,clip,keepaspectratio]{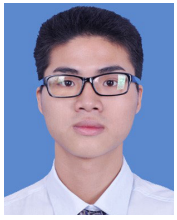}}]{Jiawen Kang} (Senior Member, IEEE) received the Ph.D. degree from Guangdong University of Technology, China, in 2018. He was a Post-Doctoral Researcher with Nanyang Technological University, Singapore, from 2018 to 2021. He is currently a Professor with Guangdong University of Technology. His main research interests include blockchain, security, and privacy protection in wireless communications and networking.
\end{IEEEbiography}

\begin{IEEEbiography}[{\includegraphics[width=1in,height=1.25in,clip,keepaspectratio]{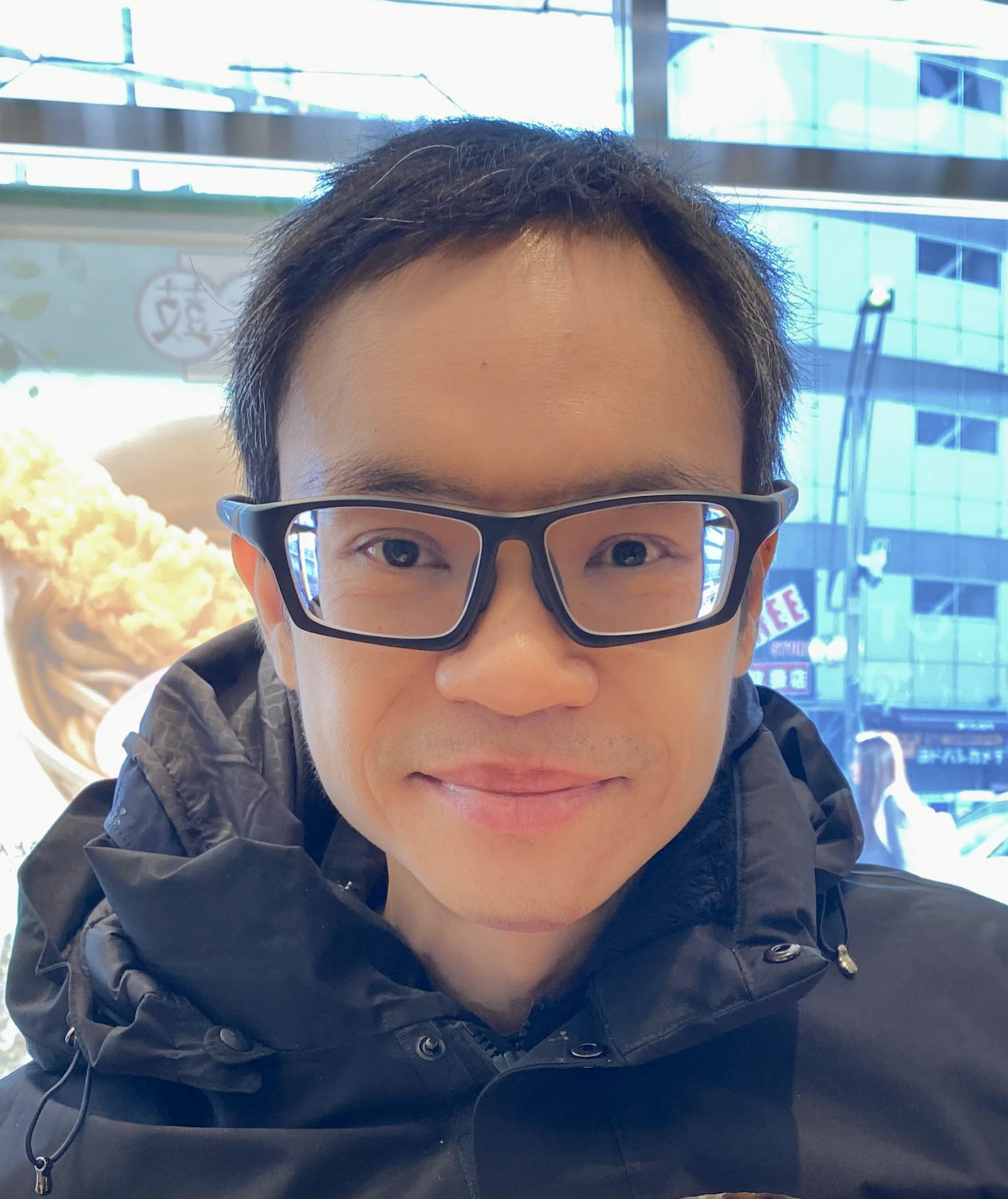}}]{Dusit Niyato} (Fellow, IEEE) received the B.Eng. degree from the King Mongkuts Institute of Technology Ladkrabang (KMITL), Thailand, in 1999, and the Ph.D. degree in electrical and computer engineering from the University of Manitoba, Canada, in 2008. He is currently a Professor with the School of Computer Science and Engineering, Nanyang Technological University, Singapore. His research interests include the Internet of Things (IoT), machine learning, and incentive mechanism design.
\end{IEEEbiography}

\begin{IEEEbiography}[{\includegraphics[width=1in,height=1.25in,clip,keepaspectratio]{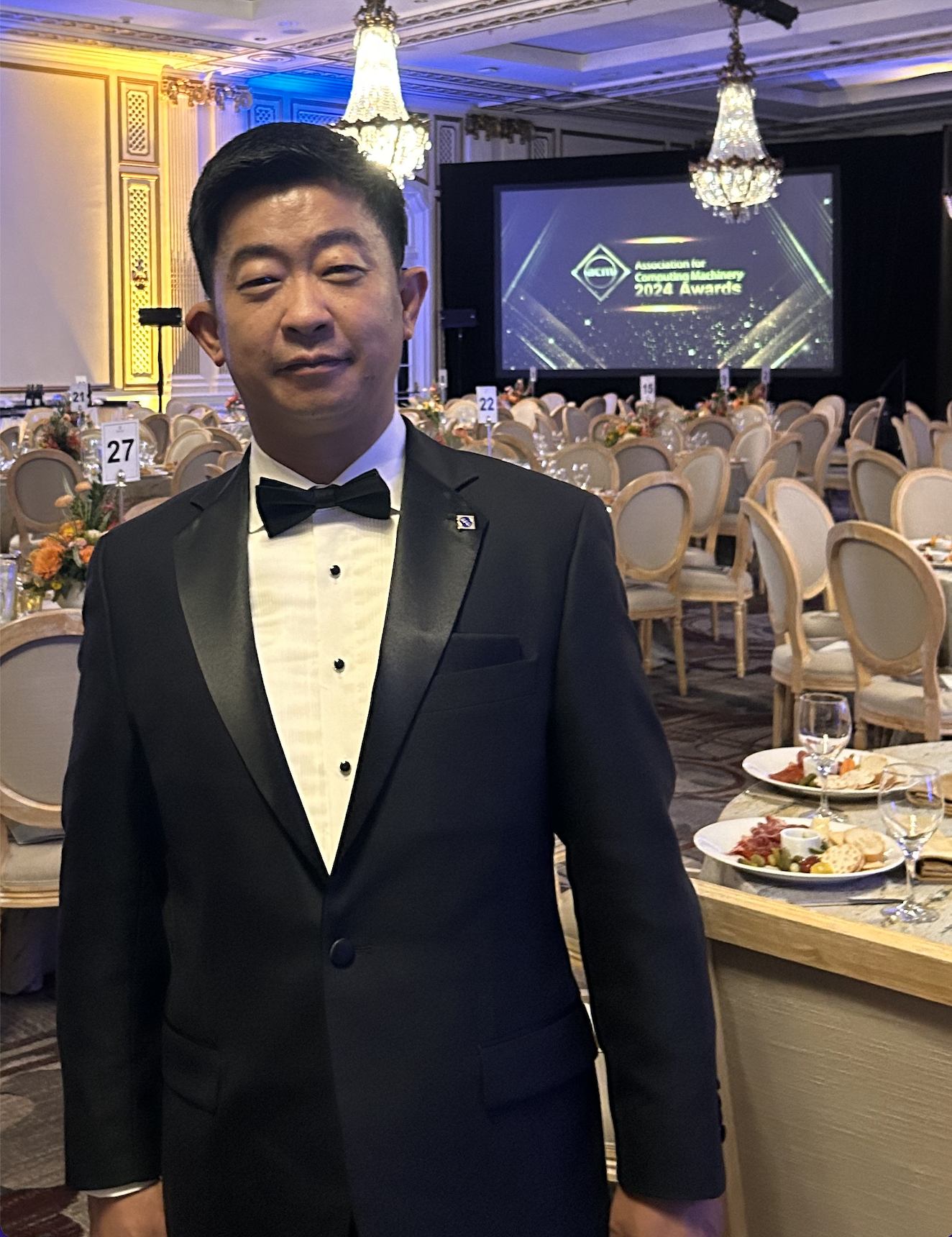}}]{Zhu Han} (S’01–M’04-SM’09-F’14) received the B.S. degree in electronic engineering from Tsinghua University, in 1997, and the M.S. and Ph.D. degrees in electrical and computer engineering from the University of Maryland, College Park, in 1999 and 2003, respectively. From 2000 to 2002, he was an R\&D Engineer of JDSU, Germantown, Maryland. From 2003 to 2006, he was a Research Associate at the University of Maryland. From 2006 to 2008, he was an assistant professor at Boise State University, Idaho. Currently, he is a John and Rebecca Moores Professor in the Electrical and Computer Engineering Department as well as in the Computer Science Department at the University of Houston, Texas. Dr. Han’s main research targets on the novel game-theory related concepts critical to enabling efficient and distributive use of wireless networks with limited resources. His other research interests include wireless resource allocation and management, wireless communications and networking, quantum computing, data science, smart grid, carbon neutralization, security and privacy.  Dr. Han received an NSF Career Award in 2010, the Fred W. Ellersick Prize of the IEEE Communication Society in 2011, the EURASIP Best Paper Award for the Journal on Advances in Signal Processing in 2015, IEEE Leonard G. Abraham Prize in the field of Communications Systems (best paper award in IEEE JSAC) in 2016, IEEE Vehicular Technology Society 2022 Best Land Transportation Paper Award, and several best paper awards in IEEE conferences. Dr. Han was an IEEE Communications Society Distinguished Lecturer from 2015 to 2018 and ACM Distinguished Speaker from 2022 to 2025, AAAS fellow since 2019, and ACM Fellow since 2024. Dr. Han is a 1\% highly cited researcher since 2017 according to Web of Science. Dr. Han is also the winner of the 2021 IEEE Kiyo Tomiyasu Award (an IEEE Field Award), for outstanding early to mid-career contributions to technologies holding the promise of innovative applications, with the following citation: ``for contributions to game theory and distributed management of autonomous communication networks."
\end{IEEEbiography}

\begin{IEEEbiography}[{\includegraphics[width=1in,height=1.25in,clip,keepaspectratio]{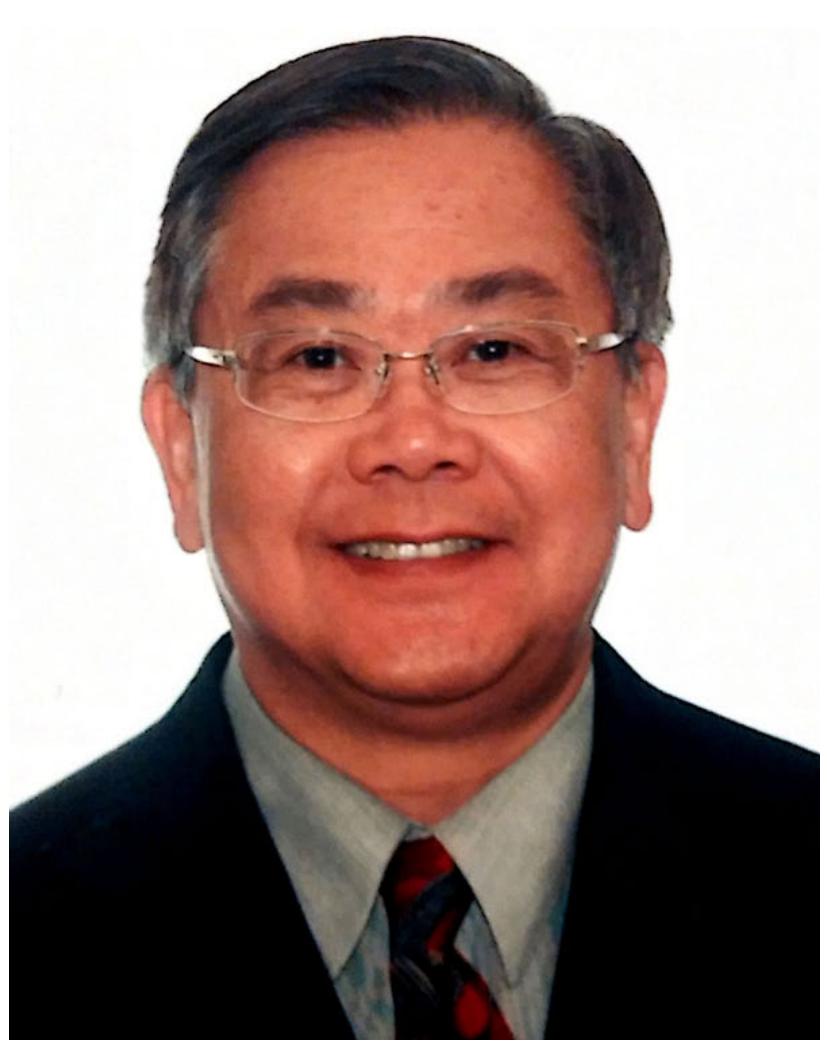}}]{Victor C. M. Leung} (Life Fellow, IEEE) is a Distinguished Professor of computer science and software engineering with Shenzhen University,
	China. He is also an Emeritus Professor of electrial and computer engineering and the Director of the Laboratory for Wireless Networks and Mobile Systems at the University of British Columbia (UBC). His research is in the broad areas of wireless networks and mobile systems. He has co-authored more than 1300 journal/conference papers and book chapters. Dr. Leung is serving on the editorial boards of IEEE Transactions on Green Communications and Networking, IEEE Transactions on Cloud Computing, IEEE Access, and several other journals. He received the IEEE Vancouver Section Centennial Award, 2011 UBC Killam Research Prize, 2017 Canadian Award for Telecommunications Research, and 2018 IEEE TCGCC Distinguished Technical Achievement Recognition Award. He co-authored papers that won the 2017 IEEE ComSoc Fred W. Ellersick Prize, 2017 IEEE Systems Journal Best Paper Award, 2018 IEEE CSIM Best Journal Paper Award, and 2019 IEEE TCGCC Best Journal Paper Award. He is a Life Fellow of IEEE, and a Fellow of the Royal Society of Canada, Canadian Academy of Engineering, and Engineering Institute of Canada. He is named in the current Clarivate Analytics list of Highly Cited Researchers.
\end{IEEEbiography}
\end{document}